\definecolor{darkred}{rgb}{0.8,0.1,0.1}
\theoremstyle{plain}
\newtheorem{theo}{Theorem}[section]
\newtheorem{propo}[theo]{Proposition}
\theoremstyle{definition}
\newtheorem{defi}[theo]{Definition}
\newtheorem{notation}[theo]{Notation}
\newenvironment{ex}
{\pushQED{\qed}\exx}
{\popQED\endexx}
\newenvironment{rem}
{\pushQED{\qed}\remm}
{\popQED\endremm}
\newenvironment{constr}
{\pushQED{\qed}\constrr}
{\popQED\endconstrr}
\numberwithin{equation}{section}
\def\nn{\nonumber}
\def\bbR{\mathbb{R}}
\def\bbC{\mathbb{C}}
\def\bbZ{\mathbb{Z}}
\def\id{\mathrm{id}}
\def\deg{\mathrm{deg}}
\def\dd{\mathrm{d}}
\def\cc{\mathrm{c}}
\def\dim{\mathrm{dim}}
\def\1{I}
\def\oone{\mathbbm{1}}
\def\A{\mathcal{A}}
\def\E{\mathcal{E}}
\def\F{\mathcal{F}}
\def\I{\mathcal{I}}
\def\L{\mathcal{L}}
\def\M{\mathcal{M}}
\def\O{\mathcal{O}}
\newcommand\ovr[1]{\overline{#1}}
\newcommand\wedgebracket[2]{\left[ #1 {}\stackrel{\scalebox{0.6}{$\wedge$}}{,}{} #2\right]}
\newcommand\wedgepair[2]{\left\langle #1 {}\stackrel{\scalebox{0.6}{$\wedge$}}{,}{} #2\right\rangle}
\newcommand{\pair}[2]{\langle\!\langle #1 , #2 \rangle\!\rangle}
\def\g{\mathfrak{g}}
\def\CP{\mathbb{C}P^1}
\def\delbar{\overline{\partial}}
\def\sk{\vspace{2mm}}
\let\@fnsymbol\@alph
\title{%
The homological algebra of 2d integrable field theories
}
\author{%
Marco Benini$^{1,2,a}$, Alexander Schenkel$^{3,b}$\ and\ Beno\^{\i}t Vicedo$^{4,c}$\vspace{4mm}\\
{\small ${}^1$ Dipartimento di Matematica, Dipartimento di Eccellenza 2023-27, Universit\`a di Genova,}\\
{\small Via Dodecaneso 35, 16146 Genova, Italy.}\vspace{2mm}\\
{\small ${}^2$ INFN, Sezione di Genova,}\\
{\small Via Dodecaneso 33, 16146 Genova, Italy.}\vspace{2mm}\\
{\small ${}^3$~Dipartimento di Matematica, Universit{\`a} di Trento,}\\
{\small Via Sommarive 14, 38123 Povo (Trento), Italy.}\vspace{2mm}\\
{\small ${}^4$ Department of Mathematics, University of York,}\\
{\small Heslington, York YO10 5GH, United Kingdom.}\vspace{4mm}\\
{\small 
Email: ${}^a$~\href{mailto:marco.benini@unige.it}{\texttt{marco.benini@unige.it}}, 
${}^b$~\href{mailto:alexander.schenkel@unitn.it}{\texttt{alexander.schenkel@unitn.it}},
${}^c$~\href{mailto:benoit.vicedo@gmail.com}{\texttt{benoit.vicedo@gmail.com}}
\vspace{2mm}
}
}
\date{January 2026}
\begin{document}

\maketitle

\begin{abstract}
\noindent This article provides a detailed and rigorous study of $4d$ semi-holomorphic Chern-Simons theories and their associated $2d$ integrable field theories from the homological perspective of $L_\infty$-algebras. Through the use of homotopy transfer techniques, it is shown precisely how both the integrable field theory and its corresponding Lax connection emerge from the $4d$ theory, which results in a novel perspective on Lax connections in terms of $L_\infty$-morphisms. 
\end{abstract}
\vspace{-1mm}

\paragraph*{Keywords:} integrable field theories, semi-holomorphic Chern-Simons theory, $L_\infty$-algebras
\vspace{-2mm}

\paragraph*{MSC 2020:} 70Sxx, 81Txx, 55Uxx
\vspace{-2mm}

\renewcommand{\baselinestretch}{0.8}\normalsize
\tableofcontents
\renewcommand{\baselinestretch}{1.0}\normalsize


\section{\label{sec:intro}Introduction and summary}
The systematic construction of integrable field theories 
in two spacetime dimensions has witnessed a major breakthrough
since the seminal paper \cite{CY3} of Costello and Yamazaki.
One of the key insights of this work is that a large class
of integrable field theories on a $2d$ spacetime $\Sigma$ emerges
very naturally from a semi-holomorphic Chern-Simons theory
$S(\A)=\int_X \omega \wedge \mathsf{CS}(\A)$ 
on the $4$-dimensional product manifold $X= \Sigma \times C$,
where $C$ is a Riemann surface encoding the spectral parameters
and $\omega$ is a choice of meromorphic $1$-form on $C$. 
A particularly pleasant feature of this framework is that
Lax connections, which traditionally were rather mysterious objects 
and often constructed only by sophisticated guesswork, are now
represented more directly through the Chern-Simons gauge field $\A$.
The vast flexibility of this approach arises from the fact that
different input data for the $4d$ theory, given concretely by the choice 
of 1.)~a structure group $G$, 2.)~a Riemann surface $C$, 
3.)~a meromorphic $1$-form $\omega$ on $C$ and 4.)~boundary conditions and singularities
for the gauge fields $\A$ at the poles and zeros of $\omega$, yield (a priori) 
different $2d$ integrable field theories on $\Sigma$, see also \cite{DLMV,BSV}
for more details.
\sk

The main goal of our present paper is to provide a conceptual explanation 
and mathematically rigorous realization of the slogan that `$4d$ semi-holomorphic
Chern-Simons theory produces $2d$ integrable field theories together with their corresponding 
Lax connections'. Our results are most naturally stated and proven in a homological
context where field theories are described by (cyclic) $L_\infty$-algebras,
which in physics is known under the name Batalin-Vilkovisky formalism, 
see e.g.\ \cite{CG1,CG2} for a modern perspective and also \cite{JRSW} for a review.
In addition to providing us with powerful tools to formulate concepts and
prove results about classical field theories, such homological framework
will be highly beneficial for any future analysis of $4d$ semi-holomorphic 
Chern-Simons theory at the quantum level.
\sk

To highlight the main innovations of our paper, let us now
present our key constructions and results from a non-technical bird's-eye view 
in the simplest case where $C=\CP$ is the Riemann sphere. (The generalization to 
higher genus Riemann surfaces is possible but slightly more subtle, see Section \ref{sec:highergenus}.)
Our starting point is the auxiliary $L_\infty$-algebra
$\big(\E(X),\ell\big)$ from Definition \ref{def:auxiliaryLinfty}
which models the space of fields, gauge symmetries and equations
of motion of $4d$ semi-holomorphic Chern-Simons theory without
singularities and boundary conditions. This $L_\infty$-algebra
is built from the de Rham complex $\Omega^\bullet(\Sigma)$
on the $2d$ spacetime $\Sigma$ and the Dolbeault complex $\Omega^{0,\bullet}(C)$
on the Riemann surface $C$, giving it the desired topological-holomorphic
features of $4d$ semi-holomorphic Chern-Simons theory, see Remark \ref{rem:auxiliaryLinfty}.
Singularities and boundary conditions of the fields at the zeros and poles of $\omega$
are then modeled mathematically by suitable choices of \textit{divisors} on $C$, which we use to twist
the component-wise Dolbeault complexes entering $\big(\E(X),\ell\big)$. This yields
the $L_\infty$-algebra $\big(\F(X),\ell\big)$ from Definition \ref{def:boundaryLinfty}
which describes $4d$ semi-holomorphic Chern-Simons theory with the chosen singularities
and boundary conditions. Note that this $L_\infty$-algebra provides in particular a rigorous definition of the spaces
of fields and gauge transformations of $4d$ semi-holomorphic Chern-Simons theory 
in the presence of singularities and boundary conditions.
We also show in Subsection \ref{subsec:cyclic} that the $L_\infty$-algebra $\big(\F(X),\ell\big)$
carries a cyclic structure, which provides in particular a precise definition of a gauge invariant
action functional $S(\A)=\int_X \omega \wedge \mathsf{CS}(\A)$ in the presence of singularities and boundary conditions.
\sk

According to the philosophy of Costello and Yamazaki \cite{CY3}, the 
$L_\infty$-algebra $\big(\F(X),\ell\big)$ of $4d$ semi-holomorphic 
Chern-Simons theory on $X$ in the presence of singularities and boundary conditions 
should model a $2d$ integrable field theory on $\Sigma$. The passage
from $X=\Sigma\times C$ to $\Sigma$ is usually described by some process
of `integrating out the fields on the Riemann surface $C$'. In our work
we provide a mathematical formalization of this process in terms of 
homotopy transfer of $L_\infty$-algebras.
The punchline is that, by computing the component-wise divisor-twisted $\delbar$-cohomologies on $C$,
one obtains a weakly equivalent model
\begin{flalign}
\xymatrix@C=4em{
\big(\F(\Sigma),\ell^\prime\big) \ar@{~>}[r]^-{\sim}~&~ \big(\F(X),\ell\big)
}
\end{flalign}
given by an $L_\infty$-algebra $\big(\F(\Sigma),\ell^\prime\big)$
which describes a $\sigma$-model-type field theory on spacetime $\Sigma$, see Proposition
\ref{prop:transferboundary} and Remark \ref{rem:transferboundary}.
According to our knowledge, this is the first complete proof that
$4d$ semi-holomorphic Chern-Simons theory on $X$ with singularities
and boundary conditions is weakly equivalent to a field theory defined on $\Sigma$.
We also show in Subsection \ref{subsec:cyclic} that the cyclic structure on $\big(\F(X),\ell\big)$
transfers to the $L_\infty$-algebra $\big(\F(\Sigma),\ell^\prime\big)$,
which provides in particular an action functional for this $\sigma$-model-type field theory on $\Sigma$.
\sk

Our framework leads further to a very clear and precise identification of the
structure that renders the $\sigma$-model-type field theory $\big(\F(\Sigma),\ell^\prime\big)$ 
integrable. The key observation is that another suitable choice of divisors 
on $C$ yields the $L_\infty$-algebra
$\big(\L(X),\ell\big)$ from Definition \ref{def:singularLinfty} which
describes $4d$ semi-holomorphic Chern-Simons theory with only 
singularities but no boundary conditions. Computing again 
the component-wise divisor-twisted $\delbar$-cohomologies on $C$
and using homotopy transfer, one obtains a weakly equivalent model
\begin{flalign}
\xymatrix@C=4em{
\big(\L(\Sigma),\ell^\prime\big) \ar@{~>}[r]^-{\sim}~&~ \big(\L(X),\ell\big)
}
\end{flalign}
given by an $L_\infty$-algebra $\big(\L(\Sigma),\ell^\prime\big)$
which describes Lax connections on $\Sigma$, see Proposition
\ref{prop:transfersingular} and Remark \ref{rem:transfersingular}.
The main result summarizing our constructions
in this paper is then Theorem \ref{theo:lax}, which shows that
there exists a canonical $L_\infty$-morphism
\begin{flalign}\label{eqn:LaxmorphismIntro}
\xymatrix@C=4em{
\big(\F(\Sigma),\ell^\prime\big) \ar@{~>}[r]^-{}~&~\big(\L(\Sigma),\ell^\prime\big) 
}
\end{flalign}
from the $L_\infty$-algebra $\big(\F(\Sigma),\ell^\prime\big)$
describing a $\sigma$-model-type field theory on spacetime $\Sigma$
to the $L_\infty$-algebra $\big(\L(\Sigma),\ell^\prime\big)$
describing Lax connections. In particular,
this $L_\infty$-morphism assigns to every on-shell field on $\Sigma$
its corresponding flat and meromorphic Lax connection, hence it is the additional 
datum of this $L_\infty$-morphism which renders the field theory on $\Sigma$ integrable.
\sk

As a final remark we would like to mention that our homological framework
presented in this paper is clearly not limited to $4d$ semi-holomorphic Chern-Simons
theory, but it can easily be generalized to topological-holomorphic field theories
in any dimension. This provides powerful and systematic tools to explore
the new territory of higher-dimensional semi-holomorphic Chern-Simons theories \cite{SV,CL},
leading to structural insights into their associated higher-dimensional
integrable field theories. This will be the content of the follow-up article 
\cite{BCSV}.
\sk

Let us now briefly outline the contents of this paper.
In Section \ref{sec:prelim} we recall some basic
concepts and results from the theory of Riemann surfaces
and from homological algebra which will be used in the main text.
In Section \ref{sec:genuszero} we present in detail 
our constructions outlined above in the case where 
$C=\CP$ is the Riemann sphere. In Subsection \ref{subsec:setup}
we explain our basic setup and introduce in Definition \ref{def:auxiliaryLinfty}
the auxiliary $L_\infty$-algebra $\big(\E(X),\ell\big)$ which describes 
$4d$ semi-holomorphic Chern-Simons theory without
singularities and boundary conditions. In Subsection \ref{subsec:singular}
we explain how one can introduce singularities for the fields by using
suitable divisors, leading to the $L_\infty$-algebra $\big(\L(X),\ell\big)$ 
from Definition \ref{def:singularLinfty} which describes 
$4d$ semi-holomorphic Chern-Simons theory with only singularities
but no boundary conditions. We then show in Proposition 
\ref{prop:transfersingular} that this $L_\infty$-algebra
is weakly equivalent to an $L_\infty$-algebra $\big(\L(\Sigma),\ell^\prime\big)$
describing Lax connections on $\Sigma$, see also Remark \ref{rem:transfersingular}.
In Subsection \ref{subsec:boundary} we introduce in Definition \ref{def:bdycondition} 
a class of boundary conditions which are local on the Riemann surface $C$
and define the $L_\infty$-algebra $\big(\F(X),\ell\big)$ 
from Definition \ref{def:boundaryLinfty} which describes 
$4d$ semi-holomorphic Chern-Simons theory with singularities and boundary conditions.
The key result is Proposition \ref{prop:transferboundary} which shows
that the latter is weakly equivalent to an
$L_\infty$-algebra $\big(\F(\Sigma),\ell^\prime\big)$ describing 
a $\sigma$-model-type field theory on spacetime $\Sigma$, see also Remark \ref{rem:transferboundary}.
In Subsection \ref{subsec:Lax} we explain how our formalism gives rise
to the canonical $L_\infty$-morphism \eqref{eqn:LaxmorphismIntro}
which assigns to fields on $\Sigma$ their corresponding Lax connections, 
hence providing the additional datum that renders the theory $\big(\F(\Sigma),\ell^\prime\big)$ integrable.
In Subsection \ref{subsec:cyclic} we introduce a cyclic structure
on $\big(\F(X),\ell\big)$ and show in Proposition 
\ref{prop:cyclictransfer} that it transfers to the 
weakly equivalent $L_\infty$-algebra $\big(\F(\Sigma),\ell^\prime\big)$.
These cyclic structures encode in particular action functionals for both the 
$4d$ theory on $X$ and the integrable field theory on $\Sigma$.
In Section \ref{sec:highergenus} we explain how to generalize our results
to higher genus Riemann surfaces.  For this we adopt 
the proposal in \cite{CY3,Derryberry} to replace
the trivial holomorphic principal $G$-bundle $C\times G\to C$
underlying $4d$ semi-holomorphic Chern-Simons theory by a suitable
non-trivial one in order to kill certain undesired cohomologies.
Our results in the higher genus case are broadly 
analogous to the ones for genus $0$ from Section \ref{sec:genuszero},
however the theory of Lax connections in higher genus is slightly more subtle, 
see in particular Remarks \ref{rem:transfersingularhigher} and  \ref{rem:holonomy}.
In Appendix \ref{app:details} we spell out certain technical details and
constructions which will be used in the main text.


\section{\label{sec:prelim}Preliminaries}
In this section we recall some well-known concepts and results
from the theory of Riemann surfaces and from homological algebra
which are needed to state and prove our results.
The most relevant aspects of Riemann surfaces in the context of our work
consist of divisors and their associated holomorphic line bundles,
which are covered in most textbooks on this subject. Our main reference is the comprehensive and 
well-written textbook of Forster \cite{Forster}. Concerning homological
algebra, we require some basic aspects of the theory of $L_\infty$-algebras,
the homotopy transfer theorem and homological perturbation theory.
The details can be found e.g.\ in \cite{HPT,LodayVallette,KraftSchnitzer},
see also \cite{JRSW} for a more physicists friendly presentation.

\subsection{\label{subsec:Riemannsurfaces}Riemann surfaces}
Let us fix any compact Riemann surface $C$. Following the notations in \cite{Forster}, we 
denote by $\O$ the sheaf of holomorphic functions on $C$, 
by $\O^\ast$ the sheaf of non-vanishing holomorphic functions on $C$,
and by $\M$ and $\M^{(1)}$ the sheaves of meromorphic functions 
and $1$-forms on $C$. The structure of the zeros and poles of a meromorphic 
function or $1$-form can be recorded very conveniently by introducing the
concept of a divisor.
\begin{defi}\label{def:divisor}
A \textit{divisor} $D$ on the compact Riemann surface $C$ is a function
\begin{flalign}
D\,:\, C~\longrightarrow~\bbZ
\end{flalign}
with finite support, i.e.\ $D(p)\neq 0$ only for finitely many points $p\in C$.
The \textit{degree} of the divisor is defined by $\mathrm{deg}(D):= \sum_{p\in C} D(p)\in\bbZ$.
For two divisors $D,D^\prime:C\to\bbZ$, we write $D\leq D^\prime$
if $D(p)\leq D^\prime(p)$ for all $p\in C$.
\end{defi}
\begin{ex}\label{ex:divisor}
Given any meromorphic function $f\in \M(U)$ on an open subset $U\subseteq C$
and any point $p\in U$, we define
\begin{flalign}
\mathrm{ord}_p(f)\,:=\,\begin{cases}
0 &,~~\text{if $f$ is holomorphic and non-zero at $p$}~,\\
k &,~~\text{if $f$ has a zero of order $k$ at $p$}~,\\
-k &,~~\text{if $f$ has a pole of order $k$ at $p$}~,\\
\infty &,~~\text{if $f$ vanishes on an open neighborhood of $p$}~.
\end{cases}
\end{flalign}
Then every non-identically vanishing meromorphic function $f\in\M(C)\setminus\{0\}$ 
defines a divisor
\begin{flalign}
(f)\,:\,C~\longrightarrow~\bbZ~,~~p~\longmapsto~\mathrm{ord}_p(f)
\end{flalign}
which encodes the locations and orders of its zeros and poles.
These definitions extend to meromorphic $1$-forms $\omega\in\M^{(1)}(U)$ by setting
$\mathrm{ord}_p(\omega) := \mathrm{ord}_p(f)$, where $\omega = f\,\dd z$ denotes the 
basis expansion in any choice of local holomorphic coordinate $z$ around $p\in U$.
The assignment of divisors to meromorphic functions and $1$-forms satisfies
the following algebraic identities
\begin{flalign}
(f\,g) \,=\, (f)+(g)~~,\quad 
(1/f)\,=\,-(f)~~,\quad
(f\,\omega)\,=\, (f)+(\omega)\quad,
\end{flalign}
for all $f,g\in\M(C)\setminus\{0\}$  and $\omega\in \M^{(1)}(C)\setminus\{0\}$.
\end{ex}

Associated with any divisor $D:C\to \bbZ$ is a subsheaf $\O_D\subseteq \M$ 
which describes meromorphic functions whose zeros and poles are conditioned 
by the divisor according to 
\begin{flalign}\label{eqn:O_D}
\O_D(U)\,:=\,\Big\{f\in \M(U)\,:\, (f) \geq -D\vert_U \Big\}\quad,
\end{flalign}
for all open subsets $U\subseteq C$, where $D\vert_U : U\to\bbZ$ denotes
the restriction of $D$.
Note that for $D(p)>0$ this limits by $D(p)$ the maximal allowed pole order
of $f$ at $p$, while for $D(p)<0$ it forces $f$ to have a zero at $p$
of order at least $-D(p)$.
\sk

For the purpose of our paper, it will be crucial to realize the sheaves $\O_D$ geometrically
in terms of sheaves of holomorphic sections of a holomorphic line bundle $L_D\to C$.
The construction of these holomorphic line bundles associated with divisors 
is explained in \cite[Section 29.11]{Forster}, but for convenience of 
the reader the main aspects are briefly recalled below.
\begin{constr}\label{constr:linebundles}
Let $D:C\to \bbZ$ be a divisor on the compact Riemann surface $C$. Choose any open cover
$\{U_i\subseteq C\}_{i\in\I}$ such that the restrictions $D\vert_{U_i} = (\psi_i)$ can be represented
by meromorphic functions $\psi_i\in\M(U_i)$ that do not vanish identically on any connected component of $U_i$, for all $i\in\I$. (Such 
open cover exists by \cite[Theorem 26.5]{Forster}.) Define the holomorphic line bundle
$L_D\to C$ in terms of the \v{C}ech data (i.e.\ transition functions)
\begin{flalign}
g_{ij} \,:=\, \frac{\psi_i\vert_{U_{ij}}}{\psi_j\vert_{U_{ij}}}\,\in\,\O^\ast(U_{ij})\quad,
\end{flalign}
which satisfies the required cocycle conditions $g_{ij}\vert_{U_{ijk}}\, g_{jk}\vert_{U_{ijk}}=g_{ik}\vert_{U_{ijk}}$
on all triple overlaps $U_{ijk}$. Up to isomorphism of holomorphic line bundles, this construction
does not depend on the choice of open cover $\{U_i\subseteq C\}_{i\in\I}$ and representatives
$\psi_i\in\M(U_i)$. The sheaf of smooth sections $\Gamma^\infty(L_D)$ of the holomorphic line bundle $L_D\to C$ is given by
\begin{flalign}
\Gamma^\infty(U,L_D)\,:=\,\bigg\{(s_i)_{i\in\I}^{}\in \prod_{i\in\I} C^\infty(U\cap U_i)\,:\, s_i\vert_{U\cap U_{ij}} = 
g_{ij}\vert_{U\cap U_{ij}}\, s_{j}\vert_{U\cap U_{ij}}~~\forall i,j\in\I\bigg\}\quad,
\end{flalign}
for all open subsets $U\subseteq C$. Its sheaf of holomorphic sections $\O(L_D)$ reads as
\begin{flalign}
\O(U,L_D)\,:=\,\Big\{(s_i)_{i\in\I}^{}\in\Gamma^\infty(U,L_D)\,:\,\delbar s_i = 0~~\forall i\in\I\Big\}\quad,
\end{flalign}
for all open subsets $U\subseteq C$.
This sheaf is related to the sheaf $\O_D$ of $D$-conditioned meromorphic functions from \eqref{eqn:O_D} 
through the sheaf isomorphism $\O_D\stackrel{\cong}{\longrightarrow} \O(L_D)$ defined by
\begin{flalign}\label{eqn:O_DGammaisos}
\O_D(U)~\stackrel{\cong}{\longrightarrow}~\O(U,L_D)~~,\quad f~\longmapsto~\big(\psi_i\vert_{U\cap U_i}
\,f\vert_{U\cap U_i}\big)_{i\in \I}^{}\quad,
\end{flalign}
for all open subsets $U\subseteq C$.
\sk

Given two divisors $D,D^\prime:C\to\bbZ$ such that $D\leq D^\prime$,
one can define a holomorphic line bundle
morphism $L_D\to L_{D^\prime}$. Explicitly, choosing any open cover $\{U_i\subseteq C\}_{i\in\I}$
such that $D\vert_{U_i} = (\psi_i)$ and $D^\prime\vert_{U_i} = (\psi^\prime_i)$, for all $i\in\I$,
we define this morphism in terms of the \v{C}ech data
\begin{flalign}
k_i \,:=\, \frac{\psi_i^\prime}{\psi_i}\,\in\, \O(U_i)\quad,
\end{flalign}
which satisfies the required compatibility conditions 
$k_{i}\vert_{U_{ij}}\, g_{ij} = g^\prime_{ij}\, k_{j}\vert_{U_{ij}}$
on all overlaps $U_{ij}$. This yields 
the sheaf morphism $\Gamma^\infty(L_D)\to \Gamma^\infty(L_{D^\prime})$ defined by
\begin{flalign}\label{eqn:linebundlemap}
\Gamma^\infty(U,L_D)~\longrightarrow~\Gamma^\infty(U,L_{D^\prime})~~,\quad
(s_i)_{i\in \I}^{}~\longmapsto~\big(k_i\vert_{U\cap U_i}\,s_i\big)_{i\in\I}^{}\quad,
\end{flalign}
for all open subsets $U\subseteq C$, which restricts to a morphism
$\O(L_D)\to \O(L_{D^\prime})$ between the sheaves of holomorphic sections.
For later use, let us observe that, together with \eqref{eqn:O_DGammaisos} for $D$ and $D^\prime$,
we obtain a commutative diagram of sheaf morphisms
\begin{flalign}\label{eqn:DtoDprimediagram}
\begin{gathered}
\xymatrix{
\ar[d]_-{\cong}\O_D \ar[r]^-{\subseteq}~&~\O_{D^\prime}\ar[d]^-{\cong}\\
\O(L_D)\ar[r]~&~\O(L_{D^\prime})
}
\end{gathered}
\end{flalign}
with the top horizontal arrow the canonical inclusion $\O_{D}\subseteq \O_{D^\prime}$ for 
$D\leq D^\prime$.
\end{constr}

Let us also recall that associated to each Riemann surface $C$ is its Dolbeault complex
\begin{flalign}
\Omega^{0,\bullet}(C)\,:=\,\Big(
\xymatrix{
\Omega^{0,0}(C)\ar[r]^-{\delbar}~&~ \Omega^{0,1}(C)
}
\Big)
\end{flalign}
whose degree $0$ component $\Omega^{0,0}(C) = C^\infty(C)$ is given by the smooth functions on $C$ 
and degree $1$ component is given by the smooth $(0,1)$-forms on $C$,
i.e.\ differential forms taking the form $\xi = f\,\dd \ovr{z}$ in any choice of local holomorphic coordinate $z$.
More generally, given any divisor $D:C\to \bbZ$, one uses its associated holomorphic
line bundle $L_D\to C$ from Construction \ref{constr:linebundles} to 
define the twisted Dolbeault complex
\begin{flalign}
\Omega^{0,\bullet}(C,L_D)\,:=\, 
\Big(
\xymatrix{
\Omega^{0,0}(C,L_D)\ar[r]^-{\delbar}~&~ \Omega^{0,1}(C,L_D)
}
\Big)
\end{flalign}
which describes $(0,\bullet)$-forms on $C$ taking values in 
the holomorphic line bundle $L_D\to C$. By construction, the
zeroth cohomology of this cochain complex 
\begin{flalign}
\mathsf{H}^0\, \Omega^{0,\bullet}(C,L_D)\,=\,\O(C,L_D)\,\cong\,\O_D(C)\,=\,
\mathsf{H}^0(C,\O_D)
\end{flalign}
is isomorphic to the $D$-conditioned meromorphic functions on $C$ from \eqref{eqn:O_D},
which is the zeroth sheaf cohomology of $\O_D$.
Furthermore, as a consequence of Dolbeault's theorem, the first cohomology
\begin{flalign}
\mathsf{H}^1\, \Omega^{0,\bullet}(C,L_D)\,\cong\,\mathsf{H}^1(C,\O_D)
\end{flalign}
is isomorphic to the first sheaf cohomology of $\O_D$. 
These cohomologies are relatively well-understood 
thanks to the powerful Riemann-Roch and Serre duality theorems,
see e.g.\ \cite[Sections 16 and 17]{Forster}. Let us summarize 
the results which are relevant for our work.
\begin{theo}\label{theo:cohomologies}
Let $C$ be a compact Riemann surface of genus $g$ and $D:C\to \bbZ$ a divisor.
\begin{itemize}
\item[(a)] The cohomologies $\mathsf{H}^0 \, \Omega^{0,\bullet}(C,L_D)$ and
$\mathsf{H}^1\, \Omega^{0,\bullet}(C,L_D)$ are finite-dimensional and they
satisfy
\begin{flalign}
\dim\, \mathsf{H}^0\, \Omega^{0,\bullet}(C,L_D) - \dim\,
\mathsf{H}^1\, \Omega^{0,\bullet}(C,L_D)\,=\, 1-g+\mathrm{deg}(D)\quad.
\end{flalign}

\item[(b)] If $\mathrm{deg}(D)<0$, then $\mathsf{H}^0\, \Omega^{0,\bullet}(C,L_D)\cong 0$.

\item[(c)] If $\mathrm{deg}(D)>2g-2$, then $\mathsf{H}^1\, \Omega^{0,\bullet}(C,L_D)\cong 0$.
\end{itemize} 
\end{theo}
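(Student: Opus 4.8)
The plan is to reduce the entire statement to the classical sheaf cohomology of the divisor sheaves $\O_D$, using the two identifications $\mathsf{H}^0\,\Omega^{0,\bullet}(C,L_D)\cong\mathsf{H}^0(C,\O_D)$ and $\mathsf{H}^1\,\Omega^{0,\bullet}(C,L_D)\cong\mathsf{H}^1(C,\O_D)$ recalled above. Once this translation is made, items (a)--(c) are precisely the Riemann--Roch theorem together with its two standard vanishing corollaries. Throughout I would take as an input the finiteness theorem for compact Riemann surfaces, which guarantees that $\mathsf{H}^0(C,\O_D)$ and $\mathsf{H}^1(C,\O_D)$ are finite-dimensional for every divisor $D$; this is the genuinely analytic ingredient and I would cite it from \cite{Forster} rather than reprove it.

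For part (a) I would argue by induction on the divisor. Writing $\chi(D):=\dim\,\mathsf{H}^0(C,\O_D)-\dim\,\mathsf{H}^1(C,\O_D)$, the base case $D=0$ uses that $C$ is compact and connected, so that $\mathsf{H}^0(C,\O)\cong\bbC$ consists only of the constants, together with the definition of the genus $g:=\dim\,\mathsf{H}^1(C,\O)$; this gives $\chi(0)=1-g$. For the inductive step at a point $P\in C$ I would use the short exact sequence of sheaves
\[
0\longrightarrow\O_D\longrightarrow\O_{D+P}\longrightarrow\bbC_P\longrightarrow0\quad,
\]
whose quotient is the skyscraper sheaf $\bbC_P$ at $P$ recording the single additional Laurent coefficient permitted by $D+P$. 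Since skyscraper sheaves have no higher cohomology, $\mathsf{H}^0(C,\bbC_P)\cong\bbC$ and $\mathsf{H}^1(C,\bbC_P)\cong0$, so the associated long exact sequence
\[
0\to\mathsf{H}^0(C,\O_D)\to\mathsf{H}^0(C,\O_{D+P})\to\bbC\to\mathsf{H}^1(C,\O_D)\to\mathsf{H}^1(C,\O_{D+P})\to0
\]
has vanishing alternating sum of dimensions, yielding $\chi(D+P)=\chi(D)+1=\chi(D)+\big(\deg(D+P)-\deg(D)\big)$. Hence $\chi(D)-\deg(D)$ is unchanged under adding or removing a point, and since any divisor is reached from $0$ by finitely many such moves, $\chi(D)=1-g+\deg(D)$.

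For the two vanishing statements I would proceed as follows. Part (b): a nonzero global section of $\O_D$ is a nonzero meromorphic function $f\in\M(C)$ with $(f)\geq-D$, and since every meromorphic function on a compact Riemann surface has a principal divisor of degree zero, we get $0=\deg\big((f)\big)\geq-\deg(D)$, forcing $\deg(D)\geq0$; contrapositively, $\deg(D)<0$ implies $\mathsf{H}^0(C,\O_D)\cong0$. Part (c): by Serre duality one has $\mathsf{H}^1(C,\O_D)\cong\mathsf{H}^0(C,\O_{K-D})^\ast$, where $K=(\omega)$ is a canonical divisor associated to any $\omega\in\M^{(1)}(C)\setminus\{0\}$, of degree $\deg(K)=2g-2$. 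If $\deg(D)>2g-2$ then $\deg(K-D)<0$, so part (b) gives $\mathsf{H}^0(C,\O_{K-D})\cong0$ and therefore $\mathsf{H}^1(C,\O_D)\cong0$.

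The bookkeeping above is entirely elementary once two deeper results are in hand, and the hard part is exactly these two analytic pillars: the finiteness theorem underpinning (a) and Serre duality underpinning (c). Both require substantial input beyond the homological manipulations---functional-analytic arguments for finiteness and a residue pairing for duality---so in a write-up I would invoke them directly from \cite[Sections 16 and 17]{Forster} and present only the short exact sequence induction together with the degree-of-principal-divisor argument in detail.
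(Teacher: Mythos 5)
Your proof is correct. The paper itself does not prove Theorem \ref{theo:cohomologies} but simply records it as a summary of the Riemann--Roch and Serre duality theorems with a citation to \cite[Sections 16 and 17]{Forster}; your argument --- reduction to the sheaf cohomology of $\O_D$ via Dolbeault, the skyscraper short exact sequence $0\to\O_D\to\O_{D+P}\to\bbC_P\to0$ and induction on the divisor for (a), the degree-zero property of principal divisors for (b), and Serre duality with $\deg(K)=2g-2$ for (c) --- is precisely the standard proof given in that reference, with the finiteness theorem and Serre duality correctly isolated as the two analytic inputs that must be imported rather than rederived.
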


\begin{ex}\label{ex:CP1}
Consider the Riemann sphere $C = \bbC P^1$, which has genus $g=0$.
Combining the results in Theorem \ref{theo:cohomologies}, one obtains 
the following insights into the cohomologies of $\Omega^{0,\bullet}(C,L_D)$:
\begin{itemize}
\item If $\mathrm{deg}(D)< 0$, then $\mathsf{H}^0\, \Omega^{0,\bullet}(C,L_D)\cong 0$
and $\mathsf{H}^1\, \Omega^{0,\bullet}(C,L_D)\cong \bbC^{-(1+\mathrm{deg}(D))}$.

\item If $\mathrm{deg}(D)\geq 0$, then $\mathsf{H}^0\, \Omega^{0,\bullet}(C,L_D)\cong \bbC^{1+\mathrm{deg}(D)}$
and $\mathsf{H}^1\, \Omega^{0,\bullet}(C,L_D)\cong 0$.
\end{itemize}
Such cohomology computations will play a key role
in Section \ref{sec:genuszero} when we explain the conceptual 
mechanisms underlying the relationship between $4d$ semi-holomorphic Chern-Simons theory and 
$2d$ integrable field theories.
\end{ex}

\subsection{\label{subsec:homologicalalgebra}Homological algebra}
$L_\infty$-algebras are a generalization of Lie algebras in the
context of cochain complexes which satisfy a homologically 
relaxed version of the Jacobi identity. The way we use and interpret
$L_\infty$-algebras in our work is as algebraic models for a
certain class of spaces which are called \textit{formal 
moduli problems} \cite{Lurie,Pridham}. In the physical context of field theory, 
these are also known under the name Batalin-Vilkovisky formalism, 
see e.g.\ \cite{CG1,CG2} for a modern perspective and also \cite{JRSW} for a review.
Before we illustrate this crucial link through simple examples,
let us recall the precise definition of an $L_\infty$-algebra
in the sign conventions of \cite{KraftSchnitzer}.
\begin{defi}\label{def:Linftyalgebra}
An \textit{$L_\infty$-algebra} is a pair $(L,\ell)$ consisting
of a $\bbZ$-graded vector space\footnote{In the present paper
all vector spaces will be over the field $\bbC$ of complex numbers.} 
$L = (L^i)_{i\in\bbZ}$ and a family $\ell = (\ell_n)_{n\geq 1} = (\ell_1,\ell_2,\ell_3,\dots)$ 
of graded antisymmetric linear maps $\ell_n : L^{\otimes n}\to L$ of degree
$\vert \ell_n\vert = 2-n$ which satisfy the homotopy Jacobi identities
\begin{flalign}\label{eqn:Linftyalgebra}
\sum_{k+l-1=n}~ \sum_{\sigma\in\mathrm{Sh}(l,k-1)} (-1)^{\vert\sigma\vert} ~(-1)^{k-1}~ 
\ell_k \circ \left(\ell_l \otimes \id^{\otimes (k-1)}\right)\circ \gamma_{\sigma}~=~0\quad,
\end{flalign}
for all $n\geq 1$. Here we denote by $\mathrm{Sh}(l,k-1)\subseteq \Sigma_{n}$
the set of all $(l,k-1)$-shuffle permutations, by $(-1)^{\vert\sigma\vert}$
the parity of the permutation $\sigma\in \mathrm{Sh}(l,k-1)$, and by
$\gamma_\sigma : L^{\otimes n}\to L^{\otimes n}$ its action on tensor powers via
the usual symmetric braiding on graded vector spaces involving Koszul signs.
\end{defi}

\begin{rem}\label{rem:Linftyalgebra}
Note that every $L_\infty$-algebra $(L,\ell)$ has an underlying
cochain complex $(L,\dd)$ with differential $\dd:=\ell_1$ given by the arity $1$ component
of the $L_\infty$-structure $\ell = (\ell_n)_{n\geq 1}$. The square-zero condition
$\dd^2 = 0$ of the differential is equivalent to the homotopy Jacobi identity 
\eqref{eqn:Linftyalgebra} for $n=1$. To simplify our notations below, 
let us denote with a slight abuse of notation the induced differentials on the
tensor powers $L^{\otimes n}$ by the same symbol 
\begin{flalign}
\dd \,:=\, \sum_{i=0}^{n-1} \,\left( \id^{\otimes i} \otimes\dd \otimes\id^{\otimes(n-1-i)}\right) \,:\,
L^{\otimes n}~\longrightarrow~ L^{\otimes n}\quad.
\end{flalign}
The arity $2$ component
of the $L_\infty$-structure is a graded antisymmetric linear map
$\ell_2 : L\otimes L\to L$ of degree $0$, which as a consequence of the homotopy Jacobi identity
\eqref{eqn:Linftyalgebra} for $n=2$ is compatible with the
differential $\dd\,\ell_2 = \ell_2\, \dd$. The linear map $\ell_2$
\textit{does not} in general satisfy the ordinary Jacobi identity of a Lie algebra,
but instead it satisfies the homotopy Jacobi identity \eqref{eqn:Linftyalgebra} for $n=3$.
The latter can be rewritten as
\begin{flalign}
\sum_{\sigma\in \mathrm{Cyc}_3} \ell_2 \circ(\ell_2\otimes\id) \circ \gamma_\sigma\,=\, 
\dd\, \ell_3 + \ell_3\, \dd\quad,
\end{flalign}
where $\mathrm{Cyc}_3\subseteq \Sigma_3$ denotes the set of all cyclic permutations. This means
that the arity $3$ component $\ell_3$ of the $L_\infty$-structure acts as a homotopy witnessing
a homologically relaxed Jacobi identity. The arity $n\geq 4$ components $\ell_n$ provide the necessary higher
homotopy data for this relaxation. Observe that in the case where the homotopy data $\ell_n=0$ are trivial, 
for all $n\geq 3$, the triple $(L,\dd,\ell_2)$ given by the non-vanishing
structure maps defines a differential graded Lie algebra satisfying the Jacobi identity strictly.
Hence, differential graded Lie algebras are special examples of $L_\infty$-algebras.
\end{rem}

\begin{rem}\label{rem:FMP}
The geometric realization of an $L_\infty$-algebra $(L,\ell)$ as a formal moduli problem
$X_{(L,\ell)}$ is obtained through a functor of points which assigns suitable spaces
of Maurer-Cartan elements \cite{Lurie,Pridham}. The resulting geometric picture
can be described intuitively as follows: The points 
of the formal moduli problem $X_{(L,\ell)}$ are given by 
Maurer-Cartan elements in $(L,\ell)$, i.e.\ degree $1$ elements $\alpha \in L^1$
satisfying the Maurer-Cartan equation
\begin{flalign}\label{eqn:MCeqn}
\sum_{n\geq 1} \tfrac{1}{n!}~\ell_n\big(\alpha^{\otimes n}\big) \,=\, 0\quad.
\end{flalign}
(The precise statement requires using thickened points with nilpotents, 
which in particular makes the summation in \eqref{eqn:MCeqn} well-defined.)
These points transform under (higher) gauge symmetries which are encoded in the 
components $L^i$, for all $i\leq 0$. This geometric picture is useful
to interpret the physical content of a field theory which is described
algebraically by an $L_\infty$-algebra.
\end{rem}

\begin{rem}\label{rem:cyclic}
In the context of field theories, the relevant $L_\infty$-algebras $(L,\ell)$ often come endowed
with an additional \textit{cyclic structure} of degree $-3$. Concretely, this consists of a
non-degenerate and graded symmetric linear map $\pair{\cdot}{\cdot} : L\otimes L\to \bbC$ of degree $-3$
which is invariant under the $L_\infty$-structure in the sense that the linear maps
\begin{flalign}
\pair{\cdot}{\cdot}\circ \big(\ell_n\otimes\id \big)\,:\,L^{\otimes (n+1)} ~\longrightarrow~\bbC
\end{flalign}
are graded antisymmetric, for all $n\geq 1$.
From the geometric perspective of Remark \ref{rem:FMP},
such cyclic structures encode $(-1)$-shifted symplectic structures on the associated formal moduli problem $X_{(L,\ell)}$,
while from a physics point of view they provide the additional datum required to define
an action functional and antibracket, see e.g.\ \cite{CG1,CG2,JRSW} for details. 
\end{rem}

Let us illustrate this $L_\infty$-algebraic perspective through some simple examples
of field theories.
\begin{ex}\label{ex:KGfield}
The $L_\infty$-algebra $\big(\F_{\mathrm{KG}}(M),\ell\big)$
describing a scalar field theory with polynomial interactions
on an oriented $d$-dimensional Lorentzian spacetime $M$
is based on the cochain complex
\begin{subequations}
\begin{flalign}
\big(\F_{\mathrm{KG}}(M),\dd\big) \,:=\,\bigg(
\xymatrix@C=7.5em{
\stackrel{(1)}{C^\infty(M)} \ar[r]^-{\dd_M \ast_M \dd_M + m^2\,\ast_M } ~&~\stackrel{(2)}{\Omega^d(M)}
}
\bigg)
\end{flalign}
concentrated in degrees $1$ and $2$, where $\dd_M$ and $\ast_M$
denote, respectively, the de Rham differential and Hodge operator on $M$,
and $m^2$ is a mass term. 
Note that the differential of this complex is related through the Hodge operator
to the usual Klein-Gordon operator $\square_M +m^2$. The arity $n\geq 2$ 
components $\ell_n$ of the $L_\infty$-structure are given by the local interactions
\begin{flalign}
\ell_n\,:\, \F_{\mathrm{KG}}(M)^{\otimes n}~\longrightarrow~\F_{\mathrm{KG}}(M)~~,\quad
\Phi_1\otimes\cdots\otimes\Phi_n ~\longmapsto~\lambda_n\, \ast_M\!\!\big(\Phi_1\cdots\Phi_n\big)
\end{flalign}
\end{subequations}
which are obtained by taking point-wise products of the scalar fields $\Phi_i \in C^\infty(M)$
and applying the Hodge operator,
where $\lambda_n \in \bbC$ denotes coupling constants.
In particular, $\ell_n$ vanishes if any of its arguments belongs to $\Omega^d(M)$.
One easily checks that the homotopy Jacobi identities \eqref{eqn:Linftyalgebra} are satisfied
in this example. To justify the interpretation of this $L_\infty$-algebra in terms of an interacting
scalar field, we use the geometric perspective from Remark \ref{rem:FMP} and observe that
its Maurer-Cartan elements are scalar fields $\Phi\in  \F_{\mathrm{KG}}(M)^1 = C^\infty(M)$
satisfying the interacting Klein-Gordon equation
\begin{flalign}
\dd_M {\ast_M}\dd_M \Phi\,+\, m^2\,{\ast_M}\Phi \,+\, \sum_{n\geq 2} \tfrac{\lambda_n}{n!}\,{\ast_M}\Phi^n\,=\,0\quad.
\end{flalign}
Since the components $\F_{\mathrm{KG}}(M)^i = 0$ are trivial, for all $i\leq 0$, 
this field theory has, as expected, no gauge symmetries.
\sk

The cyclic structure
\begin{subequations}
\begin{flalign}
\pair{\cdot}{\cdot}\,:\,\F_{\mathrm{KG},\cc}(M)\otimes \F_{\mathrm{KG},\cc}(M)~\longrightarrow~\bbC
\end{flalign}
of this model is defined only on compactly supported fields 
(which is a typical feature of field theories on non-compact manifolds)
and it is given by the usual integration pairing
\begin{flalign}
\pair{\Phi}{\Phi^\ddagger}\,:=\,\int_M \Phi\,\Phi^\ddagger
\end{flalign}
\end{subequations}
between fields $\Phi\in \F_{\mathrm{KG},\cc}(M)^1 = C^\infty_\cc(M)$ and antifields 
$\Phi^{\ddagger}\in \F_{\mathrm{KG},\cc}(M)^2 = \Omega^d_\cc(M)$. The resulting action functional
on fields $\Phi\in \F_{\mathrm{KG},\cc}(M)^1$ reads as
\begin{flalign}
\nn S_{\mathrm{KG}}(\Phi)\,&:=\,\Big\langle\!\!\Big\langle \Phi , 
\sum_{n\geq 1} \tfrac{1}{(n+1)!}\,\ell_n\big(\Phi^{\otimes n}\big) 
\Big\rangle\!\!\Big\rangle\\
\,&=\, \int_M\bigg(-\tfrac{1}{2}\,\dd_M\Phi\wedge\ast_M\dd_M\Phi + \tfrac{m^2}{2}\,{\ast_M}\Phi^2  + \sum_{n\geq 2} \tfrac{\lambda_n}{(n+1)!}\, {\ast_M}\Phi^{n+1}\bigg)\quad,
\end{flalign}
which we recognize as the typical action for an interacting scalar field.
\end{ex}

\begin{ex}\label{ex:CSfield}
Let $M$ be an oriented $3$-dimensional manifold and $\g$ a Lie algebra
with Lie bracket denoted by $[\cdot,\cdot]:\g\otimes\g\to \g$.
The $L_\infty$-algebra $\big(\F_{\mathrm{CS}}(M),\ell\big)$ describing the associated Chern-Simons theory 
is based on the $\g$-valued de Rham complex
\begin{subequations}
\begin{flalign}
\big(\F_{\mathrm{CS}}(M),\dd\big) \,:=\,\bigg(
\xymatrix@C=2em{
\stackrel{(0)}{\Omega^0(M,\g)} \ar[r]^-{\dd_M}~&~
\stackrel{(1)}{\Omega^1(M,\g)} \ar[r]^-{\dd_M}~&~
\stackrel{(2)}{\Omega^2(M,\g)} \ar[r]^-{\dd_M}~&~
\stackrel{(3)}{\Omega^3(M,\g)} 
}
\bigg)\quad.
\end{flalign}
The higher arity components of the $L_\infty$-structure
are trivial $\ell_n=0$, for all $n\geq 3$, and the arity 
$2$ component
\begin{flalign}
\ell_2\,:\, \F_{\mathrm{CS}}(M)^{\otimes 2}~\longrightarrow~ \F_{\mathrm{CS}}(M)~~,\quad
\alpha\otimes\beta ~\longmapsto~ \wedgebracket{\alpha}{\beta}
\end{flalign}
\end{subequations}
is given by the usual bracket $\wedgebracket{\alpha}{\beta} := \sum_{ij} \alpha_i\wedge \beta_j\otimes [t_i,t_j]$
which is obtained by combining the $\wedge$-product on differential forms and the Lie bracket on $\g$,
for all $\alpha = \sum_i\alpha_i\otimes t_i$ and $\beta = \sum_j\beta_j\otimes t_j$ in 
$\Omega^\bullet(M,\g) = \Omega^\bullet(M)\otimes \g$.
One easily checks that the homotopy Jacobi identities \eqref{eqn:Linftyalgebra} are satisfied
in this example. To justify the interpretation of this $L_\infty$-algebra in terms of Chern-Simons theory, 
we use the geometric perspective from Remark \ref{rem:FMP} and observe that
its Maurer-Cartan elements are connections $A \in  \F_{\mathrm{CS}}(M)^1 = \Omega^1(M,\g)$
satisfying the flatness condition
\begin{flalign}
\dd_M A + \tfrac{1}{2}\,\wedgebracket{A}{A} \,=\,0\quad.
\end{flalign}
Since the component $\F_{\mathrm{CS}}(M)^0 =\Omega^0(M,\g)$ is non-trivial,
this field theory has, as expected, gauge symmetries which are parametrized by $\g$-valued
functions $\epsilon \in \Omega^0(M,\g) = C^\infty(M,\g)$. Working out explicitly the action 
of these gauge symmetries
on Maurer-Cartan elements, one recovers the usual formula 
$\delta_\epsilon A =\dd_M\epsilon + 
\wedgebracket{A}{\epsilon}$ for infinitesimal gauge transformations of connections,
see e.g.\ \cite{JRSW} for more details.
\sk

To define a cyclic structure $\pair{\cdot}{\cdot}: \F_{\mathrm{CS},\cc}(M)\otimes \F_{\mathrm{CS},\cc}(M)\to \bbC$
for this model, one has to choose any non-degenerate invariant symmetric bilinear form
$\langle\cdot,\cdot\rangle : \g\otimes\g\to \bbC$ on the underlying Lie algebra $\g$.
The cyclic structure is then defined by the integration pairing
\begin{flalign}
\pair{\alpha}{\beta}\,:=\,\int_M \wedgepair{\alpha}{\beta}
\end{flalign}
which combines the $\wedge$-product on differential forms with the pairing $\langle\cdot,\cdot\rangle$ on $\g$.
The resulting action functional on fields $A\in \F_{\mathrm{CS},\cc}(M)^1$ reads as
\begin{flalign}
S_{\mathrm{CS}}(A)\,&:=\,\Big\langle\!\!\Big\langle A , 
\sum_{n\geq 1} \tfrac{1}{(n+1)!}\,\ell_n\big(A^{\otimes n}\big) 
\Big\rangle\!\!\Big\rangle
\,=\, \int_M\Big(
\tfrac{1}{2}\wedgepair{A}{\dd_M A} + \tfrac{1}{3!}\wedgepair{A}{\wedgebracket{A}{A}}
\Big)\quad,
\end{flalign}
which we recognize as the typical Chern-Simons action.
\end{ex}

An important feature of $L_\infty$-algebra structures is that they can be transferred
along deformation retracts of cochain complexes. Let us recall that a 
\textit{strong deformation retract}
\begin{subequations}\label{eqn:defret}
\begin{equation}
\begin{tikzcd}
(L^\prime,\dd^\prime) \ar[r,shift right=-1ex,"i"] & \ar[l,shift right=-1ex,"p"] (L,\dd) \ar[loop,out=-20,in=20,distance=20,swap,"h"]
\end{tikzcd}
\end{equation}
from a cochain complex $(L,\dd)$ to another cochain complex 
$(L^\prime,\dd^\prime)$ consists of two cochain maps $i$ and $p$, i.e.\
$\dd \, i = i\, \dd^\prime$ and $\dd^\prime \, p = p\, \dd$,
and a linear map $h: L\to L$ of degree $-1$, which satisfy the identities
\begin{flalign}
p \, i\,=\,\id~~,\quad
i\, p - \id \,=\,\dd\, h + h\, \dd\quad,
\end{flalign}
as well as the side conditions
\begin{flalign}
h\,i\,=\,0~~,\quad p\,h\,=\,0~~,\quad h^2\,=\,0\quad.
\end{flalign}
\end{subequations}
Note that this implies in particular that $i$ and $p$ are quasi-inverse
to each other, hence both $i$ and $p$ are quasi-isomorphisms of cochain complexes.
The following result is called the \textit{homotopy transfer theorem} for $L_\infty$-algebras,
see e.g.\ \cite[Chapter 10.3]{LodayVallette} for a proof in the  general context of operad algebras.
\begin{theo}\label{theo:homotopytransfer}
Let $(L,\ell)$ be any $L_\infty$-algebra and $(L^\prime,\dd^\prime) \rightleftarrows (L,\dd)\,
\mathbin{\rotatebox[origin=c]{90}{\scalebox{1.3}{$\circlearrowleft$}}}$ any strong deformation retract as in 
\eqref{eqn:defret} from its underlying cochain complex $(L,\dd:=\ell_1)$ 
to another cochain complex $(L^\prime,\dd^\prime)$. Then there exists
a transferred $L_\infty$-algebra structure $\ell^\prime$ on $L^\prime$
with $\ell^\prime_1=\dd^\prime$, such that the cochain map
$i$ extends to an $\infty$-quasi-isomorphism $i_\infty : (L^\prime,\ell^\prime)\rightsquigarrow (L,\ell)$.
In particular, this means that $(L,\ell)$ and $(L^\prime,\ell^\prime)$ represent equivalent $L_\infty$-algebras.
\end{theo}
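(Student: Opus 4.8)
The plan is to recast the statement in the dual language of coalgebras, where both $L_\infty$-structures and their $\infty$-morphisms become strictly linear data, and then to invoke the homological perturbation lemma. Recall that an $L_\infty$-structure $\ell$ on $L$ is the same datum as a degree $+1$, square-zero coderivation $Q=\sum_{n\geq 1}Q_n$ on the cofree conilpotent cocommutative coalgebra $\mathsf{S}(L):=\overline{\Sym}\big(L[1]\big)$ built on the shift of $L$: under the d\'ecalage isomorphisms each structure map $\ell_n$ corresponds to the corestriction $Q_n$, the arity-one part $Q_1$ being induced by $\dd=\ell_1$, and the condition $Q^2=0$ encoding precisely the homotopy Jacobi identities \eqref{eqn:Linftyalgebra}. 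Likewise, an $\infty$-morphism is the same as a counital morphism of dg coalgebras, its corestriction to the cogenerators recovering the underlying sequence of multilinear maps. In this picture the theorem reduces to transferring the codifferential $Q$ from $\mathsf{S}(L)$ to $\mathsf{S}(L')$ along an induced retract and checking that the transferred object remains a coderivation.

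First I would promote the strong deformation retract \eqref{eqn:defret} on $(L,\dd)$ to one on the cofree coalgebras, with respect to the linear codifferentials induced by $\dd$ and $\dd'$. The cochain maps $i$ and $p$ extend to coalgebra morphisms $\mathsf{S}(i)$ and $\mathsf{S}(p)$ by applying the symmetric functor, while the contracting homotopy $h$ is promoted by the standard \emph{tensor trick} to a homotopy $H$ on $\mathsf{S}(L)$, assembled from $h$, $\id$ and $i\,p$ acting on the symmetric tensor factors; a direct but routine computation confirms that $\big(\mathsf{S}(i),\mathsf{S}(p),H\big)$ again satisfies the defining identities and side conditions of \eqref{eqn:defret}. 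I then write the full codifferential as $Q=Q_1+\delta$, where the perturbation $\delta:=\sum_{n\geq 2}Q_n$ strictly lowers the word length in $\mathsf{S}(L)$ (equivalently, the coradical filtration degree). Since the tensor-trick homotopy $H$ preserves word length, the composite $H\delta$ strictly lowers it and is therefore locally nilpotent on each finite filtration stage, so that the geometric series below converge termwise.

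Applying the homological perturbation lemma \cite{HPT,LodayVallette} to the pair $(Q_1,\delta)$ along this retract then outputs a perturbed differential together with a perturbed inclusion,
\[
Q'\,:=\,Q_1'+\mathsf{S}(p)\,\delta\,(\id-H\delta)^{-1}\,\mathsf{S}(i)~,\qquad I_\infty\,:=\,(\id-H\delta)^{-1}\,\mathsf{S}(i)~,
\]
with $I_\infty$ intertwining $Q'$ and $Q$. The main obstacle, and the genuine content of the theorem, is to verify that $Q'$ is again a \emph{coderivation} and that $I_\infty$ is again a \emph{coalgebra morphism}, so that they define an honest transferred $L_\infty$-structure $\ell'$ and an $\infty$-morphism $i_\infty$, rather than merely a dg map of the underlying graded vector spaces. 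This is exactly what the tensor trick is engineered to deliver: because $\mathsf{S}(i)$, $\mathsf{S}(p)$ respect the comultiplication and $\delta$ is itself a coderivation, one checks that every term of the perturbation series preserves coderivations and comultiplicativity; carrying out this compatibility check is the technical heart of the argument. One then reads off $\ell'_1=\dd'$ from the fact that the lowest-arity part of $Q'$ is $Q_1'$, and since the corestriction of $I_\infty$ to the cogenerators has linear part $i$, the induced $i_\infty$ is an $\infty$-quasi-isomorphism by the quasi-isomorphism half of \eqref{eqn:defret}.

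As an alternative I would note that one may bypass the coalgebraic machinery entirely and \emph{define} the transferred brackets $\ell'_n$ by the explicit sum-over-trees formula, decorating leaves by $i$, internal vertices by the brackets $\ell_k$, internal edges by the homotopy $h$, and the root by $p$, and then verify \eqref{eqn:Linftyalgebra} directly. The convergence and the graded-antisymmetry are automatic from the combinatorics, but the bookkeeping of Koszul signs is considerably heavier; I would therefore keep the perturbation-theoretic route as the cleaner path and defer to the operadic treatment in \cite[Chapter 10.3]{LodayVallette} and the sign-conventions of \cite{KraftSchnitzer} for the remaining routine verifications.
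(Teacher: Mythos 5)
Your argument is correct, and it proves the theorem by a genuinely different route from the one the paper points to: the paper does not prove Theorem \ref{theo:homotopytransfer} itself but defers to the operadic treatment in \cite[Chapter 10.3]{LodayVallette}, where the transferred structure is constructed directly via the explicit sum-over-trees formulas that the paper then records in Remark \ref{rem:homotopytransfer}. You instead dualize to the reduced symmetric coalgebra $\overline{\Sym}(L[1])$, split the codifferential as $Q_1+\delta$ with $\delta$ strictly lowering the coradical filtration, and run the homological perturbation lemma along the tensor-trick extension of the retract. Both routes are standard and yield the same answer; yours has the advantage of producing the transferred codifferential, the $\infty$-quasi-isomorphism $i_\infty$, and in fact a perturbed projection and homotopy simultaneously in closed form, and it meshes well with the paper's own toolkit since Proposition \ref{prop:HPT} is exactly the perturbation lemma it uses elsewhere; the operadic/tree route buys the explicit formulas \eqref{eqn:ellntransfer} directly and generalizes uniformly to algebras over any Koszul operad. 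Two points you rightly flag but should not understate if you were to write this out: (i) the tensor-trick homotopy must be symmetrized to descend to $\overline{\Sym}(L[1])$ and needs all three side conditions of \eqref{eqn:defret} to remain a \emph{strong} deformation retract there (the paper's definition does include them, so this is available); and (ii) the verification that the perturbed differential is again a coderivation and the perturbed inclusion again a coalgebra morphism is the genuine content and deserves more than an assertion --- it is precisely the step that fails for an arbitrary homotopy equivalence and works only because of the filtration-compatibility of $H$ and the side conditions.
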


\begin{rem}\label{rem:homotopytransfer}
The transferred $L_\infty$-algebra structure $\ell^\prime$ can be computed
from the original $L_\infty$-algebra structure $\ell$ and the data of
the strong deformation retract $(i,p,h)$, see e.g.\ \cite[Chapter 10.3]{LodayVallette}. 
In the context of our work, it suffices to consider the 
case where $(L,\ell)$ is a differential graded Lie algebra, i.e.\
the higher arity structure maps $\ell_n=0$ vanish, for all $n\geq 3$.
The transferred arity $2$ component explicitly reads as
\begin{flalign}\label{eqn:ell2transfer}
\ell_2^\prime\,=\,p\circ \ell_2\circ (i\otimes i) \,:\,L^\prime\otimes L^\prime  ~\longrightarrow~
L^\prime\quad.
\end{flalign}
The transferred arity $n\geq 3$ components $\ell^\prime_n$ 
are given by decorating binary rooted trees with the 
given data according to the pattern
\begin{subequations}\label{eqn:ellntransfer}
\begin{flalign}
\begin{gathered}
\begin{tikzpicture}[cir/.style={circle,draw=black,inner sep=0pt,minimum size=2mm},
        poin/.style={rectangle, inner sep=2pt,minimum size=0mm},scale=0.8, every node/.style={scale=0.8},
		every edge quotes/.style = {auto, font=\footnotesize}]
\node[poin] (i)   at (0,-4) {};
\node[poin] (v1)  at (0,-3) {$\ell_2$};
\node[poin] (v2)  at (-1.5,-2) {$\ell_2$};
\node[poin] (v3)  at (1.5,-2) {$\ell_2$};
\node[poin] (v4)  at (3,-1) {$\ell_2$};
\node[poin] (o1)  at (-2.25,-1) {};
\node[poin] (o3)  at (-0.75,-1) {};
\node[poin] (o4)  at (0.5,-1) {};
\node[poin] (o5)  at (2.25,0) {};
\node[poin] (o7)  at (3.75,0) {};
\draw[thick] (i)  edge["$p$"] (v1);
\draw[thick] (v1) edge["$h$"'] (v2);
\draw[thick] (v1) edge["$h$"] (v3);
\draw[thick] (v3) edge["$h$"] (v4);
\draw[thick] (v2) edge["$i$"] (o1);
\draw[thick] (v2) edge["$i$"] (o3);
\draw[thick] (v3) edge["$i$"] (o4);
\draw[thick] (v4) edge["$i$"] (o5);
\draw[thick] (v4) edge["$i$"] (o7);
\end{tikzpicture}
\end{gathered}
\end{flalign}
and then carrying out suitable summations over such trees. Note that the displayed 
decorated tree contributes to $\ell_5^\prime$, since it has five inputs, 
and the associated linear map reads explicitly as
\begin{flalign}
p\circ \ell_2\circ \big((h\circ \ell_2)\otimes (h\circ \ell_2) \big)\circ \big(\id^{\otimes 3}\otimes (h\circ \ell_2)\big) \circ i^{\otimes 5}\quad.
\end{flalign}
\end{subequations}

This schematic description of the transferred $L_\infty$-algebra structure suffices to observe the 
following simplification which arises in some (but not all) of the models studied in this work.
Suppose that the binary structure map $\ell_2 : L\otimes L\to L$
restricts along the (necessarily injective) map $i:L^\prime\to L$, i.e.\
\begin{flalign}
\begin{gathered}
\xymatrix@C=3em{
\ar[d]_-{i\otimes i}L^\prime\otimes L^\prime \ar@{-->}[r]^-{\ell_2}~&~ L^\prime\ar[d]^-{i}\\
L\otimes L\ar[r]_-{\ell_2}~&~L
}
\end{gathered}\quad.
\end{flalign}
Then the transferred arity $2$ component \eqref{eqn:ell2transfer} reads as
\begin{flalign}
\ell_2^\prime \,=\, p\circ \ell_2\circ (i\otimes i) \,= \,p\circ i\circ \ell_2 \,=\, \ell_2\quad,
\end{flalign} 
i.e.\ it agrees with the restriction of $\ell_2$, where we have used that $p\,i = \id$ for any deformation retract. 
From the side condition $h\,i =0$ for a strong deformation retract, 
it follows that the higher arity components $\ell_n^\prime = 0$ are trivial, for all $n\geq 3$.
Hence, in this special scenario the transferred $L_\infty$-algebra
is simply given by the differential graded Lie algebra $(L^\prime,\dd^\prime,\ell_2^\prime=\ell_2)$.
\end{rem}

A useful feature of strong deformation retracts is that they admit deformations by small perturbations.
Recall that a perturbation of the cochain complex $(L,\dd)$ is a linear map $\delta : L\to L$ 
of degree $1$ such that $\dd + \delta$ defines a differential on $L$, i.e.\ it 
satisfies the square-zero condition $(\dd+\delta)^2=0$.
The perturbation $\delta$ is called small, with respect to the given strong deformation retract \eqref{eqn:defret},
if the linear map $\id - \delta\, h : L\to L$ admits an inverse $(\id - \delta\, h)^{-1} : L\to L$.
The following result is called the \textit{homological perturbation lemma}, see e.g.\ \cite{HPT} for a proof.
\begin{propo}\label{prop:HPT}
Given any strong deformation retract \eqref{eqn:defret} and any small perturbation $\delta$ of $(L,\dd)$,
there exists a deformed strong deformation retract
\begin{subequations}
\begin{equation}
\begin{tikzcd}
\big(L^\prime,\dd^\prime +\delta^\prime\big) \ar[r,shift right=-1ex,"\widetilde{i}"] & \ar[l,shift right=-1ex,"\widetilde{p}"] \big(L,\dd+\delta\big) \ar[loop,out=-20,in=20,distance=20,swap,"\widetilde{h}"]
\end{tikzcd}
\end{equation}
with
\begin{flalign}
\delta^\prime \,&:=\,p\, (\id-\delta\, h)^{-1}\, \delta \, i\quad,\\
\widetilde{i}\,&:= \, i + h\, (\id-\delta\, h)^{-1}\, \delta \, i\quad,\\
\widetilde{p}\,&:= \, p + p\, (\id-\delta\, h)^{-1}\, \delta \, h\quad,\\
\widetilde{h}\,&:= \, h + h\, (\id-\delta\, h)^{-1}\, \delta \, h\quad.
\end{flalign}
\end{subequations}
\end{propo}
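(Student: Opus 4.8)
The plan is to prove the statement by direct algebraic verification, reducing every claimed formula to manipulations with the single operator $\Delta := (\id - \delta\, h)^{-1}\,\delta$. The first step is to record its elementary properties. From the identity $(\id - \delta\, h)\,\delta = \delta\,(\id - h\,\delta)$ one obtains the alternative form $\Delta = \delta\,(\id - h\,\delta)^{-1}$, and directly from the definition of a formal inverse the two resolvent identities $(\id - \delta\, h)^{-1} = \id + \Delta\, h$ and $(\id - h\,\delta)^{-1} = \id + h\,\Delta$, together with the recursions $\Delta = \delta + \delta\, h\,\Delta = \delta + \Delta\, h\,\delta$. Using $(\id - \delta\, h)^{-1}\,\delta = \Delta$, the four claimed maps then take the compact form $\delta^\prime = p\,\Delta\, i$, $\widetilde{i} = (\id + h\,\Delta)\, i$, $\widetilde{p} = p\,(\id + \Delta\, h)$ and $\widetilde{h} = (\id + h\,\Delta)\, h = h\,(\id + \Delta\, h)$, while the perturbation hypothesis $(\dd + \delta)^2 = 0$ reads $\dd\,\delta + \delta\,\dd + \delta^2 = 0$.

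Second, I would dispose of the purely \emph{formal} identities, namely the three side conditions and $\widetilde{p}\,\widetilde{i} = \id$; these use only the original relations $h^2 = 0$, $h\,i = 0$, $p\,h = 0$, $p\,i = \id$ and never the perturbation equation. For instance $\widetilde{h}^{\,2} = (\id + h\,\Delta)\,h\,(\id + h\,\Delta)\,h$ collapses to $0$ because $h^2 = 0$ forces $h\,(\id + h\,\Delta)\,h = h^2 + h^2\,\Delta\, h = 0$; the same cancellation combined with $h\,i = 0$ and $p\,h = 0$ gives $\widetilde{h}\,\widetilde{i} = 0$ and $\widetilde{p}\,\widetilde{h} = 0$; and $\widetilde{p}\,\widetilde{i} = p\,(\id + \Delta\, h + h\,\Delta)\,i = p\,i = \id$, where the two cross terms vanish by $p\,h = 0$ and $h\,i = 0$ after using $h^2 = 0$ to reduce the product $(\id + \Delta\, h)(\id + h\,\Delta)$.

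Third, and this is the step I expect to be the main obstacle, I would verify the two chain-map properties $(\dd + \delta)\,\widetilde{i} = \widetilde{i}\,(\dd^\prime + \delta^\prime)$ and $(\dd^\prime + \delta^\prime)\,\widetilde{p} = \widetilde{p}\,(\dd + \delta)$, together with the deformed homotopy relation $\widetilde{i}\,\widetilde{p} - \id = (\dd + \delta)\,\widetilde{h} + \widetilde{h}\,(\dd + \delta)$. These are the only identities that genuinely require the perturbation equation $\dd\,\delta + \delta\,\dd = -\delta^2$, and the naive expansions produce terms quadratic in $\Delta$ and triple products such as $h\,\Delta\,\dd\,h\,\Delta\, i$; the crux is that every such term is resorbed by repeated use of the recursions $\Delta = \delta + \delta\, h\,\Delta = \delta + \Delta\, h\,\delta$ together with the original homotopy $i\,p - \id = \dd\, h + h\,\dd$, the bookkeeping being organised so that this telescoping mirrors the geometric-series expansion $(\id - \delta\, h)^{-1} = \sum_{n\geq 0}(\delta\, h)^n$ that is literally valid in the nilpotent case. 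Finally, the square-zero condition for $\dd^\prime + \delta^\prime$ comes for free: since $\widetilde{i}$ is split injective (by $\widetilde{p}\,\widetilde{i} = \id$) and a chain map, one has $\widetilde{i}\,(\dd^\prime + \delta^\prime)^2 = (\dd + \delta)^2\,\widetilde{i} = 0$, whence $(\dd^\prime + \delta^\prime)^2 = 0$, so that the deformed data $(\widetilde{i}, \widetilde{p}, \widetilde{h})$ indeed assembles into a strong deformation retract.
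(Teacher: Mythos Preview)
The paper does not supply a proof of this proposition: it is stated with the remark ``see e.g.\ \cite{HPT} for a proof'' and no argument is given in the text. Your proposal is the standard direct verification, and it is essentially the same argument one finds in Crainic's note cited there: introduce the operator $\Delta = (\id - \delta h)^{-1}\delta = \delta(\id - h\delta)^{-1}$, rewrite all four deformed maps in terms of $\Delta$, dispose of the side conditions and $\widetilde{p}\,\widetilde{i}=\id$ using only $h^2=0$, $h\,i=0$, $p\,h=0$, $p\,i=\id$, and then check the chain-map and homotopy identities using the recursion $\Delta = \delta + \delta h\Delta = \delta + \Delta h\delta$ together with $\dd\delta+\delta\dd+\delta^2=0$ and the original homotopy $i\,p-\id=\dd h+h\dd$.

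Your second step is carried out correctly and in full. Your third step is only sketched: you indicate the mechanism (telescoping via the $\Delta$-recursions) without writing out the computations for $(\dd+\delta)\,\widetilde{i} = \widetilde{i}\,(\dd'+\delta')$ and $\widetilde{i}\,\widetilde{p}-\id = (\dd+\delta)\widetilde{h}+\widetilde{h}(\dd+\delta)$. There is no missing idea here and the approach will succeed, but if you are asked to submit a complete proof you should expand at least one of these identities explicitly; the cleanest route is usually to first establish the single commutator identity $[\dd+\delta,\,h\Delta] = \Delta - i\,p\,\Delta + (\text{terms in }\ker p)$ organised via $\dd h + h\dd = i p - \id$, from which both chain-map relations follow by pre- and post-composition with $i$ and $p$.
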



\section{\label{sec:genuszero}The case of genus zero}
In this section we study the $4d$ semi-holomorphic Chern-Simons
theory of Costello and Yamazaki \cite{CY3} in the homological framework
provided by the theory of $L_\infty$-algebras from Subsection \ref{subsec:homologicalalgebra}.
In particular, we explain how the singularities and boundary conditions
of the fields of this theory, both of which are crucial for establishing links
to $2d$ integrable field theory \cite{DLMV,BSV}, can be implemented conceptually and 
efficiently by using the concept of divisors from Subsection \ref{subsec:Riemannsurfaces}.
To simplify our presentation, we consider in the present section only the case
where the compact Riemann surface $C$ has genus zero, i.e.\ it is given by the
Riemann sphere $C=\CP$. The case of higher genus will be investigated later in Section \ref{sec:highergenus}.

\subsection{\label{subsec:setup}Setup}
The $4d$ semi-holomorphic Chern-Simons theory of Costello and Yamazaki \cite{CY3}
is defined on the $4$-dimensional manifold
\begin{flalign}\label{eqn:Xproduct}
X\,:=\, \Sigma\times C 
\end{flalign}
which is given by the product of an oriented $2d$ Lorentzian spacetime $\Sigma$ 
and a compact Riemann surface $C$. In the present section, the latter 
is chosen as the Riemann sphere $C=\CP$.
Making use of the complex structure on $C$, one can introduce 
the bicomplex
\begin{flalign}\label{eqn:dRDol}
\big( \Omega^{\bullet,(0,\bullet)}(X) , \dd_\Sigma,\delbar \big)\,:=\,
\big(\Omega^\bullet(\Sigma),\dd_\Sigma\big) \widehat{\otimes} \big(\Omega^{0,\bullet}(C),\delbar\big)
\end{flalign}
which combines the de Rham complex on $\Sigma$ with the Dolbeault complex on $C$,
where $\widehat{\otimes}$ denotes the completed projective tensor product
of locally convex topological vector spaces. 
(We regard the de Rham and Dolbeault complexes as endowed with the usual Fr{\'e}chet topologies.)
We will explain in Remark \ref{rem:auxiliaryLinfty} below 
that this de Rham-Dolbeault bicomplex captures the relevant 
semi-holomorphic features of $4d$ Chern-Simons theory \cite{CY3}.
\sk

We will now introduce an $L_\infty$-algebra which provides 
a semi-holomorphic generalization of ordinary Chern-Simons theory from Example \ref{ex:CSfield}.
It is important to emphasize that this $L_\infty$-algebra plays only an auxiliary role
in the context of $4d$ semi-holomorphic Chern-Simons theory since it \textit{does not} (yet)
encode appropriate singularities and boundary conditions for the fields, which however are 
crucial to obtain a non-trivial $2d$ integrable field theory, see e.g.\ 
\cite{CY3,DLMV,BSV}.
These refinements will be discussed in depth in the subsections below.
\begin{defi}\label{def:auxiliaryLinfty}
Let $\g$ be a Lie algebra. We denote by $\big(\E(X),\ell\big)$ the $L_\infty$-algebra 
whose underlying cochain complex
\begin{subequations}
\begin{flalign}
\big(\E(X),\dd\big)\,:=\,\Big(\mathrm{Tot}^\oplus\Big(\Omega^{\bullet,(0,\bullet)}(X,\g)\Big),\dd_\Sigma+\delbar \Big)
\end{flalign}
is given by totalizing the $\g$-valued de Rham-Dolbeault bicomplex \eqref{eqn:dRDol} 
and whose arity $n\geq 2$ structure maps are $\ell_n=0$, for all $n\geq 3$, and 
\begin{flalign}
\ell_2\,:\, \E(X)^{\otimes 2}~\longrightarrow~\E(X)~~,\quad
\alpha\otimes\beta ~\longmapsto~ \wedgebracket{\alpha}{\beta}
\end{flalign}
\end{subequations}
obtained as in Example \ref{ex:CSfield} by combining the $\wedge$-product
with the Lie bracket $[\cdot,\cdot]$ on $\g$.
\end{defi}

\begin{rem}\label{rem:auxiliaryLinfty}
Comparing Definition \ref{def:auxiliaryLinfty} with Example \ref{ex:CSfield},
it is evident that the $L_\infty$-algebra $\big(\E(X),\ell\big)$ is a semi-holomorphic
variant on product manifolds $X=\Sigma\times C$
of the usual Chern-Simons theory on an oriented $3$-manifold $M$. Such types
of models are usually called \textit{topological-holomorphic field theories}
in the literature, see e.g.\ \cite{GRW}. Using the geometric perspective 
from Remark \ref{rem:FMP}, we can provide a more quantitative 
interpretation of this theory: The Maurer-Cartan elements of the
$L_\infty$-algebra $\big(\E(X),\ell\big)$ are pairs
$A \oplus \xi \in \E(X)^1 = \Omega^{1,(0,0)}(X,\g) \oplus \Omega^{0,(0,1)}(X,\g)$
satisfying 
\begin{flalign}
\dd_\Sigma A + \tfrac{1}{2}\,\wedgebracket{A}{A} \,=\,0~~,\quad
\delbar A + \dd_\Sigma \xi + \wedgebracket{A}{\xi}\,=\,0\quad.
\end{flalign}
These equations are familiar from the point of view of
$4d$ semi-holomorphic Chern-Simons theory and $2d$ integrable field theory \cite{CY3}:
The first equation is a flatness condition along $\Sigma$
of the connection $A\in\Omega^{1,(0,0)}(X,\g)$ and the second
equation demands that $A$ is holomorphic, up to a correction
which is determined by the additional field $\xi \in \Omega^{0,(0,1)}(X,\g)$.
This additional field is often denoted by $\xi = A_{\bar{z}}\,\dd \bar{z}$ in the literature.
\end{rem}

We conclude this subsection by substantiating our claim
from above that the $L_\infty$-algebra $\big(\E(X),\ell\big)$ 
from Definition \ref{def:auxiliaryLinfty} is insufficient
in order to capture interesting integrable field theory features of $4d$
semi-holomorphic Chern-Simons theory. For this we use
the homological techniques from Subsection \ref{subsec:homologicalalgebra}
in order to provide an equivalent model for the $L_\infty$-algebra
$\big(\E(X),\ell\big)$ in which the $\delbar$-cohomologies are computed.
(In physics terminology, one could say that we `integrate out' the
fields on the Riemann surface $C=\CP$, resulting in a field theory 
defined only on spacetime $\Sigma$.) In order to carry out these computations,
we choose a strong deformation retract
\begin{equation}\label{eqn:defretauxiliary0}
\begin{tikzcd}
(\bbC,0)\,\cong\,\big(\mathsf{H}^\bullet \,\Omega^{0,\bullet}(C),0\big) 
\ar[r,shift right=-1ex,"i"] & \ar[l,shift right=-1ex,"p"] \big(\Omega^{0,\bullet}(C),\delbar \big) \ar[loop,out=-20,in=20,distance=20,swap,"h"]
\end{tikzcd}
\end{equation}
from the Dolbeault complex to its cohomology, which as a consequence of  
Theorem \ref{theo:cohomologies} is concentrated in degree $0$ and one-dimensional,
see also Example \ref{ex:CP1}. We further assume that this strong deformation retract
is continuous with respect to the usual Fr{\'e}chet topologies
and note that such continuous strong deformation retracts can be obtained for instance 
by using Hodge theory on $C$, see Appendix \ref{app:Hodgetheory}. Continuity
allows us to extend this strong deformation retract along the completed projective tensor product
to a family of strong deformation retracts for the bicomplex \eqref{eqn:dRDol}. 
Applying the totalization construction from Appendix \ref{app:totaldefretract},
one obtains a strong deformation retract
\begin{equation}\label{eqn:defretauxiliary}
\begin{tikzcd}
\big( \E(\Sigma),\dd^\prime\big)\,:=\,\big(\Omega^\bullet(\Sigma,\g),\dd_\Sigma\big)
\ar[r,shift right=-1ex,"i"] & \ar[l,shift right=-1ex,"\widetilde{p}"] \big(\E(X),\dd \big) 
\ar[loop,out=-20,in=20,distance=20,swap,"h"]
\end{tikzcd}
\end{equation}
from the underlying cochain complex of the $L_\infty$-algebra from
Definition \ref{def:auxiliaryLinfty} to the $\g$-valued de Rham complex of $\Sigma$.
Here we have used the simplifications from Remark \ref{rem:deformedtotal},
which clearly apply to our present case at hand: The $i$ map
in \eqref{eqn:defretauxiliary0} is simply the assignment 
$\bbC\hookrightarrow \Omega^{0,\bullet}(C)\,,~\lambda\mapsto \lambda$ 
of constant functions on $C$, hence the de Rham differential $\dd_\Sigma$ along $\Sigma$
restricts along the map $i: \Omega^\bullet(\Sigma,\g)\to
\Omega^{\bullet,(0,\bullet)}(X,\g)$.
Applying now the Homotopy Transfer Theorem \ref{theo:homotopytransfer},
we obtain
\begin{propo}\label{prop:transferauxiliary}
The $L_\infty$-algebra $\big(\E(X),\ell\big)$ from Definition \ref{def:auxiliaryLinfty} 
admits an equivalent description in terms of the $L_\infty$-algebra 
$\big(\E(\Sigma),\ell^\prime\big)$ whose underlying cochain complex
$\big(\E(\Sigma),\dd^\prime\big) = \big(\Omega^\bullet(\Sigma,\g),\dd_\Sigma\big)$ 
is the $\g$-valued de Rham complex of $\Sigma$
and whose higher arity structure maps are given by homotopy transfer 
along the strong deformation retract \eqref{eqn:defretauxiliary}. The transferred
structure maps read explicitly as $\ell_n^\prime = 0$, for all $n\geq 3$, and 
\begin{flalign}
\ell^\prime_2\,:\, \E(\Sigma)^{\otimes 2}~\longrightarrow~\E(\Sigma)~~,\quad
\alpha\otimes\beta ~\longmapsto~ \wedgebracket{\alpha}{\beta}
\end{flalign}
obtained by combining the $\wedge$-product with the Lie bracket $[\cdot,\cdot]$ on $\g$.
\end{propo}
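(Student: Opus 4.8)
The plan is to apply the Homotopy Transfer Theorem \ref{theo:homotopytransfer} to the $L_\infty$-algebra $\big(\E(X),\ell\big)$ along the strong deformation retract \eqref{eqn:defretauxiliary}, and then to observe that the special scenario described at the end of Remark \ref{rem:homotopytransfer} applies verbatim. Since $\big(\E(X),\ell\big)$ is a differential graded Lie algebra (its only non-trivial higher structure map is $\ell_2$), the transferred structure is governed by the tree formulas \eqref{eqn:ell2transfer} and \eqref{eqn:ellntransfer}. The entire content of the proposition therefore reduces to verifying the hypothesis of the simplification in Remark \ref{rem:homotopytransfer}, namely that the bracket $\ell_2=\wedgebracket{\cdot}{\cdot}$ on $\E(X)$ restricts along the inclusion $i:\Omega^\bullet(\Sigma,\g)\to\Omega^{\bullet,(0,\bullet)}(X,\g)$.

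First I would make the map $i$ explicit. As noted just below \eqref{eqn:defretauxiliary}, the degree-$0$ inclusion $i:\bbC\hookrightarrow\Omega^{0,\bullet}(C)$ sends a constant $\lambda$ to the corresponding constant function on $C$, and after totalizing along the completed projective tensor product (Appendix \ref{app:totaldefretract}) the induced map $i:\Omega^\bullet(\Sigma,\g)\to\E(X)$ sends a form $\alpha$ on $\Sigma$ to $\alpha\otimes\oone_C$, i.e.\ it regards $\alpha$ as an $X$-form with trivial $C$-dependence and no $\dd\ovr z$ legs. The key point is then a short computation: for $\alpha,\beta\in\Omega^\bullet(\Sigma,\g)$ with images $i(\alpha)=\alpha\otimes\oone_C$ and $i(\beta)=\beta\otimes\oone_C$, the bracket in $\E(X)$ satisfies
\begin{flalign}
\ell_2\big(i(\alpha)\otimes i(\beta)\big)\,=\,\wedgebracket{\alpha\otimes\oone_C}{\beta\otimes\oone_C}\,=\,\wedgebracket{\alpha}{\beta}\otimes\oone_C\,=\,i\big(\wedgebracket{\alpha}{\beta}\big)\quad,
\end{flalign}
since the Dolbeault legs are the trivial constant function $\oone_C$ whose wedge product with itself is again $\oone_C$, and the de Rham wedge and Lie bracket on the $\Sigma$-factor are exactly those of the $\g$-valued de Rham complex. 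This establishes the required commuting square, so $\ell_2$ restricts along $i$ with restriction equal to the $\g$-valued de Rham bracket $\wedgebracket{\cdot}{\cdot}$ on $\Omega^\bullet(\Sigma,\g)$.

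Having verified the hypothesis, I would invoke the conclusion of Remark \ref{rem:homotopytransfer} directly. The transferred binary bracket is $\ell_2^\prime=\widetilde p\circ\ell_2\circ(i\otimes i)=\widetilde p\circ i\circ\ell_2=\ell_2$, using $\widetilde p\,i=\id$; here $\widetilde p$ is the (possibly perturbed) projection of the totalized retract, but the computation only uses $\widetilde p\, i=\id$, which holds for any deformation retract. This gives the stated formula $\ell_2^\prime=\wedgebracket{\cdot}{\cdot}$. The vanishing of the higher brackets $\ell_n^\prime=0$ for $n\geq 3$ then follows because every tree in \eqref{eqn:ellntransfer} with at least two vertices contains an internal edge decorated by $h$ immediately preceded by a leaf decorated by $i$, whence the side condition $h\,i=0$ of the strong deformation retract forces the contribution to vanish. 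I would remark that continuity of the retract is what legitimizes these topological-tensor-product manipulations, but it plays no role beyond guaranteeing that \eqref{eqn:defretauxiliary} is a bona fide strong deformation retract. There is no genuine obstacle here: the only thing to check is the restriction property of $\ell_2$, and the result is in fact a clean illustration of the degenerate case of homotopy transfer in which the transferred $L_\infty$-algebra is again a differential graded Lie algebra.
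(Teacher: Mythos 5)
Your proposal is correct and follows exactly the paper's own (one-line) proof, which simply invokes the simplification at the end of Remark \ref{rem:homotopytransfer}; you merely fill in the details of verifying that $\ell_2$ restricts along $i$ and that the side condition $h\,i=0$ kills the higher trees. (The only minor imprecision is in the tree argument: the vanishing comes from a vertex \emph{all} of whose inputs are leaves, so that $h\circ\ell_2\circ(i\otimes i)=h\circ i\circ\ell_2=0$, rather than from a single leaf feeding an $h$-edge.)
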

\begin{proof}
This is a direct consequence of the simplification explained
at the end of Remark \ref{rem:homotopytransfer}.
\end{proof}

\begin{rem}\label{rem:transferauxiliary}
Let us provide an interpretation of this result.
The $L_\infty$-algebra $\big(\E(X),\ell\big)$ on the $4$-dimensional manifold $X=\Sigma\times C$  
admits via Proposition \ref{prop:transferauxiliary} an equivalent description
in terms of the $L_\infty$-algebra $\big(\E(\Sigma),\ell^\prime\big)$ which is 
defined intrinsically on the $2$-dimensional spacetime $\Sigma$.
Note that the latter is purely topological and hence it is
insufficient to describe non-trivial integrable field theory features. Indeed, its 
Maurer-Cartan elements are given by connections $A\in \E(\Sigma)^1 = \Omega^1(\Sigma,\g)$ on 
$\Sigma$ satisfying the flatness
condition $\dd_\Sigma A +\frac{1}{2}\,\wedgebracket{A}{A}=0 $, modulo the usual gauge transformations
$\delta_\epsilon A = \dd_\Sigma \epsilon + \wedgebracket{A}{\epsilon}$, 
for $\epsilon \in \Omega^0(\Sigma,\g)$. We will show in the subsections below 
that this (undesired) behavior is drastically altered
by implementing suitable singularities and boundary conditions 
into the $L_\infty$-algebra $\big(\E(X),\ell\big)$.
\end{rem}

\subsection{\label{subsec:singular}Singularities}
Our observation in the previous subsection that the $L_\infty$-algebra
$\big(\E(X),\ell\big)$ from Definition \ref{def:auxiliaryLinfty} does not yet
capture interesting integrability features is not at all surprising:
We have so far neglected one of the key ingredients of $4d$
semi-holomorphic Chern-Simons theory, which is given by a choice
of meromorphic $1$-form $\omega\in \M^{(1)}(C)\setminus\{0\}$ encoding the
meromorphic features of integrable field theories. This meromorphic
$1$-form is used in the literature on $4d$ semi-holomorphic Chern-Simons theory
in multiple, but related, ways. On the one hand, it manifestly enters the action 
functional $S(\A)=\int_X \omega \wedge \mathsf{CS}(\A)$ 
proposed by Costello and Yamazaki \cite{CY3}. On the other hand, 
the zeros and poles of $\omega$ determine special points in the
Riemann surface $C$ at which the field theory develops defects, see e.g.\ 
\cite{CY3,DLMV,BSV}. The concrete pattern is as follows: The fields
are allowed to develop certain singularities at the zeros of $\omega$,
while they are constrained by suitable boundary conditions at the poles of $\omega$.
The main guiding principle underlying the specific choices of the appropriate singularities and 
boundary conditions is the desire to render the action functional 
$S(\A)=\int_X \omega \wedge \mathsf{CS}(\A)$ well-defined and gauge invariant,
see e.g.\ \cite{BSV}.
\sk

In this subsection we focus on the singularities of the fields 
at the zeros of $\omega$ and provide an efficient
description in terms of the theory of divisors and their associated
holomorphic line bundles from Subsection \ref{subsec:Riemannsurfaces}.
Let us fix any meromorphic $1$-form $\omega\in \M^{(1)}(C)\setminus \{0\}$ 
and recall from Example \ref{ex:divisor} that there exists an associated
divisor $(\omega) : C\to \bbZ$ which records the locations and orders of its
zeros and poles. This divisor decomposes
\begin{flalign}\label{eqn:divisordecomposition}
(\omega)\,=\,(\omega)_{0} + (\omega)_{\infty}
\end{flalign}
into the non-negative divisor $(\omega)_{0}\geq 0$
encoding the zeros of $\omega$ and the non-positive divisor
$(\omega)_{\infty}\leq 0$ encoding the poles of $\omega$.
It is worthwhile to recall from \cite[Theorem 17.12]{Forster} that the degree of the 
divisor $(\omega)$ is determined by the genus $g$ of $C$ via $\deg(\omega) = 2g-2$,
hence in our current scenario $C=\CP$ we have that $\deg(\omega)=-2$.
\sk

A key feature of the allowed singularities of the fields 
is that one must treat the individual components 
of differential forms along the spacetime $\Sigma$ differently 
in order to obtain singularities which are compatible with the $\wedge$-product.
In order to implement this feature, we use the Hodge operator
$\ast_\Sigma : \Omega^1(\Sigma)\to \Omega^1(\Sigma)$ on spacetime $\Sigma$
in order to decompose the vector space of $1$-forms
\begin{flalign}
\Omega^1(\Sigma)\,=\,\Omega^+(\Sigma)\oplus \Omega^-(\Sigma)
\end{flalign}
into its chiral and anti-chiral components, i.e.\ $\alpha \in \Omega^\pm(\Sigma)$
if and only if $\ast_\Sigma\alpha = \pm\alpha$. When applied to the 
de Rham-Dolbeault bicomplex $\Omega^{\bullet,(0,\bullet)}(X)$ from 
\eqref{eqn:dRDol}, this yields the 
following decomposition
\begin{flalign}\label{eqn:dRDoldecomposed}
\begin{gathered}
\Omega^{\bullet,(0,\bullet)}(X)\,=\,
\left(\parbox{4cm}{\xymatrix@C=2em@R=0em{
~&~\Omega^{+,(0,\bullet)}(X)  \ar[dr]^-{\dd_\Sigma}~&~ \\
\Omega^{0,(0,\bullet)}(X) \ar[ru]^-{\dd_\Sigma^+}\ar[rd]_-{\dd_\Sigma^-}~&~ \oplus ~&~ \Omega^{2,(0,\bullet)}(X)\\
~&~\Omega^{-,(0,\bullet)}(X)  \ar[ur]_-{\dd_\Sigma}~&~
}}\right)
\end{gathered}\quad.
\end{flalign}
Choosing now any decomposition 
\begin{flalign}\label{eqn:omegapmdecomposition}
(\omega)_{0}\,=\,(\omega)_{0}^+ + (\omega)_{0}^-
\end{flalign}
of the non-negative divisor $(\omega)_{0}\geq 0$ into two non-negative 
divisors $(\omega)_{0}^+ \geq 0$ and $(\omega)_{0}^- \geq 0$, 
we use the associated holomorphic line bundles from Construction \ref{constr:linebundles}
in order to twist the individual components of \eqref{eqn:dRDoldecomposed} according to
\begin{flalign}\label{eqn:dRDolsingular}
\begin{gathered}
\Omega_{\mathrm{sgl}}^{\bullet,(0,\bullet)}(X)\,:=\,
\left(\parbox{4cm}{\xymatrix@C=2em@R=0em{
~&~\Omega^{+,(0,\bullet)}\big(X,L_{(\omega)_0^+}\big)  \ar[dr]^-{\dd_\Sigma}~&~ \\
\Omega^{0,(0,\bullet)}(X) \ar[ru]^-{\dd_\Sigma^+}\ar[rd]_-{\dd_\Sigma^-}~&~ \oplus ~&~ \Omega^{2,(0,\bullet)}\big(X,L_{(\omega)_0}\big)\\
~&~\Omega^{-,(0,\bullet)}\big(X,L_{(\omega)_0^-}\big)  \ar[ur]_-{\dd_\Sigma}~&~
}}\right)
\end{gathered}\quad,
\end{flalign}
where in the displayed differentials we suppress the holomorphic line bundle morphisms
from Construction \ref{constr:linebundles} which are associated with
$0\leq (\omega)_0^\pm$ and $ (\omega)_0^\pm \leq (\omega)_0$. Note that the key feature
of the decomposition \eqref{eqn:omegapmdecomposition} and the specific 
assignment of holomorphic line bundles in \eqref{eqn:dRDolsingular} 
is that the $\wedge$-product is well-defined on $\Omega_{\mathrm{sgl}}^{\bullet,(0,\bullet)}(X)$ 
as a consequence of 
\begin{flalign}
L_{(\omega)_0^+}\otimes L_{(\omega)_0^-}\,\cong\,L_{(\omega)_0^+ + (\omega)_0^-}\, \cong\, L_{(\omega)_0}\quad.
\end{flalign}
With these preparations, we can now define an improved variant of the $L_\infty$-algebra
from Definition \ref{def:auxiliaryLinfty} which encodes the desired singularities of the fields.
\begin{defi}\label{def:singularLinfty}
Let $\g$ be a Lie algebra and $\omega\in \M^{(1)}(C)\setminus \{0\}$ a
meromorphic $1$-form. For any choice of decomposition as 
in \eqref{eqn:omegapmdecomposition}, we denote by $\big(\L(X),\ell\big)$ the 
$L_\infty$-algebra whose underlying cochain complex
\begin{subequations}
\begin{flalign}
\big(\L(X),\dd\big)\,:=\,\Big(\mathrm{Tot}^\oplus\Big(\Omega_{\mathrm{sgl}}^{\bullet,(0,\bullet)}\big(X,\g\big)\Big),\dd_\Sigma+\delbar \Big)
\end{flalign}
is given by totalizing the $\g$-valued twisted de Rham-Dolbeault bicomplex \eqref{eqn:dRDolsingular} 
and whose arity $n\geq 2$ structure maps are $\ell_n=0$, for all $n\geq 3$, and 
\begin{flalign}
\ell_2\,:\, \L(X)^{\otimes 2}~\longrightarrow~\L(X)~~,\quad
\alpha\otimes\beta ~\longmapsto~ \wedgebracket{\alpha}{\beta}
\end{flalign}
\end{subequations}
obtained by combining the $\wedge$-product with the Lie bracket $[\cdot,\cdot]$ on $\g$.
\end{defi}

In order to obtain a better understanding of this field theory,
let us use again the homological techniques from Subsection \ref{subsec:homologicalalgebra}
in order to determine an equivalent model for the $L_\infty$-algebra
$\big(\L(X),\ell\big)$ in which the $\delbar$-cohomologies are computed, i.e.\
the fields on the Riemann surface $C=\CP$ are `integrated out'. In analogy 
to the previous subsection, we do this by choosing continuous strong
deformation retracts from the (twisted) Dolbeault complexes to their cohomologies,
which can be obtained for instance via Hodge theory, see Appendix \ref{app:Hodgetheory}.
Since all divisors appearing in \eqref{eqn:dRDolsingular} are of degree $\geq 0$,
the results in Theorem \ref{theo:cohomologies}, see also Example \ref{ex:CP1},
imply that these strong deformation retracts are of the form
\begin{equation}\label{eqn:defretsingular0}
\begin{tikzcd}
\big(\O_D(C),0\big) \ar[r,shift right=-1ex,"i_D"] & \ar[l,shift right=-1ex,"p_D"] \big(\Omega^{0,\bullet}(C,L_D),\delbar \big) \ar[loop,out=-20,in=20,distance=30,swap,"h_D"]
\end{tikzcd}\qquad \big(\text{for $\deg(D)\geq 0$}\big)\quad.
\end{equation}
The map $i_D$ is given in terms of the isomorphism \eqref{eqn:O_DGammaisos} by
$i_D : \O_D(C) \overset{\cong}\to \O(C,L_D) \subset \Omega^{0,\bullet}(C,L_D)$.
Extending these continuous strong deformation retracts
along the completed projective tensor product
to a family of strong deformation retracts for the bicomplex \eqref{eqn:dRDolsingular},
we can apply the totalization construction from Appendix \ref{app:totaldefretract}.
From commutative diagrams as in \eqref{eqn:DtoDprimediagram}, it follows that
the simplification from Remark \ref{rem:deformedtotal} applies to the present case,
leading to a strong deformation retract
\begin{subequations}\label{eqn:defretsingular}
\begin{equation}\label{eqn:defretsingular1}
\begin{tikzcd}
\big(\L(\Sigma),\dd^\prime\big)
\ar[r,shift right=-1ex,"i"] & \ar[l,shift right=-1ex,"\widetilde{p}"] \big(\L(X),\dd \big) 
\ar[loop,out=-20,in=20,distance=20,swap,"h"]
\end{tikzcd}
\end{equation}
from the underlying cochain complex of the $L_\infty$-algebra $\big(\L(X),\ell\big)$
from Definition \ref{def:singularLinfty} to the cochain complex
\begin{flalign}\label{eqn:defretsingular2}
\begin{gathered}
\big(\L(\Sigma),\dd^\prime\big)\,:=\,
\left(\parbox{4cm}{\xymatrix@C=2em@R=0em{
~&~\Omega^{+}(\Sigma,\g) \otimes \O_{(\omega)_0^+}(C) \ar[dr]^-{\dd_\Sigma}~&~ \\
\Omega^{0}(\Sigma,\g) \ar[ru]^-{\dd_\Sigma^+}\ar[rd]_-{\dd_\Sigma^-}~&~ \oplus ~&~ \Omega^{2}(\Sigma,\g) 
\otimes \O_{(\omega)_0}(C)\\
~&~\Omega^{-}(\Sigma,\g) \otimes \O_{(\omega)_0^-}(C) \ar[ur]_-{\dd_\Sigma}~&~
}}\right)\quad.
\end{gathered}
\end{flalign}
\end{subequations}
Comparing this result with the previous case \eqref{eqn:defretauxiliary} 
where no singularities of the fields are allowed, we observe that the new
feature is the appearance of the (finite-dimensional) 
vector spaces $\O_D(C)$ of divisor-conditioned meromorphic functions,
for all three divisors entering Definition \ref{def:singularLinfty}.
Applying now the Homotopy Transfer Theorem \ref{theo:homotopytransfer},
we obtain
\begin{propo}\label{prop:transfersingular}
The $L_\infty$-algebra $\big(\L(X),\ell\big)$ from Definition \ref{def:singularLinfty}
admits an equivalent description in terms of the $L_\infty$-algebra 
$\big(\L(\Sigma),\ell^\prime\big)$ whose underlying cochain complex
is given by \eqref{eqn:defretsingular2}
and whose higher arity structure maps are given by homotopy transfer 
along the strong deformation retract \eqref{eqn:defretsingular1}. The transferred
structure maps read explicitly as $\ell_n^\prime = 0$,
for all $n\geq 3$, and 
\begin{flalign}
\ell^\prime_2\,:\, \L(\Sigma)^{\otimes 2}~\longrightarrow~\L(\Sigma)~~,\quad
\alpha\otimes\beta ~\longmapsto~ \wedgebracket{\alpha}{\beta}
\end{flalign}
obtained by combining the $\wedge$-product with the Lie bracket $[\cdot,\cdot]$ on $\g$.
\end{propo}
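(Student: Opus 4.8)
The plan is to proceed exactly as in the proof of Proposition \ref{prop:transferauxiliary}, reducing the computation of the transferred structure maps to the simplification described at the end of Remark \ref{rem:homotopytransfer}. That simplification applies as soon as the binary bracket $\ell_2$ restricts along the inclusion $i$, i.e.\ as soon as the commutative square
\begin{flalign}
\begin{gathered}
\xymatrix@C=3em{
\ar[d]_-{i\otimes i}\L(\Sigma)\otimes \L(\Sigma) \ar@{-->}[r]^-{\ell_2}~&~ \L(\Sigma)\ar[d]^-{i}\\
\L(X)\otimes \L(X)\ar[r]_-{\ell_2}~&~\L(X)
}
\end{gathered}
\end{flalign}
exists. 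Once this is established, the side condition $h\,i=0$ forces the higher transferred brackets $\ell_n^\prime=0$ to vanish for all $n\geq 3$, while the surviving bracket is the restriction $\ell_2^\prime=p\circ\ell_2\circ(i\otimes i)=\ell_2$. Thus the entire content of the Proposition reduces to verifying that the wedge-bracket $\wedgebracket{\cdot}{\cdot}$ preserves the image of $i$.

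First I would recall that $i$ is obtained by totalizing the strong deformation retracts \eqref{eqn:defretsingular0}, whose components $i_D$ embed the finite-dimensional spaces $\O_D(C)$ of $D$-conditioned meromorphic functions as the holomorphic sections $\O(C,L_D)\subset\Omega^{0,\bullet}(C,L_D)$ via the isomorphism \eqref{eqn:O_DGammaisos}. Since the wedge product on $X=\Sigma\times C$ factorizes into the wedge products on the two factors, the stability of the image under $\ell_2$ splits into a $\Sigma$-part and a $C$-part. For the $\Sigma$-part I would use that in two dimensions the Hodge operator decomposes $\Omega^1(\Sigma)$ into rank-one eigenbundles, so that $\Omega^{\pm}(\Sigma)\wedge\Omega^{\pm}(\Sigma)=0$ while $\Omega^+(\Sigma)\wedge\Omega^-(\Sigma)\subseteq\Omega^2(\Sigma)$, and moreover $\Omega^0(\Sigma)\wedge\Omega^{k}(\Sigma)\subseteq\Omega^{k}(\Sigma)$ and $\Omega^1(\Sigma)\wedge\Omega^2(\Sigma)=0$. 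For the $C$-part I would use two facts: the product of two holomorphic sections is again a holomorphic section (being $\delbar$-closed and of Dolbeault degree $0$), hence lies in the image of $i_D$; and the product of a $(\omega)_0^+$-conditioned with a $(\omega)_0^-$-conditioned meromorphic function is $(\omega)_0$-conditioned, since $(fg)=(f)+(g)\geq-\big((\omega)_0^++(\omega)_0^-\big)=-(\omega)_0$ by the decomposition \eqref{eqn:omegapmdecomposition}. Geometrically this is precisely the line bundle isomorphism $L_{(\omega)_0^+}\otimes L_{(\omega)_0^-}\cong L_{(\omega)_0}$ recorded above Definition \ref{def:singularLinfty}, which was engineered exactly so that the $\wedge$-product is well-defined on the twisted complex \eqref{eqn:dRDolsingular}.

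The main obstacle I anticipate is the bookkeeping of the interplay between these two factorizing pieces: one must confirm that in every nonvanishing pairing of degree components the chiral label on the $\Sigma$-side and the divisor on the $C$-side match the target component of \eqref{eqn:defretsingular2}. The nonvanishing pairings are those involving the degree-$0$ component $\Omega^0(\Sigma,\g)$, which carries the trivial divisor and hence acts as a unit reproducing the divisor of the other factor, together with the single pairing $\Omega^+\otimes\Omega^-\to\Omega^2$, for which the divisors add as $(\omega)_0^++(\omega)_0^-=(\omega)_0$; all other pairings already vanish on the $\Sigma$-factor by the relations above. Granting this matching, the commutative square exists, the simplification of Remark \ref{rem:homotopytransfer} applies verbatim, and the asserted form of the transferred structure maps follows, completing the proof.
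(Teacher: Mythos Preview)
Your proposal is correct and takes essentially the same approach as the paper: both reduce the computation to the simplification at the end of Remark~\ref{rem:homotopytransfer}, which applies because $\ell_2$ restricts along the inclusion $i$. The paper's proof is a one-line invocation of that remark, whereas you spell out the verification that the wedge-bracket preserves the image of $i$ by factorizing through the $\Sigma$- and $C$-parts and tracking the divisor arithmetic; this elaboration is accurate and fills in precisely what the paper leaves implicit.
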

\begin{proof}
This is a direct consequence of the simplification explained
at the end of Remark \ref{rem:homotopytransfer}.
\end{proof}

\begin{rem}\label{rem:transfersingular}
The $L_\infty$-algebra $\big(\L(\Sigma),\ell^\prime\big)$  from 
Proposition \ref{prop:transfersingular} describes much more interesting 
features related to integrability than the one without singularities 
from Proposition \ref{prop:transferauxiliary}, see also Remark \ref{rem:transferauxiliary}. 
Indeed, its Maurer-Cartan elements are given by the chiral components
\begin{subequations}
\begin{flalign}
A_+\oplus A_- \,\in\, \L(\Sigma)^1\,=\, 
\Big(\Omega^{+}(\Sigma,\g) \otimes \O_{(\omega)_0^+}(C)\Big) 
\oplus \Big(\Omega^{-}(\Sigma,\g) \otimes \O_{(\omega)_0^-}(C)\Big)
\end{flalign}
of a connection $1$-form $A = A_+ + A_-\in \Omega^{1}(\Sigma,\g)\widehat{\otimes}\M(C)$
on $\Sigma$ which is allowed to depend meromorphically on $C$.
Note that the poles of $A$ are controlled by the divisors $(\omega)_0^+$ and $(\omega)_0^-$. 
The Maurer-Cartan equation is the flatness condition
\begin{flalign}
\dd_\Sigma A +\tfrac{1}{2}\,\wedgebracket{A}{A}\,=\,0
\end{flalign}
\end{subequations}
along $\Sigma$. Observing that these are precisely the desired features
of a Lax connection for a $2d$ integrable field theory, one can interpret
the $L_\infty$-algebra $\big(\L(\Sigma),\ell^\prime\big)$ as describing a theory of Lax connections.
It is worthwhile to highlight that the gauge transformations 
$\delta_\epsilon A = \dd_{\Sigma}\epsilon + \wedgebracket{A}{\epsilon}$ are 
given by parameters $\epsilon\in\Omega^0(\Sigma,\g)$ that are constant 
along the Riemann surface $C$, which limits the amount of gauge freedom
in the theory of Lax connections $A$. Loosely speaking, 
one can say that only the zero-mode $A_0$ of $A$ (i.e.\ its constant part along $C$) transforms 
as a gauge field $\delta_\epsilon A_0 = \dd_{\Sigma}\epsilon + \wedgebracket{A_0}{\epsilon}$,
while the higher modes $A_{>0}$ of $A$ (i.e.\ its non-constant parts along $C$) 
transform as matter fields in the adjoint representation $\delta_{\epsilon} A_{>0} = \wedgebracket{A_{>0}}{\epsilon}$.
\end{rem}

\subsection{\label{subsec:boundary}Boundary conditions}
While the singularities of fields from the previous subsection
are located at the zeros of the given meromorphic 
$1$-form $\omega\in\M^{(1)}(C)\setminus\{0\}$,
boundary conditions have to be imposed at the poles of $\omega$.
In the language of divisors, this means that for the study of boundary conditions
the non-positive part $(\omega)_{\infty}\leq 0$ of the divisor 
\eqref{eqn:divisordecomposition} is relevant. We will now discuss a class 
of \textit{local} boundary conditions, which includes the boundary conditions
considered in \cite{CY3} but is less general than the (in general non-local on $C$) 
isotropic boundary conditions from \cite{BSV}. These local boundary conditions
have the advantage that they can be implemented and studied very efficiently 
by using divisors and their associated holomorphic line bundles.
\begin{defi}\label{def:bdycondition}
Fix any meromorphic $1$-form $\omega\in \M^{(1)}(C)\setminus \{0\}$ 
and any choice of decomposition $(\omega)_0 = (\omega)_0^+ + (\omega)_0^-$ as in \eqref{eqn:omegapmdecomposition}.
A \textit{local boundary condition} is a choice of partially ordered non-positive divisors
\begin{subequations}\label{eqn:bdycondition}
\begin{equation}\label{eqn:bdycondition1}
\begin{tikzcd}[column sep=-0.7em, row sep=-0.7em]
&& (\omega)_\infty^+  && \\
&\mathbin{\rotatebox[origin=c]{45}{$\leq$}} & &\mathbin{\rotatebox[origin=c]{-45}{$\leq$}} &\\
(\omega)_\infty^0  &&  &&(\omega)_\infty^2 \,\leq \,0\\
&\mathbin{\rotatebox[origin=c]{-45}{$\leq$}} & &\mathbin{\rotatebox[origin=c]{45}{$\leq$}} &\\
&& (\omega)_\infty^- &&
\end{tikzcd}\quad,
\end{equation}
such that
\begin{flalign}\label{eqn:bdycondition2}
(\omega)_\infty^0 + (\omega)_\infty^2 \,=\,(\omega)_\infty\,=\,(\omega)_\infty^+ + (\omega)_\infty^-
\end{flalign}
and
\begin{flalign}\label{eqn:bdycondition3}
\deg\big((\omega)_0^+ +(\omega)_\infty^+\big) \,=\,\deg\big((\omega)_0^- +(\omega)_\infty^-\big)\,=\,g-1\quad,
\end{flalign}
\end{subequations} 
where $g$ denotes the genus of $C$. (In the present case $C=\CP$, we have that $g=0$.)
\end{defi}

\begin{rem}
Let us provide a motivation for the boundary conditions in Definition \ref{def:bdycondition}.
The partial ordering of divisors in \eqref{eqn:bdycondition1} is essential to define
the twisted de Rham-Dolbeault bicomplex \eqref{eqn:dRDolboundary} which encodes
both singularities and boundary conditions of the fields. The condition \eqref{eqn:bdycondition2}
is a kind of `saturation condition' which is needed to obtain a non-degenerate cyclic structure
and action functional, see Subsection \ref{subsec:cyclic} below. The last condition 
\eqref{eqn:bdycondition3} can be motivated by counting degrees of freedom:
The chiral components of the Lax connection from Proposition \ref{prop:transfersingular}
and Remark \ref{rem:transfersingular} take values in $\O_{(\omega)_0^\pm}(C)$.
Using Theorem \ref{theo:cohomologies} and Example \ref{ex:CP1},
one finds that 
\begin{flalign}
\O_{(\omega)_0^\pm}(C) \,\cong\, \bbC^{1+\deg(\omega)_0^\pm}\quad,
\end{flalign}
hence one should impose $1+\deg(\omega)_0^\pm$ many boundary conditions in order 
to fix the Lax connection by the fields of an integrable field theory.
This is precisely what one gets by rearranging 
the conditions \eqref{eqn:bdycondition3} as
\begin{flalign}
\deg(\omega)_\infty^{\pm} \,=\, -1 - \deg(\omega)_0^{\pm}
\end{flalign} 
and observing that the number of boundary conditions is minus the 
degree of $(\omega)_\infty^{\pm}\leq 0$.
\end{rem}

\begin{ex}
The prime example of a meromorphic $1$-form on $C=\CP$ 
which is relevant for integrable field theory is given by
\begin{flalign}
\omega\,=\,\frac{(1-z^2)}{z^2}~\dd z\quad,
\end{flalign}
where $z$ is the standard holomorphic coordinate on $\CP\setminus\{\infty\}$.
This meromorphic $1$-form has two simple zeros at $z=1$ and $z=-1$, 
and two double poles at $z=0$ and $z=\infty$. 
Splitting the non-negative divisor $(\omega)_0 = \delta_{1} + \delta_{-1}$ 
according to $(\omega)_{0}^\pm = \delta_{\pm 1}$, a possible choice
of local boundary condition in the sense of Definition \ref{def:bdycondition} is given by
\begin{flalign}
(\omega)_{\infty}^0\,=\,(\omega)_{\infty}^\pm \,=\, (\omega)_{\infty}^2 
\,=\, \tfrac{1}{2}\, (\omega)_\infty \,=\, -\delta_{0}-\delta_{\infty}\quad.
\end{flalign}
This choice is relevant for the construction of the principal chiral model, 
see e.g.\ \cite[Section 7]{Lacroix} for a well-presented review.
\end{ex}

Given any local boundary condition as in Definition \ref{def:bdycondition},
we use the associated holomorphic line bundles from Construction \ref{constr:linebundles} 
to twist the components of the de Rham-Dolbeault bicomplex with singularities from \eqref{eqn:dRDolsingular},
which leads to the bicomplex
\begin{flalign}\label{eqn:dRDolboundary}
\begin{gathered}
\Omega_{\mathrm{sgl},\mathrm{bdy}}^{\bullet,(0,\bullet)}(X)\,:=\,
\left(\parbox{4cm}{\xymatrix@C=1em@R=0em{
~&~\Omega^{+,(0,\bullet)}\big(X,L_{(\omega)_0^+ + (\omega)_{\infty}^+}\big)  \ar[dr]^-{\dd_\Sigma}~&~ \\
\Omega^{0,(0,\bullet)}\big(X,L_{(\omega)_{\infty}^0}\big) \ar[ru]^-{\dd_\Sigma^+}\ar[rd]_-{\dd_\Sigma^-}~&~ \oplus ~&~ \Omega^{2,(0,\bullet)}\big(X,L_{(\omega)_0 + (\omega)_{\infty}^2}\big)\\
~&~\Omega^{-,(0,\bullet)}\big(X,L_{(\omega)_0^- + (\omega)_{\infty}^-}\big)  \ar[ur]_-{\dd_\Sigma}~&~
}}\right)
\end{gathered}~~.
\end{flalign}
Making use of the holomorphic line bundle morphisms associated with 
partially ordered divisors from Construction \ref{constr:linebundles},
one easily checks that the $\wedge$-product is well-defined on 
$\Omega_{\mathrm{sgl},\mathrm{bdy}}^{\bullet,(0,\bullet)}(X)$.
With these preparations, we can now define an improved variant of the $L_\infty$-algebras
from Definitions \ref{def:auxiliaryLinfty} and \ref{def:singularLinfty}
which encodes both the desired singularities and boundary conditions of the fields.
\begin{defi}\label{def:boundaryLinfty}
Let $\g$ be a Lie algebra and $\omega\in \M^{(1)}(C)\setminus \{0\}$ a
meromorphic $1$-form. For any choice of decomposition as 
in \eqref{eqn:omegapmdecomposition} and local boundary condition as in Definition \ref{def:bdycondition}, 
we denote by $\big(\F(X),\ell\big)$ the $L_\infty$-algebra whose underlying cochain complex
\begin{subequations}
\begin{flalign}
\big(\F(X),\dd\big)\,:=\,\Big(\mathrm{Tot}^\oplus\Big(\Omega_{\mathrm{sgl},\mathrm{bdy}}^{\bullet,(0,\bullet)}\big(X,\g\big)\Big),\dd_\Sigma+\delbar \Big)
\end{flalign}
is given by totalizing the $\g$-valued twisted de Rham-Dolbeault bicomplex \eqref{eqn:dRDolboundary} 
and whose arity $n\geq 2$ structure maps are $\ell_n=0$, for all $n\geq 3$, and 
\begin{flalign}
\ell_2\,:\, \F(X)^{\otimes 2}~\longrightarrow~\F(X)~~,\quad
\alpha\otimes\beta ~\longmapsto~ \wedgebracket{\alpha}{\beta}
\end{flalign}
\end{subequations}
obtained by combining the $\wedge$-product with the Lie bracket $[\cdot,\cdot]$ on $\g$.
\end{defi}

Let us now analyze this field theory from the perspective of spacetime $\Sigma$
by computing the $\delbar$-cohomologies. The new feature of the bicomplex 
\eqref{eqn:dRDolboundary}, in comparison with the cases discussed in the previous subsections,
is the presence of non-positive divisors $(\omega)_{\infty}^0\leq (\omega)_{\infty}^\pm \leq (\omega)_{\infty}^2\leq 0$. 
In order to apply the results about twisted Dolbeault cohomologies from Theorem \ref{theo:cohomologies} and
Example \ref{ex:CP1}, let us start with determining the degrees of the individual divisors appearing
in \eqref{eqn:dRDolboundary}:
\begin{itemize}
\item The divisor entering the $0$-form component along $\Sigma$ of the bicomplex 
\eqref{eqn:dRDolboundary} has degree
\begin{flalign}\label{eqn:degestimate}
\deg(\omega)_{\infty}^0 \,\leq\, -1\quad.
\end{flalign}
This follows from the conditions $(\omega)_{\infty}^0\leq (\omega)_{\infty}^2\leq 0$ and
$(\omega)_{\infty} = (\omega)_{\infty}^0 + (\omega)_{\infty}^2$ from Definition \ref{def:bdycondition}, 
together with the fact that $\deg(\omega)_\infty\leq -2$ since any meromorphic $1$-form on 
the Riemann sphere $C=\CP$ has at least two poles, see e.g.\ \cite[Theorem 17.12]{Forster}.
This implies via Example \ref{ex:CP1} that the corresponding twisted Dolbeault cohomology 
vanishes in degree $0$ and that it has dimension $-1-\deg(\omega)_{\infty}^0$ in degree $1$. 

\item By Definition \ref{def:bdycondition}, the divisors entering the $\pm$-form components along
$\Sigma$ of the bicomplex \eqref{eqn:dRDolboundary} have degrees 
\begin{flalign}
\deg\left((\omega)_0^\pm + (\omega)_{\infty}^\pm \right)\,=\,g-1 \,=\,-1\quad.
\end{flalign}
This implies via Example \ref{ex:CP1}
that the corresponding twisted Dolbeault cohomologies vanish in all degrees.

\item The divisor entering the $2$-form component along $\Sigma$ of the bicomplex 
\eqref{eqn:dRDolboundary} has degree
\begin{flalign}
\deg\left((\omega)_0 + (\omega)_{\infty}^2\right) \,=\, 
\deg\left((\omega)_0 + (\omega)_{\infty} - (\omega)_{\infty}^0\right)\,=\,
-2 - \deg(\omega)_\infty^0 \,\geq -1\quad,
\end{flalign}
where the first step uses \eqref{eqn:bdycondition2}, the second step uses
$\deg(\omega)=2g -2=-2$ (see e.g.\ \cite[Theorem 17.12]{Forster}) and
the last step uses the estimate \eqref{eqn:degestimate}.
This implies via Example \ref{ex:CP1} that the corresponding twisted Dolbeault cohomology 
vanishes in degree $1$ and that it has dimension $-1- \deg(\omega)_\infty^0$ in degree $0$. 
\end{itemize}
\begin{notation}\label{not:N}
To simplify the notations below, let us introduce the non-negative integer
\begin{flalign}
N\,:=\,-1- \deg(\omega)_\infty^0 \,\geq\,0
\end{flalign}
to denote the dimension of the not necessarily vanishing twisted Dolbeault cohomologies.
Note that in the case where $\omega$ has at least three poles (counting multiplicities), 
this integer is positive $N>0$.
\end{notation}

In analogy to the previous subsections, we choose now continuous strong deformation retracts
from the twisted Dolbeault complexes to their cohomologies, which can be obtained for instance
via Hodge theory, see Appendix \ref{app:Hodgetheory}. Using our insights obtained from the 
degree counting exercise above, these continuous strong deformation retracts take the form
\begin{subequations}\label{eqn:defretboundary0}
\begin{equation}\label{eqn:defretboundary00}
\begin{tikzcd}
\big(\bbC^{N}[-1],0\big) \ar[r,shift right=-1ex,"i^0"] & \ar[l,shift right=-1ex,"p^0"] \Big(\Omega^{0,\bullet}\big(C,L_{(\omega)_{\infty}^0}\big),\delbar \Big) \ar[loop,out=-20,in=20,distance=35,swap,"h^0"]
\end{tikzcd} 
\end{equation}
for the $0$-form component along $\Sigma$, 
where $[-1]$ is the standard notation
for a degree shift by $+1$,
\begin{equation}\label{eqn:defretboundary0pm}
\begin{tikzcd}
\big(0,0\big) \ar[r,shift right=-1ex,"i^\pm"] & \ar[l,shift right=-1ex,"p^\pm"] \Big(\Omega^{0,\bullet}\big(C,L_{(\omega)_0^\pm + (\omega)_{\infty}^\pm}\big),\delbar \Big) \ar[loop,out=-20,in=20,distance=50,swap,"h^\pm"]
\end{tikzcd} 
\end{equation}
for the $\pm$-form components along $\Sigma$, and
\begin{equation}\label{eqn:defretboundary02}
\begin{tikzcd}
\big(\bbC^{N},0\big) \ar[r,shift right=-1ex,"i^2"] & \ar[l,shift right=-1ex,"p^2"] \Big(\Omega^{0,\bullet}\big(C,L_{(\omega)_0 + (\omega)_{\infty}^2}\big),\delbar \Big) \ar[loop,out=-20,in=20,distance=50,swap,"h^2"]
\end{tikzcd} 
\end{equation}
\end{subequations}
for the $2$-form component along $\Sigma$. 
Extending these continuous strong  deformation retracts
along the completed projective tensor product
to a family of strong deformation retracts for the bicomplex \eqref{eqn:dRDolboundary},
we can apply the totalization construction from Appendix \ref{app:totaldefretract}.
In contrast to the previous subsections, in the present case
the simplification from Remark \ref{rem:deformedtotal} \textit{does not} apply
because the cohomology in \eqref{eqn:defretboundary00} is non-trivial in degree $1$, while 
the cohomologies \eqref{eqn:defretboundary0pm} are trivial. This implies that we have to work
with the non-simplified variant \eqref{eqn:deformedtotal} of the totalization construction.
Working out the details leads to a strong deformation retract
\begin{subequations}\label{eqn:defretboundary}
\begin{equation}\label{eqn:defretboundary1}
\begin{tikzcd}
\big(\F(\Sigma),\dd^\prime\big)
\ar[r,shift right=-1ex,"\widetilde{i}"] & \ar[l,shift right=-1ex,"\widetilde{p}"] \big(\F(X),\dd \big) 
\ar[loop,out=-20,in=20,distance=20,swap,"h"]
\end{tikzcd}
\end{equation}
from the underlying cochain complex of the $L_\infty$-algebra $\big(\F(X),\ell\big)$
from Definition \ref{def:boundaryLinfty} to the cochain complex
\begin{flalign}\label{eqn:defretboundary2}
\big(\F(\Sigma),\dd^\prime\big)\,:=\,
\bigg(
\xymatrix@C=9em{
\stackrel{(1)}{\Omega^0\big(\Sigma,\g^{N}\big) } \ar[r]^-{p^2\,\dd_\Sigma \,\big(h^+ \dd_{\Sigma}^+ 
+ h^- \dd_{\Sigma}^- \big) \,i^0} ~&~\stackrel{(2)}{\Omega^2\big(\Sigma,\g^{N}\big)}
}
\bigg)\quad.
\end{flalign}
\end{subequations}
Comparing this result with the previous cases in
\eqref{eqn:defretauxiliary} and \eqref{eqn:defretsingular}, we observe that the key new features
are that the complex \eqref{eqn:defretboundary2} is concentrated only in degrees $1$ and $2$,
and that its differential is a second-order differential operator on spacetime $\Sigma$.
Applying now the Homotopy Transfer Theorem \ref{theo:homotopytransfer},
we can summarize our construction in this subsection as follows.
\begin{propo}\label{prop:transferboundary}
The $L_\infty$-algebra $\big(\F(X),\ell\big)$ from Definition \ref{def:boundaryLinfty}
admits an equivalent description in terms of the $L_\infty$-algebra 
$\big(\F(\Sigma),\ell^\prime\big)$ whose underlying cochain complex
is given by \eqref{eqn:defretboundary2}
and whose higher arity structure maps are given by homotopy transfer 
along the strong  deformation retract \eqref{eqn:defretboundary1}. 
\end{propo}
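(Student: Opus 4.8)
The plan is to deduce the statement as a direct application of the Homotopy Transfer Theorem \ref{theo:homotopytransfer} to the strong deformation retract \eqref{eqn:defretboundary1}: since the only nonzero structure maps of $\big(\F(X),\ell\big)$ are $\dd$ and $\ell_2$, the theorem immediately yields a transferred $L_\infty$-structure $\ell^\prime$ on $\F(\Sigma)$ with $\ell_1^\prime=\dd^\prime$, together with an extension of $\widetilde i$ to an $\infty$-quasi-isomorphism, which is exactly the asserted equivalence. The substance of the argument is therefore to certify that \eqref{eqn:defretboundary1} genuinely is a strong deformation retract onto the complex \eqref{eqn:defretboundary2}, and I would organise this into two steps.

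First, I would build the vertical retract in the $\delbar$-direction. Starting from the Hodge-theoretic component-wise retracts \eqref{eqn:defretboundary0} of Appendix \ref{app:Hodgetheory}, I would extend each continuously along $\widehat{\otimes}$ by the de Rham factor on $\Sigma$ and totalise them using Appendix \ref{app:totaldefretract}, producing a strong deformation retract for $\big(\F(X),\delbar\big)$ onto the column-wise $\delbar$-cohomology. The degree count performed above identifies this cohomology as $\bbC^N[-1]$ in the $0$-form column, $0$ in the two $\pm$-columns, and $\bbC^N$ in the $2$-form column. Since the surviving cohomologies sit in \emph{distinct} internal degrees, the simplification of Remark \ref{rem:deformedtotal} is unavailable, and I would have to retain the full totalised homotopy, i.e.\ the non-simplified variant \eqref{eqn:deformedtotal}.

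Second, I would treat $\dd_\Sigma$ as a small perturbation of $\delbar$ and invoke the Homological Perturbation Lemma \ref{prop:HPT}. The transferred differential is then the truncating Neumann series $\dd^\prime = p\,(\id-\dd_\Sigma h)^{-1}\,\dd_\Sigma\,i$, and I would argue that it leaves exactly one nontrivial term. The $k=0$ contribution $p\,\dd_\Sigma\,i$ carries the $0$-form cohomology into the $\pm$-columns, where $p$ restricts to $p^\pm=0$, hence it vanishes; the $k=1$ contribution reproduces precisely the second-order operator $p^2\,\dd_\Sigma\,(h^+\dd_\Sigma^+ + h^-\dd_\Sigma^-)\,i^0$ of \eqref{eqn:defretboundary2}; and every $k\geq 2$ contribution would apply $\dd_\Sigma$ to a form already of top de Rham degree on the two-dimensional $\Sigma$ and so vanishes. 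Feeding the resulting retract \eqref{eqn:defretboundary1} into Theorem \ref{theo:homotopytransfer} then completes the proof of the bare equivalence.

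The point I would stress, in contrast with Propositions \ref{prop:transferauxiliary} and \ref{prop:transfersingular}, is that here the binary bracket $\ell_2$ does \emph{not} restrict along the deformed embedding $\widetilde i = i + h\,(\id-\dd_\Sigma h)^{-1}\,\dd_\Sigma\,i$, precisely because of its nontrivial homotopy correction. Consequently the convenient simplification at the end of Remark \ref{rem:homotopytransfer} fails, and the higher brackets $\ell_n^\prime$ for $n\geq 3$ must be extracted from the decorated binary trees \eqref{eqn:ellntransfer} and need not vanish. I expect this to be the main obstacle if one wants more than the abstract equivalence: one has to track how the homotopies $h^\pm$ in the cohomologically trivial $\pm$-columns mediate between the surviving degree-$1$ and degree-$2$ modes, and to determine by internal-degree counting which trees survive and hence the explicit form of each $\ell_n^\prime$ on $\big(\F(\Sigma),\ell^\prime\big)$.
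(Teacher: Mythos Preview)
Your proposal is correct and follows the same route as the paper: the paper does not write out a separate proof for this proposition but treats it as the conclusion of the preceding construction, namely building the component-wise Hodge retracts \eqref{eqn:defretboundary0}, totalising via Appendix \ref{app:totaldefretract} (with the non-simplified variant \eqref{eqn:deformedtotal}, since Remark \ref{rem:deformedtotal} fails here), and then invoking Theorem \ref{theo:homotopytransfer}. Your two-step decomposition into undeformed totalisation followed by the homological perturbation lemma is exactly the content of Appendix \ref{app:totaldefretract}, and your closing remarks on the failure of the simplification from Remark \ref{rem:homotopytransfer} and the consequent nontriviality of the higher $\ell_n^\prime$ align with the paper's Remark \ref{rem:transferboundary}.

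One minor imprecision: your reason for the vanishing of the $k\geq 2$ terms in the Neumann series is stated as ``$\dd_\Sigma$ applied to a form of top de Rham degree'', but the actual mechanism (as in Appendix \ref{app:totaldefretract}) is that $\dd_\Sigma h$ has bidegree $(1,-1)$ and the bicomplex is concentrated in vertical degrees $\{0,1\}$, so $(\dd_\Sigma h)^2=0$ already because $h$ would have to land in vertical degree $-1$. The conclusion is the same.
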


\begin{rem}\label{rem:transferboundary}
The $L_\infty$-algebra $\big(\F(\Sigma),\ell^\prime\big)$ 
from Proposition \ref{prop:transferboundary} captures precisely 
the key features of a family of $N=-1-\deg(\omega)_\infty^0\geq 0$ 
many $\g$-valued scalar fields on spacetime $\Sigma$ which interact 
through $\sigma$-model-type local interactions. To justify this statement, 
consider the Maurer-Cartan equation
\begin{flalign}
\dd^\prime \Phi  + \sum_{n\geq 2}\tfrac{1}{n!} \,\ell^\prime_n\big(\Phi^{\otimes n}\big) \,=\,0
\end{flalign}
for the fields $\Phi\in \F(\Sigma)^1 = \Omega^0(\Sigma,\g^{N}) $.
The linear term $\dd^\prime \Phi$ is given by a second-order linear differential operator 
\eqref{eqn:defretboundary2} on $\Sigma$, which depends on the Lorentzian metric
through the different treatment of the $\pm$-chiral components of $1$-forms.
The structure of this differential operator resembles the one of the d'Alembertian (see Example \ref{ex:KGfield})
in $2d$, which can be written as $\dd_\Sigma\ast_\Sigma\dd_\Sigma
= \dd_\Sigma\ast_\Sigma(\dd_\Sigma^+ + \dd_\Sigma^-) = \dd_\Sigma \,(\dd_\Sigma^+ - \dd_\Sigma^-)$,
however its detailed description is governed by the deformation retracts.
The interaction terms $\ell^\prime_n\big(\Phi^{\otimes n}\big)$ are determined by the Homotopy 
Transfer Theorem \ref{theo:homotopytransfer} and their detailed description
will depend on the choices of $\omega$, the decomposition \eqref{eqn:omegapmdecomposition}
and the local boundary condition from Definition \ref{def:bdycondition}.
(More informally, one can say that these data
dictate the `coupling constants' for these interactions.)
It is important to observe that the strong deformation retract \eqref{eqn:defretboundary1} which determines these
transferred structure maps is local on $\Sigma$: Indeed, the component-wise strong 
deformation retracts \eqref{eqn:defretboundary0} are independent of $\Sigma$
and the totalization construction \eqref{eqn:deformedtotal} introduces 
at most a single differential $\dd_\Sigma$ in every map of 
the deformed strong deformation retract $(\widetilde{i},\widetilde{p},h)$.
From our schematic description of the transferred structure maps in Remark \ref{rem:homotopytransfer},
one then sees immediately that the terms $\ell^\prime_n\big(\Phi^{\otimes n}\big)$ 
are determined locally on $\Sigma$ by $\Phi$ and its first and second derivatives along $\Sigma$.
Counting differential form degrees, one additionally observes that $\ell^\prime_n\big(\Phi^{\otimes n}\big)
\in \Omega^2(\Sigma,\g^{N})$ 
involves precisely a total number of two derivatives along $\Sigma$, distributed
among the $n$ factors $\Phi^{\otimes n}$. Furthermore, since $\F(\Sigma)^i=0$, for all $i\leq 0$, 
it follows that this field theory does not have any non-trivial gauge symmetries.
So we obtain qualitatively all the key
features of a (perturbative) $\sigma$-model on spacetime $\Sigma$.
\end{rem}

\subsection{\label{subsec:Lax}Lax connections via $L_\infty$-morphisms}
It is natural to expect that the $L_\infty$-algebra $\big(\F(\Sigma),\ell^\prime\big)$ 
from Proposition \ref{prop:transferboundary} will describe a $2d$ integrable field theory 
on spacetime $\Sigma$ since it has been obtained from $4d$ semi-holomorphic Chern-Simons theory
on $X=\Sigma\times C$. We have already recognized in Remark 
\ref{rem:transferboundary} that this theory displays all
the relevant key feature of a $\sigma$-model-type field theory on spacetime $\Sigma$,
but it remains to clarify if, and in which sense, this theory is integrable. Ideally, one would
like to construct a suitable Lax connection for this field theory.
\sk

Our homological framework provides an elegant and conceptual way to answer these questions
by exhibiting manifestly a Lax connection. The key idea can be summarized by the diagram
\begin{flalign}\label{eqn:maindiagram}
\begin{gathered}
\xymatrix@C=4em{
 \ar@{~>}[d]_-{i_\infty}^-{\sim} \big(\F(\Sigma),\ell^\prime\big) \ar@{~>}[r]^-{}~&~\big(\L(\Sigma),\ell^\prime\big)  \\
\big(\F(X),\ell\big)~ \ar[r]&~ \big(\L(X),\ell\big) \ar@{~>}[u]_-{i^{-1}_\infty}^-{\sim}
}
\end{gathered}
\end{flalign}
which we shall now explain in detail: 
\begin{itemize}
\item On the left-hand side, we have the $L_\infty$-algebra
$\big(\F(X),\ell\big)$ from Definition \ref{def:boundaryLinfty}, which describes 
$4d$ semi-holomorphic Chern-Simons theory with both singularities and boundary conditions,
and its equivalent model $\big(\F(\Sigma),\ell^\prime\big)$ from 
Proposition \ref{prop:transferboundary}, which as explained in Remark 
\ref{rem:transferboundary} describes a $\sigma$-model-type field theory 
on $\Sigma$. The $\infty$-quasi-isomorphism $i_\infty$
identifying these two $L_\infty$-algebras is obtained from the Homotopy Transfer Theorem
\ref{theo:homotopytransfer}. 

\item On the right-hand side, we have the $L_\infty$-algebra
$\big(\L(X),\ell\big)$ from Definition \ref{def:singularLinfty}, which describes 
$4d$ semi-holomorphic Chern-Simons theory with singularities but no boundary conditions,
and its equivalent model $\big(\L(\Sigma),\ell^\prime\big)$ from 
Proposition \ref{prop:transfersingular}, which as explained in Remark 
\ref{rem:transfersingular} describes a theory of Lax connections. 
The $\infty$-quasi-isomorphism $i_\infty$
identifying these two $L_\infty$-algebras is obtained from the Homotopy Transfer Theorem
\ref{theo:homotopytransfer}, and we denote by $i^{-1}_\infty$ a choice of quasi-inverse,
which exists by \cite[Theorem 10.4.4]{LodayVallette}. 

\item The bottom horizontal arrow is the (strict) $L_\infty$-algebra morphism
which is obtained from the holomorphic line bundle
morphisms $L_{D}\to L_{D^\prime}$ associated with partially ordered divisors $D\leq D^\prime$ 
from Construction \ref{constr:linebundles}. Explicitly, recalling \eqref{eqn:dRDolsingular} 
and \eqref{eqn:dRDolboundary}, the relevant partial orderings are
\begin{flalign}
(\omega)_\infty^0\,\leq \, 0 ~~,\quad
(\omega)_0^\pm + (\omega)_\infty^\pm \,\leq \, (\omega)_0^\pm ~~,\quad
(\omega)_0 + (\omega)_\infty^2 \,\leq \, (\omega)_0 \quad.
\end{flalign}

\item The top horizontal $L_\infty$-morphism is defined by composing the other three arrows. 
Note that it maps from the $L_\infty$-algebra $\big(\F(\Sigma),\ell^\prime\big)$ 
describing a $\sigma$-model-type field theory on $\Sigma$ to the $L_\infty$-algebra 
$\big(\L(\Sigma),\ell^\prime\big)$ describing Lax connections. 
In particular, it maps Maurer-Cartan elements in $\big(\F(\Sigma),\ell^\prime\big)$ 
to Maurer-Cartan elements in $\big(\L(\Sigma),\ell^\prime\big)$, which
according to their explicit descriptions in Remarks \ref{rem:transferboundary} 
and \ref{rem:transfersingular} means that this $L_\infty$-morphism assigns a
Lax connection to each field satisfying the equation of motion.
Note that this is the key feature of an integrable field theory,
so the top horizontal arrow in \eqref{eqn:maindiagram} should be regarded
as the crucial additional structure which renders the field theory $\big(\F(\Sigma),\ell^\prime\big)$ integrable.
\end{itemize}

Summing up our constructions in this and the previous subsections,
we obtain the following main result.
\begin{theo}\label{theo:lax}
The constructions above define an $L_\infty$-morphism
\begin{flalign}
\xymatrix@C=4em{
\big(\F(\Sigma),\ell^\prime\big) \ar@{~>}[r]^-{}~&~\big(\L(\Sigma),\ell^\prime\big) 
}
\end{flalign}
mapping from the $L_\infty$-algebra $\big(\F(\Sigma),\ell^\prime\big)$ describing
a $2d$ integrable field theory on spacetime $\Sigma$ (see Proposition \ref{prop:transferboundary}
and Remark \ref{rem:transferboundary}) to the 
$L_\infty$-algebra $\big(\L(\Sigma),\ell^\prime\big)$ describing
Lax connections (see Proposition \ref{prop:transfersingular} and Remark \ref{rem:transfersingular}).
\end{theo}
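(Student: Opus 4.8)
First I would establish that the bottom horizontal arrow in \eqref{eqn:maindiagram} is a genuine strict morphism of $L_\infty$-algebras $\big(\F(X),\ell\big)\to\big(\L(X),\ell\big)$. Since both are differential graded Lie algebras, with $\ell_n=0$ for $n\geq 3$, this reduces to checking two compatibilities for the component-wise holomorphic line bundle morphisms of Construction \ref{constr:linebundles} associated with the partial orderings $(\omega)_\infty^0\leq 0$, $(\omega)_0^\pm+(\omega)_\infty^\pm\leq(\omega)_0^\pm$ and $(\omega)_0+(\omega)_\infty^2\leq(\omega)_0$. For the cochain-map property the $\delbar$-part is immediate, since each morphism is multiplication by a holomorphic transition ratio $k_i=\psi_i^\prime/\psi_i$, while the $\dd_\Sigma$-part follows from the naturality recorded in the commutative square \eqref{eqn:DtoDprimediagram}: the two legs relating, for instance, the $0$-form twist $L_{(\omega)_\infty^0}$ to the $\pm$-form twist $L_{(\omega)_0^\pm}$ agree because $D\mapsto L_D$ is functorial along the divisor order. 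For compatibility with $\ell_2=\wedgebracket{\cdot}{\cdot}$ I would use that the line bundle morphisms intertwine the tensor-product isomorphisms $L_D\otimes L_{D^\prime}\cong L_{D+D^\prime}$, so that the comparison map commutes with the $\wedge$-bracket on each nonvanishing component.

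Next, the two vertical arrows are supplied directly by earlier results. On the left, the Homotopy Transfer Theorem \ref{theo:homotopytransfer} applied to the strong deformation retract \eqref{eqn:defretboundary1} yields the $\infty$-quasi-isomorphism $i_\infty:\big(\F(\Sigma),\ell^\prime\big)\rightsquigarrow\big(\F(X),\ell\big)$. On the right, the same theorem applied to \eqref{eqn:defretsingular1} produces an $\infty$-quasi-isomorphism $\big(\L(\Sigma),\ell^\prime\big)\rightsquigarrow\big(\L(X),\ell\big)$; since any $\infty$-quasi-isomorphism admits a quasi-inverse by \cite[Theorem 10.4.4]{LodayVallette}, I obtain the arrow $i_\infty^{-1}:\big(\L(X),\ell\big)\rightsquigarrow\big(\L(\Sigma),\ell^\prime\big)$ pointing in the direction required by the diagram.

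Finally, I would define the top arrow as the composite of these three $L_\infty$-morphisms, namely $i_\infty^{-1}$ after the strict comparison after $i_\infty$, which is again an $L_\infty$-morphism because $\infty$-morphisms compose. This produces the asserted morphism $\big(\F(\Sigma),\ell^\prime\big)\to\big(\L(\Sigma),\ell^\prime\big)$; its interpretation as assigning a Lax connection to each on-shell field is then automatic, since every $L_\infty$-morphism maps Maurer-Cartan elements to Maurer-Cartan elements, which by Remarks \ref{rem:transferboundary} and \ref{rem:transfersingular} are precisely the on-shell $\sigma$-model fields on $\Sigma$ and their flat meromorphic Lax connections.

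I expect the main obstacle to lie in the first step, specifically in verifying that the bottom arrow respects the bracket $\ell_2$. One must keep careful track of how the various twistings in \eqref{eqn:dRDolboundary} and \eqref{eqn:dRDolsingular} are related under the comparison map, and confirm that the auxiliary line bundle morphisms hidden inside the two brackets are correctly intertwined. The subtlest case is the $+\wedge-\to 2$ pairing: on $\F(X)$ the product of sections of $L_{(\omega)_0^++(\omega)_\infty^+}$ and $L_{(\omega)_0^-+(\omega)_\infty^-}$ first lands in $L_{(\omega)_0+(\omega)_\infty}$ and must then be promoted to the $2$-form twist $L_{(\omega)_0+(\omega)_\infty^2}$ (using $(\omega)_\infty^0\leq 0$), whereas on $\L(X)$ no such promotion occurs; both resulting composites are morphisms $L_{(\omega)_0+(\omega)_\infty}\to L_{(\omega)_0}$ and coincide by functoriality of $D\mapsto L_D$. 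A secondary point requiring care is that the composite is built from a genuinely non-strict $i_\infty^{-1}$, so the higher components of the resulting $L_\infty$-morphism are non-trivial and are governed by the tree formulas of Remark \ref{rem:homotopytransfer}; however, the bare existence claim of the theorem does not require their explicit evaluation.
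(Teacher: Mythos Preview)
Your proposal is correct and follows essentially the same approach as the paper: the theorem is stated there as a summary of the constructions in the preceding bulleted list, namely the composite of $i_\infty$, the strict comparison morphism, and a chosen quasi-inverse $i_\infty^{-1}$. You supply more detail than the paper does on why the bottom arrow is a strict dg Lie algebra morphism (in particular your check of the $+\wedge-\to 2$ case via functoriality of $D\mapsto L_D$), but this is elaboration of the same construction rather than a different route.
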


\subsection{\label{subsec:cyclic}Cyclic structures and action functionals}
In this subsection we observe that the $L_\infty$-algebra $\big(\F(X),\ell\big)$ 
from Definition \ref{def:boundaryLinfty}, which describes $4d$ semi-holomorphic 
Chern-Simons theory on $X=\Sigma\times C$ with both singularities and boundary conditions,
admits a cyclic structure.
As in the case of ordinary Chern-Simons theory from Example \ref{ex:CSfield},
this cyclic structure depends on the choice of a non-degenerate invariant symmetric bilinear form
$\langle\cdot,\cdot\rangle : \g\otimes\g\to \bbC$ on the Lie algebra $\g$.
We will then show that there exists a homotopy transfer of this cyclic structure
to the $L_\infty$-algebra $\big(\F(\Sigma),\ell^\prime\big)$ from Proposition 
\ref{prop:transferboundary}, which describes a $2d$ $\sigma$-model-type field theory 
on $\Sigma$. 
\sk

To define the cyclic structure on $\big(\F(X),\ell\big)$, 
we use again the holomorphic line bundle
morphisms $L_{D}\to L_{D^\prime}$ associated with partially ordered divisors $D\leq D^\prime$ 
from Construction \ref{constr:linebundles} to define the linear map
\begin{flalign}\label{eqn:wedgepairing}
\wedgepair{\cdot}{\cdot}\,:\,\F(X)\otimes\F(X)~\longrightarrow~
\mathrm{Tot}^{\oplus}\Big(\Omega^{\bullet,(0,\bullet)}\big(X,L_{(\omega)}\big)\Big)~~,\quad
\alpha\otimes\beta ~\longmapsto~\wedgepair{\alpha}{\beta}
\end{flalign}
to the totalized de Rham-Dolbeault bicomplex twisted uniformly by the holomorphic line bundle
$L_{(\omega)}\to C$ associated with the full divisor $(\omega) = (\omega)_0 + (\omega)_\infty$
of the meromorphic $1$-form $\omega$. 
Presenting the meromorphic $1$-form $\omega\in \M^{(1)}(C)\setminus\{0\}$
as a $(1,0)$-form $\omega \in \Omega^{1,0}(C,L_{-(\omega)})$ satisfying
the holomorphicity condition $\delbar \omega =0$, we can define the degree $1$ linear map
\begin{flalign}
\omega\wedge \wedgepair{\cdot}{\cdot}\,:\,\F(X)\otimes\F(X)~\longrightarrow~
\Omega^\bullet(X)~~,\quad
\alpha\otimes\beta ~\longmapsto~\omega\wedge \wedgepair{\alpha}{\beta}
\end{flalign}
to the full de Rham complex of $X=\Sigma\times C$, where we have used that 
$L_{-(\omega)}\otimes L_{(\omega)}\cong C\times \bbC$ is the trivial holomorphic line bundle.
Restricting to compactly supported fields, we can then define via integration on $X$ the degree $-3$ linear map
\begin{flalign}\label{eqn:cyclic}
\pair{\cdot}{\cdot}_{\omega}^{}\,:\,\F_\cc(X)\otimes\F_\cc(X)~\longrightarrow~\bbC~~,\quad
\alpha\otimes\beta ~\longmapsto~\int_X\omega\wedge \wedgepair{\alpha}{\beta}\quad,
\end{flalign}
which satisfies the axioms of a cyclic structure from Remark \ref{rem:cyclic}.
(Note in particular that the property \eqref{eqn:bdycondition2} of a local boundary condition
implies that the integrand is a point-wise non-degenerate pairing.)
In simple words, the reason why \eqref{eqn:cyclic} is a well-defined cyclic structure
is that the singularities of $\omega$ get compensated by the zeros of the fields, and vice versa
that the singularities of the fields get compensated by the zeros of $\omega$. 
This compensation depends of course strongly on the specific assignment
of holomorphic line bundles in the bicomplex \eqref{eqn:dRDolboundary}.
The action functional on fields $\A \in \F_\cc(X)^1$ resulting from this 
cyclic structure reads as
\begin{flalign}
\nn S(\A)\,&:=\,\Big\langle\!\!\Big\langle \A , 
\sum_{n\geq 1} \tfrac{1}{(n+1)!}\,\ell_n\big(\A^{\otimes n}\big) 
\Big\rangle\!\!\Big\rangle_\omega^{}\\
\,&=\, \int_X \omega \wedge \Big(\tfrac{1}{2}\wedgepair{\A}{(\dd_\Sigma + \delbar) \A} + \tfrac{1}{3!}\wedgepair{\A}{\wedgebracket{\A}{\A}}
\Big)\quad,
\end{flalign}
which we recognize as the action for $4d$ semi-holomorphic Chern-Simons theory from \cite{CY3}.
\sk

It is in general \textit{not} true that this cyclic structure 
transfers along the strong deformation retract \eqref{eqn:defretboundary} 
to a cyclic structure on the equivalent $L_\infty$-algebra $\big(\F(\Sigma),\ell^\prime\big)$ 
from Proposition \ref{prop:transferboundary} describing a
$2d$ $\sigma$-model-type field theory on $\Sigma$. It is shown in
\cite{LazarevHodge,LazarevCS} that in order to obtain such a result
the strong deformation retract \eqref{eqn:defretboundary} 
must be compatible with the cyclic structure in the sense that
\begin{subequations}\label{eqn:cycliccompatible}
\begin{flalign}\label{eqn:cycliccompatible1}
\pair{\alpha}{\beta}_{\omega}^{}\,=\,0\quad,
\end{flalign}
for all $\alpha\in\mathrm{im}(\widetilde{i})\subseteq \F_\cc(X)$ 
and $\beta\in \mathrm{ker}(\widetilde{p})\subseteq \F_\cc(X)$,
and
\begin{flalign}\label{eqn:cycliccompatible2}
\pair{h(\alpha)}{\beta}_{\omega}^{} \,=\,(-1)^{\vert\alpha\vert} \,\pair{\alpha}{h(\beta)}_{\omega}^{}\quad,
\end{flalign}
\end{subequations}
for all $\alpha,\beta\in \F_\cc(X)$.
\begin{propo}\label{prop:cyclictransfer}
There exists a choice for the component-wise continuous strong deformation retracts in
\eqref{eqn:defretboundary0} such that the totalized strong deformation retract \eqref{eqn:defretboundary} 
satisfies the compatibility conditions \eqref{eqn:cycliccompatible}.
In particular, with this choice the transferred pairing
\begin{flalign}
\pair{\cdot}{\cdot}_{\omega}^\prime\,:=\,\pair{\cdot}{\cdot}_\omega^{}\circ (\widetilde{i}\otimes\widetilde{i})\,:\, \F_\cc(\Sigma)\otimes \F_\cc(\Sigma)~\longrightarrow~\bbC
\end{flalign}
defines a cyclic structure on the $L_\infty$-algebra $\big(\F(\Sigma),\ell^\prime\big)$ 
from Proposition \ref{prop:transferboundary}.
\end{propo}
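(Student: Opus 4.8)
The plan is to exhibit Hodge-theoretic strong deformation retracts on $C$ for which the totalized retract \eqref{eqn:defretboundary} satisfies the compatibility conditions \eqref{eqn:cycliccompatible}; the conclusion then follows from \cite{LazarevHodge,LazarevCS}. The starting observation is that, because $X=\Sigma\times C$ and $\omega$ lives on $C$, the cyclic pairing \eqref{eqn:cyclic} factorizes on factorizable elements as a product of the invariant form $\langle\cdot,\cdot\rangle_\g$, the integration pairing $\int_\Sigma\cdot\wedge\cdot$ on $\Sigma$, and the pairing $\int_C\omega\wedge\cdot\wedge\cdot$ on $C$. The latter is precisely the Serre duality pairing between the two \emph{Serre-dual} twisted Dolbeault complexes entering \eqref{eqn:dRDolboundary}: by the saturation condition \eqref{eqn:bdycondition2} the line bundles of the $0$-form and $2$-form components tensor to $L_{(\omega)}$ (the canonical bundle $K_C$), as do those of the $+$ and $-$ components. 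Since the $\Sigma$-integration pairing is fixed, the compatibility conditions \eqref{eqn:cycliccompatible} for the undeformed totalized retract (before switching on $\dd_\Sigma$) reduce to the two conditions that, with respect to this Serre pairing on $C$, (a) $\mathrm{im}(i)$ is orthogonal to $\ker(p)$ and (b) each homotopy $h$ is graded self-adjoint.

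To realize (a) and (b) I would use Hodge theory on $C$ (Appendix \ref{app:Hodgetheory}). For each of the two Serre-dual pairs I fix a Hermitian metric on $C$ and on one of the two line bundles, and \emph{induce} the metric on the Serre-dual bundle through $\omega$, using that $\omega$ is a nowhere-vanishing section of $\Omega^{1,0}(C,L_{-(\omega)})$ and hence trivializes $K_C\cong L_{(\omega)}$. With this choice the conjugate-linear Hodge star $\bar{\ast}$ becomes a conjugate-linear isometry between the two Dolbeault complexes, intertwining $\delbar$ with $\delbar^\ast$ in the Kodaira--Serre sense, so that the Serre pairing equals the $L^2$ inner product precomposed with the conjugate-linear isomorphism induced by $\bar{\ast}$ and $\omega$. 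Taking the homotopies to be the Green-operator homotopies $\delbar^\ast G$, condition (a) follows because harmonic representatives are $L^2$-orthogonal to $\mathrm{im}(\delbar)\oplus\mathrm{im}(\delbar^\ast)$ and $\bar{\ast}$ preserves harmonicity, while condition (b) follows because $G$ is $L^2$-self-adjoint and commutes with $\delbar$, so the intertwining converts $L^2$-self-adjointness into Serre-self-adjointness. This analytic input, together with Serre duality \cite[Sections 16 and 17]{Forster}, is the geometric heart of the argument.

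It then remains to pass from the component-wise retracts \eqref{eqn:defretboundary0} to the totalized retract \eqref{eqn:defretboundary1}. The latter is the $\id_{\Omega^\bullet(\Sigma)}$-extension of the $C$-retracts, deformed by the perturbation $\delta=\dd_\Sigma$ through the homological perturbation lemma (Proposition \ref{prop:HPT}); since $h^2=0$ on the two-term $C$-complexes the perturbation series truncates and each deformed map carries at most one $\dd_\Sigma$, in accordance with \eqref{eqn:deformedtotal}. The main obstacle is to show that \eqref{eqn:cycliccompatible} is \emph{preserved} under this perturbation. The key additional ingredient is that $\dd_\Sigma$ is graded skew-self-adjoint for $\pair{\cdot}{\cdot}_\omega$ — this is the $\dd_\Sigma$-part of the cyclicity of the full differential, which holds by integration by parts on $\Sigma$ with no boundary terms on compactly supported fields, and was already used implicitly when \eqref{eqn:cyclic} was shown to be cyclic. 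Substituting the explicit formulas of Proposition \ref{prop:HPT} for $\widetilde i,\widetilde p$ and the deformed homotopy into \eqref{eqn:cycliccompatible}, and repeatedly using the skew-self-adjointness of $\dd_\Sigma$ together with the undeformed conditions (a),(b), one verifies the deformed identities. I expect the sign-bookkeeping in this cyclic/symplectic version of the perturbation lemma to be the most delicate point of the proof.

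Finally, with \eqref{eqn:cycliccompatible} established, \cite{LazarevHodge,LazarevCS} guarantee that $\pair{\cdot}{\cdot}_\omega\circ(\widetilde i\otimes\widetilde i)$ is a genuine cyclic structure on $\big(\F(\Sigma),\ell^\prime\big)$, with graded symmetry and $L_\infty$-invariance inherited automatically. Non-degeneracy I would read off from the factorization: the transferred pairing is the product of the non-degenerate $\Sigma$-integration pairing, the form $\langle\cdot,\cdot\rangle_\g$, and the Serre duality pairing between $\mathsf{H}^1\big(C,L_{(\omega)_\infty^0}\big)\cong\bbC^N$ and $\mathsf{H}^0\big(C,L_{(\omega)_0+(\omega)_\infty^2}\big)\cong\bbC^N$, which is perfect by Serre duality; this is exactly where the saturation condition \eqref{eqn:bdycondition2} pays off.
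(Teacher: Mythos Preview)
Your proposal is correct and follows essentially the same route as the paper: choose Hodge-theoretic retracts on $C$ with the metric on one bundle of each Serre-dual pair induced from the other via $\omega$ (this is precisely the content of Appendix~\ref{app:compatible}), then check that compatibility survives the $\dd_\Sigma$-perturbation. The one simplification you do not exploit is that in this bicomplex the homotopy receives \emph{no} deformation, $\widetilde{h}=h$ (because $h\,\dd_\Sigma\,h=0$ for vertical-degree reasons, cf.\ \eqref{eqn:deformedtotal}), so condition \eqref{eqn:cycliccompatible2} is immediate and only \eqref{eqn:cycliccompatible1} needs work; the paper then dispatches \eqref{eqn:cycliccompatible1} by a short two-case degree check rather than the general skew-self-adjointness bookkeeping you anticipate, which avoids the sign management you flag as delicate.
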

\begin{proof}
The construction in Appendix \ref{app:compatible} shows 
that the component-wise continuous strong deformation retracts in
\eqref{eqn:defretboundary0} can be chosen such that 
$(i^0,p^0,h^0)$ and $(i^2,p^2,h^2)$ are compatible, and such that
$(i^+,p^+,h^+)$ and $(i^-,p^-,h^-)$ are compatible. From this it follows 
directly that the \textit{undeformed} totalized strong deformation retract
\begin{equation}
\begin{tikzcd}
\big(\F(\Sigma),0\big)
\ar[r,shift right=-1ex,"i"] & \ar[l,shift right=-1ex,"p"] \big(\F(X),\delbar \big) 
\ar[loop,out=-20,in=20,distance=20,swap,"h"]
\end{tikzcd}
\end{equation}
from \eqref{eqn:undeformedtotalization} satisfies the compatibility 
conditions \eqref{eqn:cycliccompatible}. Hence, it remains to show that
its deformation given in \eqref{eqn:deformedtotal} satisfies the compatibility conditions too.
Note that the second condition \eqref{eqn:cycliccompatible2} is automatic because 
the homotopy $\widetilde{h} = h$ does not receive any deformation, but 
\eqref{eqn:cycliccompatible1} requires a check because both
$\widetilde{i}$ and $\widetilde{p}$ receive deformations. 
Since $\F(\Sigma)$ is concentrated in degrees $\{1,2\}$, there are two cases
depending on whether $\alpha\in \mathrm{im}(\widetilde{i})$ is of degree $1$ or $2$.
\sk

Let us start with the case where $\alpha\in \mathrm{im}(\widetilde{i})^1$ is of degree $1$.
Since $\pair{\cdot}{\cdot}_\omega^{}$ is of degree $-3$, it then suffices to evaluate 
\eqref{eqn:cycliccompatible1} for $\beta \in \mathrm{ker}(\widetilde{p})^2$ of degree $2$.
Such element takes the form
\begin{subequations}
\begin{flalign}
\beta \,=\,\beta^{(2,0,0)} \oplus \beta^{(+,0,1)}\oplus \beta^{(-,0,1)}\,\in\, \F_\cc(X)^2\quad,
\end{flalign}
where the superscript indicates the relevant components in \eqref{eqn:dRDolboundary}, 
and it lies in the kernel of the map $\widetilde{p}$ from \eqref{eqn:deformedtotal}
if and only if 
\begin{flalign}\label{eqn:pkerneltmp}
p^2\Big(\beta^{(2,0,0)} + \dd_\Sigma h^+\big(\beta^{(+,0,1)}\big) + \dd_\Sigma h^-\big(\beta^{(-,0,1)}\big)\Big)\,=\,0\quad.
\end{flalign}
\end{subequations}
Recalling also $\widetilde{i}$ from \eqref{eqn:deformedtotal}, we
can write $\alpha =\widetilde{i}(\Phi) = 
i^0(\Phi) \oplus h^+ \dd_\Sigma^+ i^0(\Phi) \oplus h^-\dd_{\Sigma}^- i^0(\Phi)$, 
for some $\Phi\in \F_\cc(\Sigma)^1$, and compute
\begin{flalign}
\nn \pair{\widetilde{i}(\Phi)}{\beta}_{\omega}^{}\,&=\,
\pair{i^0(\Phi)}{\beta^{(2,0,0)}}_{\omega}^{}+ \pair{ h^+ \dd_\Sigma^+ i^0(\Phi)}{\beta^{(-,0,1)}}_{\omega}^{}
+  \pair{ h^- \dd_\Sigma^- i^0(\Phi)}{\beta^{(+,0,1)}}_{\omega}^{}\\
\,&=\,\pair{i^0(\Phi)}{\beta^{(2,0,0)} + \dd_\Sigma h^+\big(\beta^{(+,0,1)}\big) 
+ \dd_\Sigma h^-\big(\beta^{(-,0,1)}\big)}_{\omega}^{}\,=\,0\quad,
\end{flalign}
where we have used the compatibility conditions for the undeformed totalized strong deformation retract
together with \eqref{eqn:pkerneltmp}.
\sk

Let us consider now the case where $\alpha\in \mathrm{im}(\widetilde{i})^2$ is of degree $2$, 
and hence $\beta \in \mathrm{ker}(\widetilde{p})^1$ is of degree $1$. Since
$\alpha = \widetilde{i}(\Phi^\ddagger) = i^2(\Phi^\ddagger)$ is a form of type $(2,0,0)$, for 
all $\Phi^{\ddagger}\in\F_\cc(\Sigma)^2$, it suffices to consider $\beta$ of type $\beta^{(0,0,1)}$.
The latter lies in the kernel of $\widetilde{p}$ if and only if $p^0\big(\beta^{(0,0,1)}\big)=0$.
We then compute
\begin{flalign}
 \pair{\widetilde{i}(\Phi^\ddagger)}{\beta}_{\omega}^{}\,=\,
 \pair{i^2(\Phi^\ddagger)}{\beta^{(0,0,1)}}_{\omega}^{}\,=\,0
\end{flalign}
by using the compatibility conditions for the undeformed totalized strong deformation retract.
\end{proof}

\begin{rem}
The transferred cyclic structure from Proposition \ref{prop:cyclictransfer}
allows us to define an action functional
\begin{flalign}
S^\prime(\Phi)\,:=\,\Big\langle\!\!\Big\langle \Phi , 
\sum_{n\geq 1} \tfrac{1}{(n+1)!}\,\ell_n^\prime\big(\Phi^{\otimes n}\big) 
\Big\rangle\!\!\Big\rangle_\omega^{\prime}
\end{flalign}
on the fields $\Phi\in\F_\cc(\Sigma)^1=  \Omega_\cc^0\big(\Sigma,\g^{N}\big) $
whose Euler-Lagrange equations are, by construction, the $\sigma$-model-type equations from Remark \ref{rem:transferboundary}.
As a consequence of the integrability structure from Theorem \ref{theo:lax},
this action describes a $2d$ integrable field theory on spacetime $\Sigma$.
\end{rem}


\section{\label{sec:highergenus}The case of higher genus}
In this section we will generalize our results from Section \ref{sec:genuszero}
to the case where the compact Riemann surface $C$ has genus $g > 0$.
It has already been observed in \cite{CY3,Derryberry} that 
$4d$ semi-holomorphic Chern-Simons theory on 
higher genus Riemann surfaces comes with additional subtleties,
which we will try to highlight and explain below from our point of view.
Before diving into the more technical aspects of this section, let 
us start with a general observation: The Definitions \ref{def:auxiliaryLinfty}, 
\ref{def:singularLinfty} and \ref{def:boundaryLinfty} of the $L_\infty$-algebras of $4d$ 
semi-holomorphic Chern-Simons theory on $X = \Sigma\times C$ generalize directly
to the case of higher genus Riemann surfaces, but our analysis
in Propositions \ref{prop:transferauxiliary}, \ref{prop:transfersingular}
and \ref{prop:transferboundary} of their equivalent models obtained by 
computing the $\delbar$-cohomologies uses $g=0$ manifestly.
Hence, in order to explore and understand the similarities and potential differences 
between the cases $g=0$ and $g>0$, one has to revisit these 
cohomology computations in the case of an arbitrary compact Riemann surface $C$.
This will be precisely our starting point for the present section.
\sk

Let us fix an arbitrary compact Riemann surface $C$ 
of genus $g > 0$ and start with the $L_\infty$-algebra $\big(\F(X),\ell\big)$ from
Definition \ref{def:boundaryLinfty} which describes 
$4d$ semi-holomorphic Chern-Simons theory on $X=\Sigma\times C$ with both singularities
and boundary conditions. One of the key features in genus $g=0$ 
which allowed us in Proposition \ref{prop:transferboundary}
and Remark \ref{rem:transferboundary} to identify this $L_\infty$-algebra 
with a $\sigma$-model-type field theory $\big(\F(\Sigma),\ell^\prime\big)$ 
on spacetime $\Sigma$ was the fact that the twisted Dolbeault cohomologies
\eqref{eqn:defretboundary0pm} associated with the $\pm$-form components
along $\Sigma$ vanish in both degrees. This is a special feature
of genus $g=0$, while for arbitrary genus $g\geq 0$ one only has estimates
\begin{flalign}\label{eqn:highergenusestimate}
0\,\leq \,\dim\, \mathsf{H}^0\, \Omega^{0,\bullet}\big(C,L_{(\omega)_0^\pm + (\omega)_{\infty}^\pm}\big)\,=\, 
\dim\,\mathsf{H}^1\, \Omega^{0,\bullet}\big(C,L_{(\omega)_0^\pm + (\omega)_{\infty}^\pm}\big) 
\,\leq\, g\quad.
\end{flalign}
Note that these estimates follow from \cite[Exercise 17.4]{Forster},
the Riemann-Roch Theorem \ref{theo:cohomologies} and the fact that
the divisors $(\omega)_0^\pm + (\omega)_{\infty}^\pm$ have degree $g-1$
by our Definition \ref{def:bdycondition} of local boundary conditions.
While these estimates would in principle allow both cohomologies to vanish,
it is easy to find an explicit example which indicates that this is in general not the case.
\begin{ex}
Consider a complex torus $C = \bbC/\Gamma$, which has genus $g=1$, 
and the meromorphic $1$-form $\omega = \dd z$ without zeros and poles.
In this case the divisor $(\omega)=0$ is trivial, hence
so are the divisors $(\omega)_0^\pm + (\omega)_{\infty}^\pm=0$. 
This reduces the cohomologies in \eqref{eqn:highergenusestimate}
to the untwisted Dolbeault cohomologies, which are non-trivial
$\mathsf{H}^0\, \Omega^{0,\bullet}(C)\cong 
\mathsf{H}^1\, \Omega^{0,\bullet}(C)\cong\bbC$.
\end{ex}

A direct consequence of this issue 
is that the $L_\infty$-algebra $\big(\F(X),\ell\big)$
describing $4d$ semi-holomorphic Chern-Simons theory 
with both singularities and boundary conditions
does \textit{not} in general admit an equivalent description
in terms of a $\sigma$-model-type field theory on $\Sigma$.
This problem was already recognized by Costello and Yamazaki 
in \cite[Section 15.3]{CY3}, who have proposed a 
solution which we will now adapt to our context:
Let $G$ denote a choice of Lie group whose underlying Lie algebra is $\g$.
The undesired cohomologies observed above are a consequence
of the fact that our $L_\infty$-algebra 
$\big(\F(X),\ell\big)$ describes perturbative aspects
(i.e.\ a formal moduli problem) 
of $4d$ semi-holomorphic Chern-Simons theory around
the trivial holomorphic principal $G$-bundle $C\times G\to C$
over the Riemann surface. These cohomologies would vanish 
if one can find a more suitable base point for the perturbative construction,
which is given by a holomorphic principal $G$-bundle $P\to C$
whose adjoint bundle $\g_P\to C$ has the property that 
the divisor-twisted Dolbeault cohomologies
\begin{flalign}\label{eqn:assumption}
\mathsf{H}^0\, \Omega^{0,\bullet}\big(C,\g_P\otimes L_{(\omega)_0^\pm + (\omega)_{\infty}^\pm}\big)
\,\cong\,\mathsf{H}^1\, \Omega^{0,\bullet}\big(C,\g_P\otimes L_{(\omega)_0^\pm + (\omega)_{\infty}^\pm}\big)\,\cong\,0
\end{flalign}
vanish. (Note that the $+$ and $-$ conditions in \eqref{eqn:assumption}
are linked by Serre duality, so it suffices to demand \eqref{eqn:assumption} for either $+$ or $-$.)
Following \cite{CY3} and \cite{Derryberry}, we assume that
such holomorphic principal $G$-bundle $P\to C$ exists and make a choice. Perturbing 
$4d$ semi-holomorphic Chern-Simons theory around $P\to C$ modifies of course the
$L_\infty$-algebras from Definitions \ref{def:auxiliaryLinfty}, 
\ref{def:singularLinfty} and \ref{def:boundaryLinfty}, which is concretely
given by replacing the trivial bundle $C\times \g\to C$ with the corresponding adjoint bundle $\g_P\to C$.
\begin{notation}\label{notation:bundle}
With a slight abuse of notation, we shall denote in what follows the $L_\infty$-algebras which are 
obtained by replacing in Definitions \ref{def:auxiliaryLinfty}, 
\ref{def:singularLinfty} and \ref{def:boundaryLinfty} 
the trivial bundle $C\times \g\to C$ with the adjoint bundle $\g_P\to C$
by the same symbols $\big(\E(X),\ell\big)$,
$\big(\L(X),\ell\big)$ and $\big(\F(X),\ell\big)$.
\end{notation}

\begin{propo}\label{prop:transferboundaryhigher}
Replacing the trivial bundle $C\times \g\to C$ with the adjoint bundle $\g_P\to C$,
the result in Proposition \ref{prop:transferboundary} 
generalizes to any compact Riemann surfaces $C$ with genus $g >0$. In particular,
the $L_\infty$-algebra $\big(\F(X),\ell\big)$ resulting from this replacement 
admits an equivalent model $\big(\F(\Sigma),\ell^\prime\big)$ whose 
underlying cochain complex is of the form \eqref{eqn:defretboundary2}
(with $N := g-1-\deg(\omega)_\infty^0$ extending Notation \ref{not:N})
and whose higher arity structure maps are given by homotopy transfer.
\end{propo}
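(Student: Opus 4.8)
The plan is to rerun the proof of Proposition~\ref{prop:transferboundary} essentially verbatim, the only genuinely new input being the component-wise twisted Dolbeault cohomologies, which in genus zero were supplied by Example~\ref{ex:CP1} but must now be recomputed for the adjoint-twisted bundles appearing in the higher-genus analogue of \eqref{eqn:dRDolboundary}. First I would record the relevant degrees. Since the adjoint bundle carries the $\Ad$-invariant pairing $\langle\cdot,\cdot\rangle$ used to build the cyclic structure, it is self-dual, $\g_P\cong\g_P^\ast$, and in particular $\deg(\g_P)=0$. Writing $r:=\dim\g$ and $E_0:=\g_P\otimes L_{(\omega)_\infty^0}$, this gives $\deg(E_0)=r\,\deg(\omega)_\infty^0$, so the Riemann-Roch theorem applied to the rank-$r$ bundle $E_0$ yields $\dim\mathsf{H}^0\,\Omega^{0,\bullet}(C,E_0)-\dim\mathsf{H}^1\,\Omega^{0,\bullet}(C,E_0)=-rN$, with $N=g-1-\deg(\omega)_\infty^0$ as in the statement. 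The $\pm$-form components vanish in both degrees directly by the standing assumption \eqref{eqn:assumption} on the base point $P$.

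The crux is to show that the $0$-form component is concentrated in degree $1$, i.e.\ that $\mathsf{H}^0\,\Omega^{0,\bullet}(C,E_0)\cong 0$. For this I would exploit the partial ordering of divisors built into Definition~\ref{def:bdycondition}: from $(\omega)_\infty^0\leq (\omega)_\infty^\pm$ and $(\omega)_0^\pm\geq 0$ one obtains $(\omega)_\infty^0\leq (\omega)_0^\pm+(\omega)_\infty^\pm$, so Construction~\ref{constr:linebundles} produces a holomorphic bundle morphism $E_0\to\g_P\otimes L_{(\omega)_0^\pm+(\omega)_\infty^\pm}$ which is injective on global holomorphic sections. Since the target has vanishing $\mathsf{H}^0$ by \eqref{eqn:assumption}, it follows that $\mathsf{H}^0\,\Omega^{0,\bullet}(C,E_0)\cong 0$, and then Riemann-Roch forces $\dim\mathsf{H}^1\,\Omega^{0,\bullet}(C,E_0)=rN$. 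For the $2$-form component $E_2:=\g_P\otimes L_{(\omega)_0+(\omega)_\infty^2}$ I would invoke Serre duality: using $\g_P\cong\g_P^\ast$, the identification $L_{(\omega)}\cong K$ with the canonical bundle, and the saturation identity \eqref{eqn:bdycondition2}, one checks that $E_2\cong K\otimes E_0^\ast$ is the Serre dual of $E_0$. Hence $\mathsf{H}^1\,\Omega^{0,\bullet}(C,E_2)\cong 0$ and $\dim\mathsf{H}^0\,\Omega^{0,\bullet}(C,E_2)=rN$. The four component-wise cohomologies therefore have exactly the same shape as in the genus-zero computation preceding \eqref{eqn:defretboundary0}, with the scalar space $\bbC^N$ replaced by an $rN$-dimensional coefficient space which we identify (non-canonically) with $\g^N$.

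With these cohomologies in place, the remainder of the argument is a transcription of the genus-zero case. I would choose continuous strong deformation retracts from each twisted Dolbeault complex to its cohomology via the Hodge theory of Appendix~\ref{app:Hodgetheory}, obtaining retracts of the form \eqref{eqn:defretboundary0}, extend them along the completed projective tensor product, and totalize as in Appendix~\ref{app:totaldefretract}. Exactly as before, the simplification of Remark~\ref{rem:deformedtotal} fails, since the $\pm$-cohomologies vanish while the $0$-form cohomology is nontrivial in degree $1$; one therefore uses the full deformed totalization \eqref{eqn:deformedtotal}, which produces a strong deformation retract of the shape \eqref{eqn:defretboundary1} onto a two-term complex of the form \eqref{eqn:defretboundary2}. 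The Homotopy Transfer Theorem~\ref{theo:homotopytransfer} then endows this complex with the transferred structure maps $\ell^\prime$ and yields the equivalent model $\big(\F(\Sigma),\ell^\prime\big)$.

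The step I expect to be the main obstacle is precisely the vanishing $\mathsf{H}^0\,\Omega^{0,\bullet}(C,E_0)\cong 0$. In genus zero it was automatic from the degree estimate $\deg(\omega)_\infty^0\leq -1<0$ through Theorem~\ref{theo:cohomologies}(b), but the twist by $\g_P$ destroys any such direct degree argument, and the vanishing can only be extracted from the assumption \eqref{eqn:assumption} by the divisor-ordering and Serre-duality bookkeeping above. Once this is secured, every other ingredient is a routine reprise of Proposition~\ref{prop:transferboundary}.
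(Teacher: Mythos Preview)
Your proof is correct and follows essentially the same line as the paper's: the $\pm$-components vanish by hypothesis~\eqref{eqn:assumption}, the $0$- and $2$-form components are handled by comparing divisors and invoking vanishing/Serre duality, and then the genus-zero machinery of Appendix~\ref{app:totaldefretract} and Theorem~\ref{theo:homotopytransfer} is rerun. Your inline ``injective on holomorphic sections'' argument is exactly the content of the paper's Proposition~\ref{prop:vanishing}(a), and your Serre-duality identification $E_2\cong K\otimes E_0^\ast$ is how the paper proves Proposition~\ref{prop:vanishing}(b), so the two proofs differ only in packaging.

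There is one substantive organizational difference worth noting. You obtain $\deg(\g_P)=0$ by assuming $\g_P\cong\g_P^\ast$ via the invariant pairing $\langle\cdot,\cdot\rangle$; this imports a hypothesis from Subsection~\ref{subsec:cyclic} that is not strictly part of the data of Definition~\ref{def:boundaryLinfty}. The paper instead \emph{derives} $c_1(\g_P)=0$ as a consequence of the standing assumption~\eqref{eqn:assumption} itself, via Riemann--Roch applied to the rank-$r$ bundle $\g_P\otimes L_{(\omega)_0^\pm+(\omega)_\infty^\pm}$ together with $\deg\big((\omega)_0^\pm+(\omega)_\infty^\pm\big)=g-1$. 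This is a slightly sharper observation: it shows that the cohomological constraint on the base point $P$ already forces the Chern class to vanish, without appealing to any extra structure on $\g$. Your route is quicker but less self-contained; the paper's route extracts one more piece of information from the hypotheses.
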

\begin{proof}
To prove this result, we have to reproduce the cohomology computations
\eqref{eqn:defretboundary0} in our present context. For the
$\pm$-form components along $\Sigma$, we have the twisted Dolbeault complex
$\Omega^{0,\bullet}\big(C,\g_P\otimes L_{(\omega)_0^\pm + (\omega)_{\infty}^\pm}\big)$,
whose cohomology vanishes in both degrees by our hypotheses \eqref{eqn:assumption}.
This directly yields the desired generalization of \eqref{eqn:defretboundary0pm}.
For later use, let us also note that, as a consequence of \eqref{eqn:assumption}, 
$\deg((\omega)_0^\pm + (\omega)_{\infty}^\pm) = g - 1$ and the Riemann-Roch theorem 
\begin{flalign}\label{eqn:RRhigher}
\dim \,\mathsf{H}^0\,\Omega^{0,\bullet}(C,E) - \dim \,\mathsf{H}^1\,\Omega^{0,\bullet}(C,E)\,=\,
c_1(E) + \mathrm{rank}(E) \,(1-g)
\end{flalign}
for higher-rank holomorphic vector bundles $E\to C$, see e.g.\ \cite{Huybrechts}, together with the fact that
$c_1(\g_P \otimes L_D) = c_1(\g_P) + \dim(\g) \, \deg(D)$ for any divisor $D : C \to \mathbb{Z}$,
it follows that the adjoint bundle $\g_P\to C$ must have a trivial first Chern class
$c_1(\g_P)=0$.
\sk

Let us consider now the twisted Dolbeault complex
$\Omega^{0,\bullet}\big(C,\g_P\otimes L_{(\omega)_{\infty}^0}\big)$
for the $0$-form component along $\Sigma$. Observe that by
Definition \ref{def:bdycondition} we have that
\begin{flalign}
(\omega)_{\infty}^0 - \big((\omega)_0^\pm + (\omega)_{\infty}^\pm\big)
\,=\,
\big((\omega)_{\infty}^0 - (\omega)_{\infty}^\pm\big) - (\omega)_0^\pm\,\leq\,0
\end{flalign}
is a non-positive divisor.
It then follows from our hypotheses \eqref{eqn:assumption} and Proposition \ref{prop:vanishing} (a) 
that $\mathsf{H}^0\,\Omega^{0,\bullet}\big(C,\g_P\otimes L_{(\omega)_{\infty}^0}\big)\cong 0$ is trivial.
The Riemann-Roch theorem \eqref{eqn:RRhigher} and $c_1(\g_P)=0$ implies
that $\mathsf{H}^1\,\Omega^{0,\bullet}\big(C,\g_P\otimes L_{(\omega)_{\infty}^0}\big)\cong \g^{N}$
with $N = g-1-\deg(\omega)_\infty^0$, yielding the desired generalization of \eqref{eqn:defretboundary00}.
\sk

Concerning the twisted Dolbeault complex
$\Omega^{0,\bullet}\big(C,\g_P\otimes L_{(\omega)_0+(\omega)_{\infty}^2}\big)$
for the $2$-form component along $\Sigma$, we observe that
\begin{flalign}
(\omega)_0 + (\omega)_\infty^2 - \big((\omega)_0^\pm + (\omega)_{\infty}^\pm\big)\,=\,
(\omega)_0^\mp + \big((\omega)_\infty^2 - (\omega)_{\infty}^\pm\big)\,\geq \,0\quad.
\end{flalign}
From \eqref{eqn:assumption} and Proposition \ref{prop:vanishing} (b) 
it then follows that $\mathsf{H}^1\,\Omega^{0,\bullet}\big(C,\g_P\otimes L_{(\omega)_0+(\omega)_{\infty}^2}\big)\cong 0$ 
is trivial. 
The Riemann-Roch theorem \eqref{eqn:RRhigher} and $c_1(\g_P)=0$ implies
that $\mathsf{H}^0\,\Omega^{0,\bullet}\big(C,\g_P\otimes L_{(\omega)_0+(\omega)_{\infty}^2}\big)\cong \g^{N}$, 
yielding the desired generalization of \eqref{eqn:defretboundary02}.
\end{proof}

\begin{rem}\label{rem:transferboundaryhigher}
The interpretation of $\big(\F(\Sigma),\ell^\prime\big)$
in terms of a $\sigma$-model-type field theory
from Remark \ref{rem:transferboundary} carries over 
verbatim to the present higher genus case. 
Hence, our results are compatible with the ones in 
\cite{CY3,Derryberry}, where it is argued by different
methods that $4d$ semi-holomorphic Chern-Simons theory
on $X=\Sigma\times C$ yields (integrable) $\sigma$-models
on spacetime $\Sigma$ also in the case of higher genus Riemann surfaces $C$.
\end{rem}

Let us consider now the $L_\infty$-algebra $\big(\L(X),\ell\big)$ from
Definition \ref{def:singularLinfty} which describes 
$4d$ semi-holomorphic Chern-Simons theory on $X=\Sigma\times C$ 
with singularities, but no boundary conditions. In genus $g=0$,
we gave in Proposition \ref{prop:transfersingular} and Remark
\ref{rem:transfersingular} an equivalent description 
$\big(\L(\Sigma),\ell^\prime\big)$ of this $L_\infty$-algebra,
which provided a clear interpretation in terms of Lax connections.
We will now investigate how this picture changes when passing to higher genus $g > 0$
and twisting the Dolbeault complexes with the adjoint
bundle $\g_P\to C$ from Notation \ref{notation:bundle}.
\sk

Our approach consists again of describing the relevant twisted Dolbeault cohomologies
which enter the individual components of $\big(\L(X),\ell\big)$.
Let us start with the twisted Dolbeault complexes 
$\Omega^{0,\bullet}\big(C,\g_{P}\otimes L_{(\omega)_0^\pm}\big)$ 
for the $\pm$-form components along $\Sigma$. Observing that
\begin{subequations}
\begin{flalign}
(\omega)_0^\pm - \big((\omega)_0^\pm + (\omega)_{\infty}^\pm\big) \,=\,- (\omega)_{\infty}^\pm \,\geq \,0
\end{flalign}
are non-negative divisors, we conclude from our hypotheses \eqref{eqn:assumption}
and Proposition \ref{prop:vanishing} (b) that the first cohomologies
\begin{flalign}
\mathsf{H}^1\,\Omega^{0,\bullet}\big(C,\g_{P}\otimes L_{(\omega)_0^\pm}\big)\,\cong\, 0
\end{flalign}
\end{subequations}
are trivial. Similarly, for the twisted Dolbeault complex
$\Omega^{0,\bullet}\big(C,\g_{P}\otimes L_{(\omega)_0}\big)$ 
for the $2$-form component along $\Sigma$, we observe that
\begin{subequations}
\begin{flalign}
(\omega)_0 - \big((\omega)_0 + (\omega)_{\infty}^2\big) \,=\,- (\omega)_{\infty}^2 \,\geq \,0\quad,
\end{flalign}
hence, using the result
$\mathsf{H}^1\,\Omega^{0,\bullet}\big(C,\g_P\otimes L_{(\omega)_0+(\omega)_{\infty}^2}\big)\cong 0$
from the proof of Proposition \ref{prop:transferboundaryhigher} combined with Proposition \ref{prop:vanishing} (b),
we deduce that the first cohomology
\begin{flalign}
\mathsf{H}^1\,\Omega^{0,\bullet}\big(C,\g_{P}\otimes L_{(\omega)_0}\big)\,\cong\, 0
\end{flalign}
\end{subequations}
is trivial.
The cohomologies of the twisted Dolbeault complex $\Omega^{0,\bullet}\big(C,\g_{P}\big)$ 
for the $0$-form component along $\Sigma$ seem to be more 
difficult to determine because the adjoint bundle $\g_P\to C$ is specified rather 
indirectly by our cohomological hypotheses in \eqref{eqn:assumption}. Nevertheless, 
a valuable piece of information can be extracted from the Riemann-Roch theorem \eqref{eqn:RRhigher},
which as a consequence of $c_1(\g_P)=0$ implies that
\begin{subequations}
\begin{flalign}
\dim\,\mathsf{H}^0\,\Omega^{0,\bullet}\big(C,\g_{P}\big) - 
\dim\,\mathsf{H}^1\,\Omega^{0,\bullet}\big(C,\g_{P}\big)\,=\, \dim(\g)\,(1-g)\quad.
\end{flalign}
Observing that the right-hand side is negative for genus $g\geq 2$, we obtain that
the first cohomology
\begin{flalign}
\mathsf{H}^1\,\Omega^{0,\bullet}\big(C,\g_{P}\big)\,\not\cong\,0 \qquad \text{(for $g\geq 2$)}
\end{flalign}
\end{subequations}
must be non-trivial for genus $g\geq 2$. (Note that this kind of argument is 
inconclusive in genus $g=1$, where this cohomology may or may not be trivial.)
These cohomologies are a new feature in the case of higher genus,
which in physical terminology correspond to $A_{\bar{z}}$ components
of the Lax connection that cannot be eliminated by gauge transformations.
\sk

To summarize our computations above, let us introduce the short-hand notations 
\begin{flalign}
\O_D(\g_P):= \mathsf{H}^0\,\Omega^{0,\bullet}\big(C,\g_P\otimes L_D\big)\quad,\qquad
\O_D^1(\g_P):= \mathsf{H}^1\,\Omega^{0,\bullet}\big(C,\g_P\otimes L_D\big) \quad,
\end{flalign}
and note that the analogue of \eqref{eqn:defretsingular} in the case of higher genus
is given by the complex
\begin{flalign}\label{eqn:defretsingularhigher}
\begin{gathered}
\big(\L(\Sigma),\dd^\prime\big)\,=\,
\left(\parbox{4cm}{\xymatrix@C=2em@R=0em{
~&~\Omega^{+}(\Sigma)\otimes \O_{(\omega)_0^+}(\g_P)  \ar[rdd]^-{\dd_\Sigma} ~&~ \\
~&~\oplus ~&~ \\
\Omega^{0}(\Sigma)\otimes \O(\g_P) \ar[ruu]^-{\dd_\Sigma^+}\ar[r]^-{\dd_\Sigma^-} \ar[rdd]_-{0}~&~ 
\Omega^{-}(\Sigma)\otimes \O_{(\omega)_0^-}(\g_P) \ar[r]^-{\dd_\Sigma}~&~ 
\Omega^{2}(\Sigma) \otimes \O_{(\omega)_0}(\g_P)\\
~&~\oplus ~&~ \\
~&~\Omega^{0}(\Sigma)\otimes \O^1(\g_P)  \ar[ruu]_-{\dd^\prime} ~&~
}}\right)\quad,
\end{gathered}
\end{flalign}
where the primed differential is given as in \eqref{eqn:defretboundary2} 
by $\dd^\prime := p^2\,\dd_\Sigma\,(h^+\dd_\Sigma^+ + h^-\dd_\Sigma^-)\,i^0$.
This implies that the result in Proposition \ref{prop:transfersingular} 
receives the following modification for a compact Riemann surfaces $C$ of genus $g> 0$.
\begin{propo}\label{prop:transfersingularhigher}
The $L_\infty$-algebra $\big(\L(X),\ell\big)$ obtained by
replacing in Definition \ref{def:singularLinfty}
the trivial bundle $C\times \g\to C$ with the adjoint bundle $\g_P\to C$
admits an equivalent model $\big(\L(\Sigma),\ell^\prime\big)$ whose 
underlying cochain complex is of the form \eqref{eqn:defretsingularhigher}
and whose higher arity structure maps are given by homotopy transfer.
\end{propo}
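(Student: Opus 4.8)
The plan is to mirror the strategy of the proofs of Propositions \ref{prop:transferboundary} and \ref{prop:transferboundaryhigher}, relying on the component-wise cohomology computations carried out immediately above the statement. First I would record, for the adjoint-twisted singular bicomplex underlying $\bigl(\L(X),\ell\bigr)$, that the $\pm$-form components have cohomology $\O_{(\omega)_0^\pm}(\g_P)$ concentrated in degree $0$ (their $\mathsf{H}^1$ vanishing), that the $2$-form component has cohomology $\O_{(\omega)_0}(\g_P)$ concentrated in degree $0$ (its $\mathsf{H}^1$ vanishing), while the $0$-form component carries \emph{both} a degree-$0$ part $\O(\g_P)$ and a possibly non-trivial degree-$1$ part $\O^1(\g_P)$. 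Following Appendix \ref{app:Hodgetheory}, I would choose continuous strong deformation retracts from each twisted Dolbeault complex onto its cohomology, extend them along the completed projective tensor product $\widehat{\otimes}$, and feed them into the totalization construction of Appendix \ref{app:totaldefretract}.

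The decisive structural observation is that this situation is a hybrid of the two genus-zero cases. The degree-$0$ part $\O(\g_P)$ of the $0$-form cohomology behaves exactly as in the singular case of Proposition \ref{prop:transfersingular}, whereas the degree-$1$ part $\O^1(\g_P)$ plays the role of the non-trivial top cohomology responsible for the scalar field in the boundary case of Proposition \ref{prop:transferboundary}. Since the $\pm$-form cohomologies vanish while the $0$-form cohomology is non-trivial in degree $1$, the simplification of Remark \ref{rem:deformedtotal} \emph{does not} apply, so I would work with the full deformed totalization \eqref{eqn:deformedtotal}. Reading off the total degrees (the $\Sigma$-form degree plus the Dolbeault degree $q\in\{0,1\}$) then produces precisely the cochain complex \eqref{eqn:defretsingularhigher}.

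The main work lies in identifying the transferred differential. For the degree-$0$ generator $\O(\g_P)$, the only surviving perturbation terms are $\dd_\Sigma^\pm$ into the $\pm$-form components, composed with the holomorphic line bundle inclusions $\O(\g_P)\to\O_{(\omega)_0^\pm}(\g_P)$ induced by $0\leq(\omega)_0^\pm$ from Construction \ref{constr:linebundles}; the component $\O(\g_P)\to\O^1(\g_P)$ must vanish by a degree count, since it would have to preserve the $\Sigma$-form degree while raising the Dolbeault degree, yet the transferred differential built from \eqref{eqn:deformedtotal} is assembled entirely from terms each carrying at least one factor of $\dd_\Sigma$, which strictly raises the $\Sigma$-form degree — this is the $0$-label in \eqref{eqn:defretsingularhigher}. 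For the degree-$1$ generator $\O^1(\g_P)$, the acyclicity of the $\pm$-form components allows $h^\pm$ to contract the intermediate Dolbeault degree, so the totalization yields exactly the second-order operator $\dd^\prime = p^2\,\dd_\Sigma\,(h^+\dd_\Sigma^+ + h^-\dd_\Sigma^-)\,i^0$ into the $2$-form component, reproducing the differential of \eqref{eqn:defretboundary2}; the remaining $\dd_\Sigma$-maps from the $\pm$-form into the $2$-form component, composed with the inclusions induced by $(\omega)_0^\pm\leq(\omega)_0$, are immediate.

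With the complex \eqref{eqn:defretsingularhigher} and its differential established, I would conclude by invoking the Homotopy Transfer Theorem \ref{theo:homotopytransfer} to equip $\L(\Sigma)$ with a transferred $L_\infty$-structure $\ell^\prime$ that is $\infty$-quasi-isomorphic to $\bigl(\L(X),\ell\bigr)$, the higher brackets being computed from the deformed totalized retract via the tree formula \eqref{eqn:ellntransfer}. The genuine obstacle — in contrast with Proposition \ref{prop:transfersingular} — is precisely that Remark \ref{rem:deformedtotal} is no longer available: because $\O^1(\g_P)$ is non-trivial in degree $1$ while the $\pm$-components are acyclic, the homotopy $h$ fails to annihilate $\mathrm{im}(i)$, so the higher transferred brackets $\ell^\prime_n$ need not vanish and must be recorded in their homotopy-transferred form rather than simplified away.
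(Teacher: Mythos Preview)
Your proposal is correct and follows the same route as the paper, which in fact states the proposition without a separate proof, relying entirely on the cohomology computations in the preceding text together with the totalization construction of Appendix~\ref{app:totaldefretract} and the Homotopy Transfer Theorem~\ref{theo:homotopytransfer}. Your fleshing-out of the differential on \eqref{eqn:defretsingularhigher} --- in particular the bidegree argument forcing the $\O(\g_P)\to\O^1(\g_P)$ component to vanish and the identification of $\dd^\prime$ as the second-order operator $p^2\,\dd_\Sigma\,(h^+\dd_\Sigma^+ + h^-\dd_\Sigma^-)\,i^0$ --- is accurate and matches what the paper records.

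One small slip in your final paragraph: you write that ``the homotopy $h$ fails to annihilate $\mathrm{im}(i)$'', but this is literally false. The side condition $h\,i=0$ holds for the undeformed totalized retract, and since $\widetilde{h}=h$ and $\widetilde{i}=i+h\,\dd_\Sigma\,i$, one also has $h\,\widetilde{i}=h\,i+h^2\,\dd_\Sigma\,i=0$. The correct reason the simplification at the end of Remark~\ref{rem:homotopytransfer} is unavailable is that $\ell_2$ does not restrict along $\widetilde{i}$: for $\xi\in\Omega^0(\Sigma)\otimes\O^1(\g_P)$, the element $\widetilde{i}(\xi)$ acquires components $h^\pm\dd_\Sigma^\pm i^0(\xi)$ in bidegree $(\pm,0)$, and bracketing these against $i^0(\xi)$ in bidegree $(0,1)$ produces terms in bidegree $(\pm,1)$ which cannot lie in $\mathrm{im}(\widetilde{i})$. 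This does not affect the proof of the proposition itself, since the statement only asserts that the higher brackets are given by homotopy transfer, without claiming any simplification.
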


\begin{rem}\label{rem:transfersingularhigher}
Comparing with the case of genus $g=0$
from Proposition \ref{prop:transfersingular} and 
Remark \ref{rem:transfersingular}, we observe that the main new feature 
is the existence of additional non-vanishing cohomologies 
$\O^1(\g_P)=\mathsf{H}^1\,\Omega^{0,\bullet}\big(C,\g_P\big)  \not\cong 0$,
at least in the case of genus $g\geq 2$. The physical 
interpretation of these cohomology classes is in terms of 
$A_{\bar{z}}$ components of the Lax connection that cannot be 
eliminated by gauge transformations. Maurer-Cartan elements
$A_+ \oplus A_-\oplus\xi\in\L(\Sigma)^1$ in the 
$L_\infty$-algebra $\big(\L(\Sigma),\ell^\prime\big)$ are thus given by
triples of elements
\begin{subequations}
\begin{flalign}
A_+\,\in\, \Omega^{+}(\Sigma)\otimes \O_{(\omega)_0^+}(\g_P) ~~,\quad
A_-\,\in\, \Omega^{-}(\Sigma)\otimes \O_{(\omega)_0^-}(\g_P) ~~,\quad
\xi\,\in\,\Omega^{0}(\Sigma)\otimes \O^1(\g_P)\quad,
\end{flalign}
such that the meromorphic connection $1$-form 
$A = A_+ + A_-\in \Omega^{1}(\Sigma)\widehat{\otimes}\M(\g_P)$
has curvature
\begin{flalign}\label{eqn:Laxnonflat}
\dd_\Sigma A +\tfrac{1}{2}\,\wedgebracket{A}{A}\,=\, - \dd^\prime\xi + \cdots\quad
\end{flalign}
\end{subequations}
which is determined by the new cohomology classes $\xi$. The
terms omitted on the right-hand side are obtained from the higher arity components
$\ell^\prime_n$, $n\geq 2$, of the transferred $L_\infty$-structure.
Hence, in the model provided by the $L_\infty$-algebra $\big(\L(\Sigma),\ell^\prime\big)$,
Lax connections on higher genus Riemann surfaces are meromorphic objects, but they have in general
a non-trivial curvature which is specified by the Maurer-Cartan equation \eqref{eqn:Laxnonflat}. 
\sk

For readers who are irritated by such not necessarily flat Lax connections, 
let us go back to the equivalent original model $\big(\L(X),\ell\big)$ 
for the $L_\infty$-algebra of Lax connections and recover flatness 
at the price of meromorphicity. 
Indeed, Maurer-Cartan elements in $\big(\L(X),\ell\big)$ 
are given by triples $B_+ \oplus B_-\oplus \zeta \in \L(X)^1$
satisfying the equations
\begin{flalign}
\dd_\Sigma \big(B_+ \oplus B_-\big) + \tfrac{1}{2}\,\wedgebracket{B_+\oplus B_-}{B_+\oplus B_-} \,=\,0\quad,\qquad
\delbar B_{\pm} + \dd_\Sigma^\pm \zeta + \wedgebracket{B_{\pm}}{\zeta}\,=\,0\quad.
\end{flalign}
The first equation manifestly encodes flatness of the connection $B_+ \oplus B_-$ along $\Sigma$,
while the second equation determines a violation of the meromorphicity of $B_\pm$ 
in terms of the field $\zeta$. Hence, in this alternative but equivalent model,
Lax connections on higher genus Riemann surfaces are flat objects, but they are not necessarily 
meromorphic since the field $\zeta$ cannot in general be eliminated by gauge transformations 
due to the non-vanishing cohomologies 
$\mathsf{H}^1\,\Omega^{0,\bullet}\big(C,\g_P\big)  \not\cong 0$.
\sk

To summarize, we presented two equivalent models $\big(\L(X),\ell\big)$ and 
$\big(\L(\Sigma),\ell^\prime\big)$ for the $L_\infty$-algebra 
of Lax connections on higher genus Riemann surfaces. 
From the point of view of the first model, Lax connections are flat objects 
whose meromorphicity is obstructed, while from the point of view of the second 
model, they are meromorphic objects whose flatness is obstructed. 
In both cases the obstruction comes from the non-vanishing cohomologies 
$\mathsf{H}^1\,\Omega^{0,\bullet}\big(C,\g_P\big)  \not\cong 0$.
\end{rem}

\begin{rem}\label{rem:holonomy}
The feature from Remark \ref{rem:transfersingularhigher} 
that either meromorphicity or flatness of the Lax connection 
is obstructed in the higher genus case might look at first sight 
worrying from an integrability perspective. Fortunately, it turns 
out that this does \textit{not} obstruct the 
usual construction of conserved charges from Wilson 
loops of the Lax connection.\footnote{We are grateful to Sylvain Lacroix for pointing this out to us.}
We now explain why this is the case, assuming that the underlying
Lie group $G$ is a matrix Lie group in order to simplify our presentation. To build conserved
charges from Wilson loops, let us consider a sufficiently small open subset
$U\subset C$ of the Riemann surface on which the holomorphic vector bundle $\g_P\to C$ trivializes 
and the divisor $(\omega)\vert_U = 0$ is trivial.
Working in the equivalent model $\big(\L(X),\ell\big)$ in which Lax connections
are flat but not meromorphic, we can identify
locally on $U$ a Lax connection with differential forms $B\in \Omega^{1,(0,0)}(\Sigma\times U,\g)$
and $\zeta \in \Omega^{0,(0,1)}(\Sigma\times U,\g)$ satisfying
\begin{flalign}\label{eqn:Laxhighergenus}
\dd_\Sigma B +\tfrac{1}{2}\,\wedgebracket{B}{B} \,=\,0\quad,\qquad
\delbar B + \dd_\Sigma \zeta + \wedgebracket{B}{\zeta}\,=\,0\quad.
\end{flalign}
Recall that, given a smooth curve $\gamma:\bbR\to\Sigma\,,~s\mapsto\gamma(s)$ in spacetime, 
the holonomy $h_\gamma : \bbR\times U \to G$ of $B$ is determined by
the differential equation
\begin{flalign}
\dd_\bbR h_\gamma + \gamma^\ast(B)\,h_\gamma\,=\,0\quad,\qquad h_{\gamma}\vert_{s=0}\,=\,\oone\quad.
\end{flalign}
As a consequence of the flatness \eqref{eqn:Laxhighergenus} of $B$,
one obtains for a loop $\gamma$ with $\gamma(0) = \gamma(1)$ a family of 
conserved charges $\mathrm{Tr}\big(h_\gamma\vert_{s=1}\big)$ 
on $\Sigma$ which depends on the complex parameter $U\subset C$. We now show that 
$\mathrm{Tr}\big(h_\gamma\vert_{s=1}\big)$ depends holomorphically on $U\subset C$.
The key observation is that the quantity $h_\gamma^{-1} \delbar h_\gamma$ satisfies the differential
equation
\begin{flalign}
\dd_\bbR\big(h_\gamma^{-1} \delbar h_\gamma\big)\,=\, h_{\gamma}^{-1}\,\gamma^\ast(\delbar B)\, h_\gamma
\,=\, -\dd_{\bbR}\big(h_{\gamma}^{-1}\,\gamma^\ast(\zeta)\,h_{\gamma}\big)\quad,\qquad
(h_\gamma^{-1}\, \delbar h_\gamma)\vert_{s=0}\,=\,0\quad,
\end{flalign}
where in the last step we used the second equation in \eqref{eqn:Laxhighergenus}.
The unique solution is
\begin{flalign}
h_\gamma^{-1} \delbar h_\gamma\, =\, \gamma^\ast(\zeta)\vert_{s=0} -h_{\gamma}^{-1} \,\gamma^\ast(\zeta)\, h_\gamma\quad
\Longleftrightarrow\quad
\delbar h_\gamma \,=\, h_\gamma \,\gamma^\ast(\zeta)\vert_{s=0}  - \gamma^\ast(\zeta)\, h_\gamma
\quad.
\end{flalign}
For a loop with $\gamma(0) = \gamma(1)$ one has $\gamma^\ast(\zeta)\vert_{s=1} = \gamma^\ast(\zeta)\vert_{s=0}$,
hence 
\begin{flalign}
\delbar \mathrm{Tr}\big(h_\gamma\vert_{s=1}\big)\,=\, \mathrm{Tr}\Big(
h_\gamma\vert_{s=1} \,\gamma^\ast(\zeta)\vert_{s=0}  - \gamma^\ast(\zeta)\vert_{s=0}\, h_\gamma\vert_{s=1}\Big)\,=\,0
\end{flalign}
by using cyclicity of the trace.
\end{rem}


\section*{Acknowledgments}
We would like to thank Johannes Hofscheier and Sylvain Lacroix
for useful discussions. The work of M.B.\ is supported in part by the MUR Excellence 
Department Project awarded to Dipartimento di Matematica, 
Universit{\`a} di Genova (CUP D33C23001110001) and it is fostered by 
the National Group of Mathematical Physics (GNFM-INdAM (IT)). 
A.S.\ and B.V.\ gratefully acknowledge the support of the Engineering and Physical
Sciences Research Council (UKRI1723).


\appendix

\section{\label{app:details}Technical details}

\subsection{\label{app:totaldefretract}Totalization of strong deformation retracts}
A recurring problem in the bulk of our paper is as follows: 
Given a bicomplex $(V^{\bullet,\bullet},\dd,\delbar)$ 
and a family of strong deformation retracts
\begin{equation}\label{eqn:componentwiseretracts}
\begin{tikzcd}
\big(V^{\prime k,\bullet},\delbar^\prime\big)
\ar[r,shift right=-1ex,"i^k"] & \ar[l,shift right=-1ex,"p^k"] \big(V^{k,\bullet},\delbar\big) 
\ar[loop,out=-20,in=20,distance=20,swap,"h^k"]
\end{tikzcd}\quad,~~\text{for all } k\in\bbZ\quad,
\end{equation}
we would like to construct a strong deformation retract from the totalization
$\big(\mathrm{Tot}^{\oplus}\big(V^{\bullet,\bullet}\big),\dd+\delbar\big)$.
In order to match the scenarios arising in our paper, we can assume
that the given bicomplex is concentrated in vertical degrees $0$ and $1$,
i.e.\ $V^{p,q} = 0$ for all $p\in\bbZ$ and all $q\not\in \{0,1\}$.
\sk

Ignoring for the moment the differential $\dd$, we obtain a strong deformation retract
\begin{equation}\label{eqn:undeformedtotalization}
\begin{tikzcd}
\big(\mathrm{Tot}^{\oplus}\big(V^{\prime \bullet,\bullet}\big),\delbar^\prime\big)
\ar[r,shift right=-1ex,"i"] & \ar[l,shift right=-1ex,"p"] \big(\mathrm{Tot}^{\oplus}\big(V^{\bullet,\bullet}\big),\delbar\big) 
\ar[loop,out=-20,in=20,distance=30,swap,"h"]
\end{tikzcd}
\end{equation}
with the linear maps $i,p,h$ defined component-wise on the degree $(k,\bullet)$
elements of the totalizations by $i^k,p^k,h^k$, for all $k\in\bbZ$. 
Observe that the differential $\dd$ defines a perturbation of the cochain complex
$\big(\mathrm{Tot}^{\oplus}\big(V^{\bullet,\bullet}\big),\delbar\big)$ since
$(\dd+\delbar)^2=0$ as a consequence of $(V^{\bullet,\bullet},\dd,\delbar)$ being a bicomplex.
Note further that this perturbation is small: Since the linear map $\dd\,h : 
\mathrm{Tot}^{\oplus}\big(V^{\bullet,\bullet}\big)\to \mathrm{Tot}^{\oplus}\big(V^{\bullet,\bullet}\big)$ 
is of bidegree $(1,-1)$ and $V^{\bullet,\bullet}$ is by hypothesis
concentrated in vertical degrees $\{0,1\}$, it follows for degree reasons 
that $(\dd\,h)^2 =0$ and hence $\id-\dd\,h$ is invertible by
\begin{flalign}
(\id-\dd\,h)^{-1} \,=\, \id + \dd\,h \,:\,\mathrm{Tot}^{\oplus}\big(V^{\bullet,\bullet}\big)~\longrightarrow~
\mathrm{Tot}^{\oplus}\big(V^{\bullet,\bullet}\big)\quad.
\end{flalign}
Applying the homological perturbation lemma from Proposition \ref{prop:HPT}
then yields the deformed strong deformation retract
\begin{subequations}\label{eqn:deformedtotal}
\begin{equation}
\begin{tikzcd}
\big(\mathrm{Tot}^{\oplus}\big(V^{\prime \bullet,\bullet}\big),\dd^\prime + \delbar^\prime\big)
\ar[r,shift right=-1ex,"\widetilde{i}"] & \ar[l,shift right=-1ex,"\widetilde{p}"] \big(\mathrm{Tot}^{\oplus}\big(V^{\bullet,\bullet}\big),
\dd + \delbar\big) 
\ar[loop,out=-20,in=20,distance=40,swap,"\widetilde{h}"]
\end{tikzcd}
\end{equation}
with
\begin{flalign}
\dd^\prime\,&=\,p\,\dd\,i + p\,\dd\,h\,\dd\,i\quad,\\
\widetilde{i}\,&=\,i + h\,\dd\,i\quad,\\
\widetilde{p}\,&=\,p + p\,\dd\,h\quad,\\
\widetilde{h}\,&=\,h\quad.
\end{flalign}
\end{subequations}
\begin{rem}\label{rem:deformedtotal}
In some (but not all) of the models studied in this work,
the deformed strong  deformation retract \eqref{eqn:deformedtotal}
admits a further simplification.
Suppose that the differential $\dd : V^{\bullet,\bullet}\to V^{\bullet,\bullet}$, describing the
perturbation, restricts along the (necessarily injective) map
$i : V^{\prime\bullet,\bullet}\to V^{\bullet,\bullet}$ obtained from the
component-wise strong deformation retracts \eqref{eqn:componentwiseretracts}, i.e.\ 
\begin{flalign}
\begin{gathered}
\xymatrix@C=3em{
\ar[d]_-{i} V^{\prime\bullet,\bullet} \ar@{-->}[r]^-{\dd}~&~  V^{\prime\bullet,\bullet}\ar[d]^-{i}\\
 V^{\bullet,\bullet} \ar[r]_-{\dd}~&~ V^{\bullet,\bullet}
}
\end{gathered}\quad.
\end{flalign}
From the side condition $h\,i=0$ for a strong deformation retract, 
it then follows that \eqref{eqn:deformedtotal} simplifies to 
\begin{flalign}
\dd^\prime\,=\,\dd ~~,\quad
\widetilde{i}\,=\,i~~, \quad
\widetilde{p}\,=\,p + p\,\dd\,h~~,\quad
\widetilde{h}\,=\,h\quad.
\end{flalign}
In particular, note that in this case the cochain complex on 
the left-hand side of \eqref{eqn:deformedtotal} agrees with the
totalization of the bicomplex
$(V^{\prime \bullet,\bullet},\dd,\delbar^\prime)$ 
and the inclusion map $\widetilde{i}=i$ does 
not receive any deformations.
\end{rem}

\subsection{\label{app:Hodgetheory}Hodge theory}
In this appendix we briefly recall some basic aspects 
of Hodge theory for elliptic complexes, 
see e.g.\ \cite{Warner,Wells} for introductions. 
For the purpose of our paper, it suffices
to consider the case of the Dolbeault complex 
\begin{flalign}
\big(\Omega^{0,\bullet}(C,E),\delbar\big)\,=\,
\Big(
\xymatrix{
\Omega^{0,0}(C,E)\ar[r]^-{\delbar}~&~ \Omega^{0,1}(C,E)
}
\Big)
\end{flalign}
on a compact Riemann surface $C$ which is twisted by a holomorphic vector bundle $E\to C$.
\sk

Recall that for any choice of Hermitian metric on $C$ and Hermitian fiber metric $h$ on $E\to C$, 
one can define a (complex-antilinear) Hodge operator
\begin{flalign}
\ast_h\,:\, \xymatrix{
\overline{\Omega^{p,q}(C,E)} \ar[r]^-{\cong}~&~\Omega^{1-p,1-q}(C,E^\ast)
}\quad,
\end{flalign}
for all $p,q\in\{0,1\}$, where $\overline{V}$ denotes the complex conjugate of a vector space $V$
and $E^\ast\to C$ denotes the dual of the holomorphic vector bundle $E\to C$.
We denote the associated Hermitian inner products by
\begin{flalign}
\pair{\alpha}{\beta} \,:=\,\int_C \ast_h(\alpha)\wedge \beta\quad,
\end{flalign}
for all $\alpha,\beta\in \Omega^{p,q}(C,E)$.
The adjoint $\delbar^\ast : \Omega^{p,q}(C,E)\to \Omega^{p,q-1}(C,E)$ of 
the $\delbar$-operator is defined by $\pair{\delbar^\ast\alpha}{\beta}=\pair{\alpha}{\delbar \beta}$, 
for all $\alpha\in \Omega^{p,q}(C,E)$ and $\beta\in \Omega^{p,q-1}(C,E)$, and it reads explicitly as
\begin{flalign}\label{eqn:delbaradjointexplicit}
\delbar^{\ast} \alpha\,=\,- (-1)^{p+q} \,\ast_h^{-1}\delbar \ast_h\!(\alpha)\quad,
\end{flalign} 
for all $\alpha\in \Omega^{p,q}(C,E)$. We denote by 
$\Delta := \delbar^\ast \,\delbar + \delbar\,\delbar^\ast: \Omega^{p,q}(C,E)\to \Omega^{p,q}(C,E)$
the corresponding Laplacians.
\sk

Specializing the Hodge decomposition theorem to 
the twisted Dolbeault complex $\big(\Omega^{0,\bullet}(C,E),\delbar\big)$
then gives orthogonal direct sum decompositions
\begin{flalign}
\Omega^{0,0}(C,E)\,=\,\mathrm{ker}(\delbar)\oplus \mathrm{im}(\delbar^\ast)\quad,\qquad
\Omega^{0,1}(C,E)\,=\,\mathrm{ker}(\delbar^\ast)\oplus \mathrm{im}(\delbar)\quad,
\end{flalign}
such that the twisted Dolbeault cohomologies can be identified as
\begin{flalign}
\mathsf{H}^0\,\Omega^{0,\bullet}(C,E)\,=\,\mathrm{ker}(\delbar)\quad,\qquad
\mathsf{H}^1\,\Omega^{0,\bullet}(C,E)\,\cong\,\mathrm{ker}(\delbar^\ast) \quad.
\end{flalign}
We denote by $i : \mathsf{H}^\bullet\,\Omega^{0,\bullet}(C,E)\to \Omega^{0,\bullet}(C,E)$
and $p: \Omega^{0,\bullet}(C,E)\to \mathsf{H}^\bullet\,\Omega^{0,\bullet}(C,E)$ the inclusion
and projection maps associated with these direct sum decompositions.
Hodge theory further implies the existence of Green's operators
$G : \Omega^{0,q}(C,E)\to \Omega^{0,q}(C,E)$ satisfying $\Delta\,G = G\,\Delta = \id - i\,p$,
$\delbar\,G = G\,\delbar$ and $\delbar^\ast\,G = G\,\delbar^\ast$. This implies that
\begin{equation}
\begin{tikzcd}
\big(\mathsf{H}^\bullet\,\Omega^{0,\bullet}(C,E),0\big) \ar[r,shift right=-1ex,"i"] & \ar[l,shift right=-1ex,"p"] \big(\Omega^{0,\bullet}(C,E),\delbar \big) \ar[loop,out=-20,in=20,distance=30,swap,"h:=-\delbar^\ast G"]
\end{tikzcd}
\end{equation}
defines a strong deformation retract from the twisted Dolbeault complex 
$\big(\Omega^{0,\bullet}(C,E),\delbar \big)$ to its cohomology.
Using some basic results from the theory of Fr{\'e}chet spaces,
one easily argues that this strong  deformation retract is continuous with respect to 
the finite-dimensional vector space topologies on $\mathsf{H}^\bullet\,\Omega^{0,\bullet}(C,E)$ 
and the standard Fr{\'e}chet topology on $\Omega^{0,\bullet}(C,E)$.
These arguments are spelled out explicitly in \cite[Example 5.6]{BGMS}.

\subsection{\label{app:compatible}Compatible pairs of continuous strong deformation retracts}
To obtain a homotopy transfer of cyclic structures in Subsection \ref{subsec:cyclic},
one has to be able to choose continuous strong deformation retracts which are compatible
with the relevant cyclic structure that is built from a choice of meromorphic 
$1$-form $\omega\in \M^{(1)}(C)\setminus\{0\}$. In this appendix we investigate and
solve this problem at the level of its building blocks, which allows us to streamline the
proof of Proposition \ref{prop:cyclictransfer}.
\sk

Let us start with stating the problem precisely: Suppose that we are given a 
meromorphic $1$-form $\omega\in \M^{(1)}(C)\setminus\{0\}$, a holomorphic vector bundle
$E\to C$ and a divisor $D:C\to \bbZ$ on a compact Riemann surface $C$. Presenting $\omega$
as a $(1,0)$-form $\omega\in\Omega^{1,0}(C,L_{-(\omega)})$ satisfying
$\delbar\omega=0$ and denoting by 
\begin{flalign}
D^\prime \,:=\, (\omega)-D
\end{flalign}
the complementary divisor, we obtain the non-degenerate pairing
\begin{flalign}
\pair{\cdot}{\cdot}_\omega^{}\,:\, \Omega^{0,\bullet}(C,E^\ast\otimes L_{D^\prime})\otimes
 \Omega^{0,\bullet}(C,E\otimes L_D)~\longrightarrow~\bbC~~,\quad
\zeta\otimes\alpha~\longmapsto~\int_C\omega\wedge\zeta\wedge\alpha
\end{flalign}
of degree $-1$. Our goal is to show that there exist 
two continuous strong deformation retracts
\begin{subequations}
\begin{equation}
\begin{tikzcd}
\big(\mathsf{H}^\bullet\,\Omega^{0,\bullet}(C,E\otimes L_D),0\big) \ar[r,shift right=-1ex,"i_D"] & \ar[l,shift right=-1ex,"p_D"] \big(\Omega^{0,\bullet}(C,E\otimes L_D),\delbar \big) \ar[loop,out=-20,in=20,distance=45,swap,"h_D"]
\end{tikzcd}
\end{equation}
and
\begin{equation}
\begin{tikzcd}
\big(\mathsf{H}^\bullet\,\Omega^{0,\bullet}(C,E^\ast\otimes L_{D^\prime}),0\big) \ar[r,shift right=-1ex,"i_{D^\prime}"] & \ar[l,shift right=-1ex,"p_{D^\prime}"] \big(\Omega^{0,\bullet}(C,E^\ast\otimes L_{D^\prime}),\delbar \big) \ar[loop,out=-20,in=20,distance=45,swap,"h_{D^\prime}"]
\end{tikzcd}\quad,
\end{equation}
\end{subequations}
which are compatible with $\pair{\cdot}{\cdot}_\omega^{}$ in the following sense:
\begin{enumerate}
\item $\pair{\zeta}{\alpha}_\omega^{} =0$, 
for all $\zeta\in \mathrm{im}(i_{D^\prime})$ and $\alpha \in \mathrm{ker}(p_{D})$,

\item $\pair{\zeta}{\alpha}_\omega^{} =0$, 
for all $\zeta\in \mathrm{ker}(p_{D^\prime})$ and $\alpha \in \mathrm{im}(i_{D})$, and

\item $\pair{h_{D^\prime}(\zeta)}{\alpha}_\omega^{} = 
(-1)^{\vert \zeta\vert}\, \pair{\zeta}{h_D(\alpha)}_\omega^{}$, for all 
$\zeta\in \Omega^{0,\bullet}(C,E^\ast\otimes L_{D^\prime})$ and $\alpha \in \Omega^{0,\bullet}(C,E\otimes L_D)$.
\end{enumerate}

To obtain the first continuous strong deformation retract $(i_D,p_D,h_D)$, we apply Hodge theory
from Appendix \ref{app:Hodgetheory}. Recall that this requires the choice of a Hermitian metric on $C$
and a Hermitian fiber metric on $E\otimes L_D\to C$, determining the (complex-antilinear) Hodge operator
$\ast_h :\overline{\Omega^{p,q}(C,E\otimes L_D)}\to\Omega^{1-p,1-q}(C,(E \otimes L_D)^\ast) \cong 
\Omega^{1-p,1-q}(C,E^\ast\otimes L_{-D})$.
In order to keep track of the divisor, we 
denote the associated Hermitian inner product, adjoint differential, Laplacian and Green's operator 
by $\pair{\cdot}{\cdot}_{D}^{}$, $\delbar_D^\ast$, $\Delta_{D}$ and $G_D$.
We then define a Hermitian inner product on the complementary twisted Dolbeault complex
$\Omega^{0,\bullet}(C,E^\ast\otimes L_{D^\prime})$ by
\begin{flalign}
\nn \pair{\cdot}{\cdot}_{D^\prime}^{}\,:\,\overline{\Omega^{0,q}(C,E^\ast\otimes L_{D^\prime})}
\otimes \Omega^{0,q}(C,E^\ast \otimes L_{D^\prime})
~&\longrightarrow~\bbC~~,\quad\\
\zeta\otimes \eta~&\longmapsto~\int_{C}\ast_{h}^{-1}\big(\omega\wedge\zeta\big)\wedge(\omega\wedge\eta)\quad,
\end{flalign}
which is well-defined as a consequence of 
$L_{-(\omega)}\otimes L_{D^\prime} = L_{-(\omega)}\otimes L_{(\omega)-D}\cong L_{-D}$.
The corresponding adjoint $\delbar_{D^\prime}^\ast$ of the $\delbar$-operator 
is defined by $\pair{\delbar_{D^\prime}^\ast\zeta}{\eta}_{D^\prime}^{}=\pair{\zeta}{\delbar \eta}_{D^\prime}^{}$, 
for all $\zeta\in \Omega^{0,1}(C,E^\ast\otimes L_{D^\prime})$ and $\eta\in \Omega^{0,0}(C,E^\ast\otimes L_{D^\prime})$, 
and it reads explicitly as
\begin{flalign}\label{eqn:delbaradjointexplicitwithomega}
\omega\wedge \delbar^{\ast}_{D^\prime} \zeta\,=\, \ast_h\delbar \ast_h^{-1}\!\big(\omega\wedge\zeta)\quad,
\end{flalign} 
for all $\zeta\in \Omega^{0,1}(C,E^\ast\otimes L_{D^\prime})$. We denote by 
$\Delta_{D^\prime} := \delbar_{D^\prime}^\ast \,\delbar + \delbar\,\delbar_{D^\prime}^\ast$ 
the corresponding Laplacian. Applying Hodge theory to this context yields the second
continuous strong deformation retract $(i_{D^\prime},p_{D^\prime},h_{D^\prime})$
with $h_{D^\prime} = - \delbar^\ast_{D^\prime} G_{D^\prime}$ determined by
the Green's operator $G_{D^\prime}$ for $\Delta_{D^\prime}$.
\sk

The key ingredient which allows one to prove that these 
two continuous strong deformation retracts are compatible
with $\pair{\cdot}{\cdot}_\omega^{}$ is the observation that 
the adjoint differentials satisfy
\begin{flalign}\label{eqn:delbarastcompatible}
\pair{\delbar_{D^\prime}^\ast\zeta }{\alpha}_\omega^{}\,=\,-(-1)^{\vert\zeta\vert}\,\pair{\zeta}{\delbar^\ast_{D}\alpha}_\omega^{}\,=\,\pair{\zeta}{\delbar^\ast_{D}\alpha}_\omega^{}\quad,
\end{flalign}
for all $\zeta\in \Omega^{0,1}(C,E^\ast\otimes L_{D^\prime})$ and $\alpha\in \Omega^{0,1}(C,E\otimes L_D)$. 
This can be shown by a short calculation using the explicit formulas \eqref{eqn:delbaradjointexplicit}
for $\delbar_D^\ast$ and \eqref{eqn:delbaradjointexplicitwithomega} for $\delbar^{\ast}_{D^\prime}$.
The first compatibility condition then follows from the observation that the pairing
$\pair{\cdot}{\cdot}_\omega^{}$ acts trivially on
$\mathrm{im}(i_{D^\prime})^{0} = \mathrm{ker}(\delbar)$ and $\mathrm{ker}(p_{D})^1 = \mathrm{im}(\delbar)$,
as well as on $\mathrm{im}(i_{D^\prime})^{1} = \mathrm{ker}(\delbar_{D^\prime}^{\ast})$ and 
$\mathrm{ker}(p_{D})^0 = \mathrm{im}(\delbar_D^\ast)$. 
Similarly, the second compatibility condition follows from the observation that
the pairing $\pair{\cdot}{\cdot}_\omega^{}$ acts trivially on
$\mathrm{ker}(p_{D^\prime})^0 = \mathrm{im}(\delbar_{D^\prime}^\ast)$
and $\mathrm{im}(i_{D})^{1} = \mathrm{ker}(\delbar_{D}^{\ast})$,
as well as on $\mathrm{ker}(p_{D^\prime})^1 = \mathrm{im}(\delbar)$ and 
$\mathrm{im}(i_{D})^{0} = \mathrm{ker}(\delbar)$.
\sk

To prove the third compatibility condition, note that as a consequence of 
\eqref{eqn:delbarastcompatible} the Laplacian satisfies $\pair{\Delta_{D^\prime}\zeta}{\alpha}_{\omega}^{} 
= -\pair{\zeta}{\Delta_D \alpha}_{\omega}^{}$. We then compute,
for all  $\zeta\in \Omega^{0,1}(C,E^\ast\otimes L_{D^\prime})$ and $\alpha\in \Omega^{0,1}(C,E\otimes L_{D})$,
\begin{flalign}
\nn \pair{h_{D^\prime}(\zeta)}{\alpha}_\omega^{}\,&=\,
-\pair{\delbar^\ast_{D^\prime}G_{D^\prime}(\zeta)}{\alpha}_\omega^{}
\,=\,-\pair{\delbar^\ast_{D^\prime}G_{D^\prime}(\zeta)}{\Delta_D G_D(\alpha)}_\omega^{}\\
\nn \,&=\,\pair{\Delta_{D^\prime}\delbar^\ast_{D^\prime}G_{D^\prime}(\zeta)}{G_D(\alpha)}_\omega^{}\,=\,
\pair{\delbar^\ast_{D^\prime}\zeta}{G_D(\alpha)}_\omega^{}\\
\,&=\,-(-1)^{\vert \zeta\vert} \, \pair{\zeta}{\delbar^\ast_{D}G_D(\alpha)}_\omega^{}
\,=\, (-1)^{\vert \zeta\vert} \, \pair{\zeta}{h_D(\alpha)}_\omega^{}\quad.
\end{flalign}
In the second step we have used that $\delbar^\ast_{D^\prime}G_{D^\prime}(\zeta)\in \mathrm{ker}(p_{D^\prime})^0$
pairs trivially with the second term of $\alpha = \Delta_D G_D(\alpha) + i_D p_D(\alpha)$ as a consequence
of the second compatibility condition above. In the fourth step we have used
$\Delta_{D^\prime}\delbar^\ast_{D^\prime} = \delbar^\ast_{D^\prime} \Delta_{D^\prime}$ and the fact
that the second term of $\Delta_{D^\prime}G_{D^\prime}(\zeta) = \zeta - i_{D^\prime}p_{D^\prime}(\zeta)$
lies in the kernel of  $\delbar^\ast_{D^\prime}$. This concludes our construction of a compatible
pair of continuous strong deformation retracts.

\subsection{\label{app:vanishing}Relative vanishing results for holomorphic vector bundles}
The following results are needed for carrying out our cohomological 
analysis in the higher genus case in Section \ref{sec:highergenus}.
We believe that such results are standard in complex geometry,
but we were not able to locate them in the literature. Hence,
we will provide elementary proofs.
\begin{propo}\label{prop:vanishing}
Let $C$ be a compact Riemann surface of genus $g\geq 0$
and $E\to C$ a holomorphic vector bundle (possibly of higher rank).
Let further $D:C\to \bbZ$ be a divisor and denote by 
$L_D\to C$ its associated holomorphic line bundle from Construction \ref{constr:linebundles}.
\begin{itemize}
\item[(a)] Suppose that $\mathsf{H}^0\,\Omega^{0,\bullet}(C,E)\cong 0$ is trivial
and $D\leq 0$ is non-positive. Then
\begin{flalign}
\mathsf{H}^0\,\Omega^{0,\bullet}\big(C,E\otimes L_D\big)\,\cong\,0
\end{flalign}
is trivial too.

\item[(b)] Suppose that $\mathsf{H}^1\,\Omega^{0,\bullet}(C,E)\cong 0$ is trivial
and $0\leq D$ is non-negative. Then
\begin{flalign}
\mathsf{H}^1\,\Omega^{0,\bullet}\big(C,E\otimes L_D\big)\,\cong\,0
\end{flalign}
is trivial too.
\end{itemize}
\end{propo}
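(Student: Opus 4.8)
The plan is to prove part (a) directly and then to deduce part (b) from it by Serre duality, exploiting the fact that the two statements are exchanged under dualization. The only genuine input beyond linear algebra is the identity theorem for holomorphic sections on a connected Riemann surface, together with the Serre duality already invoked in Subsection \ref{subsec:Riemannsurfaces}.

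For part (a), I would start from the holomorphic line bundle morphism $L_D \to L_0 \cong C\times\bbC$ provided by Construction \ref{constr:linebundles}, which exists precisely because $D\leq 0$. Away from the finite support $\supp(D)$ this morphism is represented by the nowhere vanishing holomorphic functions $k_i = 1/\psi_i$, so it restricts to an isomorphism over $C\setminus\supp(D)$. Tensoring with the identity on $E$ yields a holomorphic bundle morphism $E\otimes L_D\to E$ which is again an isomorphism over $C\setminus\supp(D)$, and passing to global holomorphic sections gives a linear map
\begin{flalign}
\mathsf{H}^0\,\Omega^{0,\bullet}\big(C, E\otimes L_D\big)~\longrightarrow~\mathsf{H}^0\,\Omega^{0,\bullet}(C,E)\quad.
\end{flalign}
The crucial point is that this map is injective: any holomorphic section in its kernel must vanish on the dense open subset $C\setminus\supp(D)$, hence vanishes identically by the identity theorem on the connected surface $C$. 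Since the target is trivial by hypothesis, so is the source, which proves (a).

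For part (b), I would invoke Serre duality for holomorphic vector bundles on the compact Riemann surface $C$, which identifies
\begin{flalign}
\mathsf{H}^1\,\Omega^{0,\bullet}\big(C, E\otimes L_D\big)~\cong~\Big(\mathsf{H}^0\,\Omega^{0,\bullet}\big(C, E^\ast\otimes K_C\otimes L_{-D}\big)\Big)^\ast\quad,
\end{flalign}
where $K_C\to C$ is the canonical bundle and we used $(E\otimes L_D)^\ast\cong E^\ast\otimes L_{-D}$. The same duality applied to $E$ alone turns the hypothesis $\mathsf{H}^1\,\Omega^{0,\bullet}(C,E)\cong 0$ into $\mathsf{H}^0\,\Omega^{0,\bullet}(C, E^\ast\otimes K_C)\cong 0$. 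Since $D\geq 0$ forces $-D\leq 0$, I would then apply part (a) to the holomorphic vector bundle $E^\ast\otimes K_C$ and the non-positive divisor $-D$, concluding that $\mathsf{H}^0\,\Omega^{0,\bullet}(C, E^\ast\otimes K_C\otimes L_{-D})\cong 0$. By the displayed isomorphism this yields $\mathsf{H}^1\,\Omega^{0,\bullet}(C, E\otimes L_D)\cong 0$, as required.

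The routine parts, namely the \v{C}ech description of $L_D\to L_0$ and the bookkeeping of dual twists under Serre duality, are straightforward. I expect the only real subtlety to be the injectivity step in part (a): one must argue carefully that the \emph{generic} isomorphism $E\otimes L_D\to E$, which is an isomorphism only away from the finite set $\supp(D)$, remains injective on \emph{global} holomorphic sections, which is exactly where connectedness of $C$ enters. A Serre-duality-free alternative for (b) would replace this by the short exact sequence $0\to\O(E)\to\O(E\otimes L_D)\to\mathcal{Q}\to 0$ with $\mathcal{Q}$ a skyscraper sheaf on $\supp(D)$, whose long exact cohomology sequence exhibits $\mathsf{H}^1$ of $\O(E\otimes L_D)$ as a quotient of $\mathsf{H}^1$ of $\O(E)$, since $\mathsf{H}^1$ of a skyscraper sheaf vanishes; I would keep the Serre duality route in the main text for its symmetry with part (a).
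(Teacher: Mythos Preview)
Your proposal is correct and follows essentially the same route as the paper: for (a) you use the holomorphic bundle morphism $E\otimes L_D\to E$ induced by $L_D\to L_0$ (the paper writes this explicitly in \v{C}ech data as $s_i\mapsto \tilde{s}_i:=s_i/\psi_i$) and argue injectivity on global sections, while (b) is reduced to (a) via Serre duality in exactly the same way. The only cosmetic difference is that you phrase injectivity via the identity theorem on the dense open $C\setminus\supp(D)$, whereas the paper concludes $s_i=0$ directly from $s_i/\psi_i=0$; these are the same observation.
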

\begin{proof}
Let us start with observing that (b) follows from (a) and Serre duality: The hypothesis
$\mathsf{H}^1\,\Omega^{0,\bullet}(C,E)\cong 0$ implies via Serre duality that
$\mathsf{H}^0\,\Omega^{0,\bullet}(C,E^\ast\otimes K_C)\cong 0$, where $K_C\to C$ 
denotes the canonical holomorphic line bundle. Item (a) then implies that
$\mathsf{H}^0\,\Omega^{0,\bullet}(C,E^\ast\otimes K_C\otimes L_{-D})\cong 0$
since $-D\leq 0$ is non-positive. Using $L_{-D}^\ast\cong L_D$ and
Serre duality once more then gives $\mathsf{H}^1\,\Omega^{0,\bullet}(C,E\otimes L_{D})\cong 0$.
\sk

To prove item (a), we show that $0: C \to E\otimes L_D$ is the only 
holomorphic section of $E\otimes L_D \to C$. 
Using the \v{C}ech description for $L_D\to C$ from Construction
\ref{constr:linebundles}, a holomorphic section
$s : C\to E\otimes L_D$ is specified by a family
$\big(s_i\in \O(E\vert_{U_i})\big)_{i\in\mathcal{I}}$
of local holomorphic sections which satisfy $s_i\vert_{U_{ij}} = g_{ij}\,s_{j}\vert_{U_{ij}}$
on all overlaps $U_{ij}$ for the transition functions $g_{ij} = \psi_i\vert_{U_{ij}}/\psi_{j}\vert_{U_{ij}}$
of the bundle $L_D\to C$. Since $D\leq 0$ is non-positive, the meromorphic functions
$\psi_i\in\M(U_i)$ representing the divisor $D\vert_{U_i} =(\psi_i)$
only have poles but no zeros, hence $1/\psi_i\in \O(U_i)$ is holomorphic for all $i\in\mathcal{I}$.
From this it follows that $\big(\tilde{s}_i:=s_i/\psi_i\in \O(E\vert_{U_i})\big)_{i\in\mathcal{I}}$
defines a holomorphic section $\tilde{s} : C\to E$, which as a consequence of
our hypothesis $\mathsf{H}^0\,\Omega^{0,\bullet}(C,E)\cong 0$ must vanish, i.e.\ $\tilde{s}=0$.
It then follows that $s_i/\psi_i=\tilde{s}_i=0$, and hence $s_i = 0$,
vanish for all $i\in\mathcal{I}$, implying that $s=0$.
\end{proof}


%
%



\begin{thebibliography}{10}

\bibitem[BCSV26]{BCSV}
M.~Benini, R.~A.~Cullinan, A.~Schenkel and B.~Vicedo,
``On the structure of higher-dimensional integrable field theories,''
{\it in preparation}.

\bibitem[BGMS25]{BGMS}
M.~Benini, A.~Grant-Stuart, G.~Musante and A.~Schenkel,
``Obstructions to the existence of M\o{}ller maps,''
Adv.\ Theor.\ Math.\ Phys.\ {\bf 29}, no.\ 7, 1945--1989 (2025)
[arXiv:2311.00070 [math-ph]].


\bibitem[BSV22]{BSV}
M.~Benini, A.~Schenkel and B.~Vicedo,
``Homotopical analysis of 4d Chern-Simons theory and integrable field theories,''
Commun.\ Math.\ Phys.\ \textbf{389}, 1417--1443 (2022)
[arXiv:2008.01829 [hep-th]].


\bibitem[BL15]{LazarevCS}
C.~Braun and A.~Lazarev,
``Unimodular homotopy algebras and Chern-Simons theory,''
J.\ Pure Appl.\ Algebra \textbf{219}, 5158--5194 (2015)
[arXiv:1309.3219 [math.QA]].


\bibitem[CL24]{CL}
H.~Chen and J.~Liniado,
``Higher gauge theory and integrability,''
Phys.\ Rev.\ D \textbf{110}, no.\ 8, 086017 (2024)
[arXiv:2405.18625 [hep-th]].


\bibitem[CL09]{LazarevHodge}
J.~Chuang and A.~Lazarev,
``Abstract Hodge decomposition and minimal models for cyclic algebras,''
Lett.\ Math.\ Phys.\ \textbf{89}, 33--49 (2009)
[arXiv:0810.2393 [math.QA]].


\bibitem[CG17]{CG1}
K.~Costello and O.~Gwilliam,
{\it Factorization algebras in quantum field theory: Volume 1},
New Mathematical Monographs \textbf{31}, 
Cambridge University Press, Cambridge (2017).


\bibitem[CG21]{CG2}
K.~Costello and O.~Gwilliam,
{\it Factorization algebras in quantum field theory: Volume 2},
New Mathematical Monographs \textbf{41}, 
Cambridge University Press, Cambridge (2021).


\bibitem[CY19]{CY3}
K.~Costello and M.~Yamazaki,
``Gauge theory and integrability, III,''
arXiv:1908.02289 [hep-th].


\bibitem[Cra04]{HPT}
M.~Crainic,
``On the perturbation lemma, and deformations,'' 
arXiv:math.AT/0403266.


\bibitem[DLMV20]{DLMV}
F.~Delduc, S.~Lacroix, M.~Magro and B.~Vicedo,
``A unifying 2D action for integrable $\sigma$-models from 4D Chern-Simons theory,''
Lett.\ Math.\ Phys.\ \textbf{110}, no.\ 7, 1645--1687 (2020)
[arXiv:1909.13824 [hep-th]].


\bibitem[Der21]{Derryberry}
R.~Derryberry,
``Lax formulation for harmonic maps to a moduli of bundles,''
arXiv:2106.09781 [math.AG].


\bibitem[For91]{Forster}
O.~Forster,
{\it Lectures on Riemann surfaces},
Grad.\ Texts in Math.\ \textbf{81},
Springer Verlag, New York (1991).


\bibitem[GRW24]{GRW}
O.~Gwilliam, E.~Rabinovich and B.~R.~Williams,
``Quantization of topological-holomorphic field theories: local aspects,''
Commun.\ Anal.\ Geom.\ \textbf{32}, 1157--1231 (2024)
[arXiv:2107.06734 [math-ph]].


\bibitem[Huy05]{Huybrechts}
D.~Huybrechts,
{\it Complex geometry: An introduction},
Universitext, Springer-Verlag, Berlin (2005).



\bibitem[JRSW19]{JRSW}
B.~Jur\v{c}o, L.~Raspollini, C.~S\"amann and M.~Wolf,
``$L_\infty$-algebras of classical field theories and the Batalin-Vilkovisky formalism,''
Fortsch.\ Phys.\ \textbf{67}, 1900025 (2019)
[arXiv:1809.09899 [hep-th]].


\bibitem[KS24]{KraftSchnitzer}
A.~Kraft and J.~Schnitzer,
``An introduction to $L_\infty$-algebras and their homotopy theory for the working mathematician,''
Rev.\ Math.\ Phys.\ \textbf{36}, 2330006 (2024)
[arXiv:2207.01861 [math.QA]].


\bibitem[Lac22]{Lacroix}
S.~Lacroix,
``Four-dimensional Chern-Simons theory and integrable field theories,''
J.\ Phys.\ A \textbf{55}, 083001 (2022)
[arXiv:2109.14278 [hep-th]].


\bibitem[LV12]{LodayVallette}
J.-L.\ Loday and B.\ Vallette, 
{\it Algebraic operads}, 
Grundlehren der Mathematischen Wissenschaften \textbf{346}, 
Springer Verlag, Heidelberg (2012).


\bibitem[LurX]{Lurie}
J.~Lurie,
{\it Derived algebraic geometry X: Formal moduli problems}.
\url{https://www.math.ias.edu/~lurie/papers/DAG-X.pdf}.


\bibitem[Pri10]{Pridham}
J.~P.~Pridham,
``Unifying derived deformation theories,''
Adv.\ Math.\ \textbf{224}, 772--826 (2010)
[arXiv:0705.0344 [math.AG]].


\bibitem[SV24]{SV}
A.~Schenkel and B.~Vicedo,
``5d 2-Chern-Simons theory and 3d integrable field theories,''
Commun.\ Math.\ Phys.\ \textbf{405}, no.\ 12, 293 (2024)
[arXiv:2405.08083 [hep-th]].


\bibitem[War83]{Warner}
F.~W.~Warner,
{\it Foundations of differentiable manifolds and Lie groups},
Graduate Texts in Mathematics {\bf 94}, 
Springer, Berlin Heidelberg (1983).


\bibitem[Wel07]{Wells}
R.~O.~Wells,
{\it Differential analysis on complex manifolds},
Graduate Texts in Mathematics {\bf 65}, 
Springer, New York (2007).

\end{thebibliography}
\end{document}